\documentclass{article}
\usepackage{amsmath}
\usepackage{authblk}
\usepackage{amssymb, amsthm, amsfonts, tikz-cd, mathrsfs,mathtools,stmaryrd,enumitem, algorithmic,float,comment,tikz}
\usetikzlibrary{backgrounds}
\usetikzlibrary{decorations.pathreplacing}
\usepackage[colorlinks,citecolor=blue,linkcolor=blue,urlcolor=red]{hyperref}
\usepackage{fullpage}
\usepackage{xcolor}
\usepackage[ruled,vlined,linesnumbered]{algorithm2e}
\usepackage{thm-restate}
\usepackage{multirow}
\usepackage{tabularx}
\usepackage{appendix}
\usepackage{placeins}

\usepackage[capitalize]{cleveref}
\crefname{theorem}{Theorem}{Theorems}
\crefname{lemma}{Lemma}{Lemmas}
\crefname{claim}{Claim}{Claims}

\newtheoremstyle{dotless}{}{}{\itshape}{}{\bfseries}{}{ }{}
\theoremstyle{dotless}
\newtheorem{theorem}{Theorem}[section]
\newtheorem{lemma}[theorem]{Lemma}

\newtheorem{claim}[theorem]{Claim}

\newtheoremstyle{dotlessdef}{}{}{\normalfont}{}{\bfseries}{}{ }{}
\theoremstyle{dotlessdef}
\newtheorem{definition}[theorem]{Definition}

 \newcommand{\coord}{\normalfont\texttt{coord}}
 \newcommand{\proj}{\normalfont\texttt{proj}}
  \newcommand{\conv}{\normalfont\texttt{Conv}}

\newcommand{\Oish}{\widetilde{O}}
\newcommand{\wt}{\textup{wt}}

\newcommand{\eps}{\varepsilon}
\newcommand{\dist}{\text{dist}}

\usepackage{todonotes}

\newcommand{\ghnoteinline}[1]{\todo[inline, size=\normalsize, color=blue!20]{Gary's Note: #1}}


\def\ShowAuthNotes{1}
\ifnum\ShowAuthNotes=1
\newcommand{\authnote}[2]{\textcolor{blue}{[{\footnotesize {\bf #1:} { {#2}}}]}}
\else
\newcommand{\authnote}[2]{}
\fi

\title{Additive Spanner Lower Bounds with Optimal Inner Graph Structure}

\author[1]{Greg Bodwin}
\author[1]{Gary Hoppenworth}
\author[2]{Virginia Vassilevska Williams}
\author[1]{Nicole Wein}
\author[2]{Zixuan Xu}
\affil[1]{University of Michigan EECS. \texttt{\{bodwin,garytho,nswein\}@umich.edu}\thanks{Bodwin and Hoppenworth were supported by NSF:AF 2153680.}}
\affil[2]{MIT, EECS. \texttt{\{virgi,zixuanxu\}@mit.edu}\thanks{V. Williams and Xu (partially) were supported by NSF Grant CCF-2330048, BSF Grant 2020356 and a Simons Investigator Award.}}

\date{ }

\begin{document}

\maketitle







\begin{abstract}
We construct $n$-node graphs on which any $O(n)$-size spanner has additive error at least $+\Omega(n^{3/17})$, improving on the previous best lower bound of $\Omega(n^{1/7})$ [Bodwin-Hoppenworth FOCS '22].
Our construction completes the first two steps of a particular three-step research program, introduced in prior work and overviewed here, aimed at producing tight bounds for the problem by aligning aspects of the upper and lower bound constructions.
More specifically, we develop techniques that enable the use of \emph{inner graphs} in the lower bound framework whose technical properties are provably tight with the corresponding assumptions made in the upper bounds. As an additional application of our techniques, we improve the corresponding lower bound for $O(n)$-size additive emulators to $+\Omega(n^{1/14})$.
\end{abstract}

\section{Introduction}\label{sec:intro}

Suppose that we want to compute shortest paths or distances in an enormous graph $G$.
When $G$ is too big to store in memory, a popular strategy is to instead use a \emph{spanner} of $G$, which is a much sparser subgraph $H$ with approximately the same shortest path metric as $G$.
This can substantially improve storage or runtime costs, in exchange for a small error in the distance information.
Perhaps the most well-applied case is when the spanner is asymptotically as sparse as possible; that is, $|E(H)| = O(n)$ for an $n$-node input graph $G$ (note that $\Omega(n)$ edges are needed just to preserve connectivity).

There are several ways to measure the quality of approximation of a spanner.
The two most popular are as follows:
\begin{definition} [Multiplicative and Additive Spanners]
Given a graph $G$, a subgraph\footnote{Throughout the paper, for brevity, we write ``subgraph'' to specifically mean a subgraph over the same vertex set as the original graph.} $H \subseteq G$ is a \emph{multiplicative $\cdot k$ spanner} if for all nodes $s, t$ we have
$\dist_H(s, t) \le \dist_G(s, t) \cdot k.$
It is an \emph{additive $+k$ spanner} if we have
$\dist_H(s, t) \le \dist_G(s, t) + k.$
\end{definition}

The parameter $k$ is called the (additive or multiplicative) \emph{stretch} of the spanner.
A famous paper of Alth{\" o}fer, Das, Dobkin, Joseph, and Soares \cite{althofer1993sparse} settled the optimal \emph{multiplicative} stretch for $O(n)$-size spanners:\footnote{Although we generally treat input graphs $G$ as undirected and unweighted in this paper, this particular theorem also extends to the setting where $G$ is weighted.}
\begin{theorem} [\cite{althofer1993sparse}]
Every $n$-node graph has a spanner $H$ of size $|E(H)| = O(n)$ and multiplicative stretch $O(\log n)$.
This stretch cannot generally be improved to $o(\log n)$.
\end{theorem}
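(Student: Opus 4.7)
My plan splits the theorem into its two claims: the existence of an $O(n)$-edge, $O(\log n)$-multiplicative spanner, and a matching lower bound on the stretch.

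For the upper bound, I would analyze the \emph{greedy spanner}. Fix a stretch parameter $k$, process the edges of $G$ in an arbitrary order (in the weighted case, by nondecreasing weight), and include an edge $(u,v)$ in $H$ precisely when the current partial spanner satisfies $\dist_H(u,v) > k$. That $H$ is a multiplicative $k$-spanner follows by a short induction on shortest-path length in $G$, since every omitted edge has its endpoints within distance $k$ in $H$. The key structural observation is that $H$ has \emph{girth} strictly greater than $k+1$: if a cycle of length $\le k+1$ existed in $H$, then its chronologically last added edge would have been placed despite its endpoints already being at distance $\le k$, contradicting the rule. I would then invoke the classical Moore/Alon--Hoory--Linial bound: an $n$-vertex graph of girth greater than $2t$ has $O\lpr{n^{1+1/t}}$ edges. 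Choosing $k = 2\lceil \log n \rceil$ (so $t = \log n$) yields $|E(H)| = O\lpr{n^{1+1/\log n}} = O(n)$, completing the upper bound.

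For the lower bound, I would exhibit graphs $G$ for which no $O(n)$-edge subgraph can achieve stretch $o(\log n)$. The starting point is the existence of $n$-vertex graphs with \emph{superlinearly many edges at arbitrarily high girth}: for any $t$, there exist $n$-vertex graphs of girth $> 2t$ and $\Omega\lpr{n^{1+1/t}}$ edges. (Algebraic constructions handle small $t$; the probabilistic method with vertex deletion suffices for all $t$.) Setting $t = \Theta(\log n)$ yields a graph $G$ with $\omega(n)$ edges and girth $\Omega(\log n)$. Any proposed spanner $H \subseteq G$ with $O(n)$ edges must omit some edge $(u,v) \in E(G)$; since $G$ has girth $\Omega(\log n)$, removing $(u,v)$ from $G$ leaves $u$ and $v$ at distance $\Omega(\log n)$ in $G - (u,v)$, and a fortiori in $H$. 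Meanwhile $\dist_G(u,v) = 1$, so the multiplicative stretch of $H$ at $(u,v)$ is $\Omega(\log n)$.

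The main \emph{external} inputs are the two Moore-type bounds, one in each direction: the upper bound needs that high girth forces sparsity, and the lower bound needs that high girth is achievable at near-linear density. The two inductive arguments (that the greedy $H$ is a $k$-spanner, and that a single missing edge of a high-girth graph incurs $\Omega(\mathrm{girth})$ stretch) are short. The only delicate point in a fully written-out proof is aligning the asymptotics so that $O\lpr{n^{1+1/t}}$ collapses to $O(n)$ precisely when $t = \Theta(\log n)$, which is the same regime in which the dense high-girth construction lives; this alignment is what makes the two bounds meet at $\Theta(\log n)$.
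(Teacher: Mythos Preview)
The paper does not prove this theorem; it is quoted from \cite{althofer1993sparse} as background, so there is no in-paper argument to compare against. Your upper bound via the greedy spanner is the standard proof and is correct.

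Your lower bound has a gap. The assertion that for every $t$ there exist $n$-vertex graphs of girth $>2t$ with $\Omega\lpr{n^{1+1/t}}$ edges is precisely the Erd\H{o}s girth conjecture, which is open for general $t$; the probabilistic deletion method only yields roughly $\Omega\lpr{n^{1+1/(2t-1)}}$ edges. More to the point, even if the conjecture held, plugging in $t = \Theta(\log n)$ gives $n^{1+1/\Theta(\log n)} = \Theta(n)$, not $\omega(n)$, so your ``$\omega(n)$ edges'' conclusion does not follow from the stated input.

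The fix is to avoid the extremal-density route entirely. For any fixed integer $d \ge 3$ there exist $d$-regular $n$-vertex graphs of girth $\Omega(\log_{d-1} n) = \Omega(\log n)$ (random $d$-regular graphs, or explicit Ramanujan/LPS constructions). Such a graph has $dn/2$ edges, so any subgraph with at most $(d/2 - 1)n$ edges omits some edge $(u,v)$ and thus incurs multiplicative stretch at least $\text{girth}-1 = \Omega(\log n)$. Since $d$ can be taken arbitrarily large, this rules out $O(n)$-size spanners of stretch $o(\log n)$ for every hidden constant. This is the argument you want; the rest of your outline (the ``one missing edge forces girth-sized detour'' step) is fine as written.
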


\begin{table}[t]
\begin{center}
    \begin{tabular}{|l||lc|lc|}
        \hline
        & \multicolumn{2}{c|}{Upper Bound} & \multicolumn{2}{c|}{Lower Bound} \\
        \hline
        \multirow{6}{*}{$O(n)$-size Spanners} & & & $\Omega(\log n)$  & \cite{Woodruff06} \\
                                              & $\Oish(n^{9/16})$   & \cite{Pettie09} & $\Omega(n^{1/22})$ & \cite{AB17jacm}\\
                                              & $\Oish(n^{1/2})$    & \cite{BV15}     & $\Omega(n^{1/11})$ & \cite{HuangP18,Lu19}\\ 
                                              & $O(n^{3/7 + \eps})$ & \cite{BV21}     & $\Omega(n^{2/21})$ & \cite{LuVWX21}\\
                                               & {\bf \color{red} $O(n^{\frac{15-\sqrt{54}}{19} < 0.403})$} & \cite{TZ23} & $\Omega(n^{1/7})$ & \cite{BodwinH22}\\
                                               & & & {\bf \color{red} $\Omega(n^{3/17})$} & this paper\\
        \hline
    \end{tabular}
\caption{\label{tbl:priorwork} The progression of upper and lower bounds on the additive error associated to $n$-node spanners on $O(n)$ edges; current state of the art bounds are highlighted in red.  See also \cite{BCE05, Chechik13, BKMP10, ACIM99, ABP18} for work on additive spanners of superlinear size.}
\end{center}
\end{table}

The goal of this paper is to make progress on the corresponding question for \emph{additive} error.
This question has been intensively studied; see Table \ref{tbl:priorwork} for the progression of results.
Our contributions are on the lower bounds side:

\begin{theorem} [Main Result] \label{thm:spanner} There exists an infinite family of $n$-vertex undirected graphs for which any additive spanner on $O(n)$ edges has additive stretch $\Omega(n^{3/17})$.
\end{theorem}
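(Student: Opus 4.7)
The plan is to follow the well-established framework for additive spanner lower bounds refined through \cite{AB17jacm, HuangP18, LuVWX21, BodwinH22}: construct a randomized layered ``outer graph'' carrying a large set of \emph{critical pairs} whose unique shortest paths are near-edge-disjoint, then substitute each outer-graph edge with an \emph{inner graph} gadget whose distortion per missing edge can be quantitatively controlled. The overall counting argument is that if a subgraph $H \subseteq G$ has only $O(n)$ edges, then averaged over the $P$ critical pairs, each critical shortest path loses $\Omega(PL / n)$ edges in the outer graph, and this must be converted into additive stretch via the properties of the inner gadget. Balancing the blow-up factor from substitution against the density of critical pairs and the edge-disjointness losses gives the final exponent; the novelty here must lie in using an inner graph whose parameters are literally tight against the quantities assumed in the best upper bound analyses of \cite{BV21, TZ23}, as previewed in the introduction.

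The first step is to build the outer graph as a layered random construction: fix a number $L$ of layers, a width $W$ per layer, and install random bijections (or bipartite matchings with additional shift structure) between consecutive layers so that for every pair $(s,t)$ in designated endpoint classes there is a unique length-$L$ shortest path, and pairwise these paths share only $O(1)$ edges with high probability. The critical-pair density and girth properties here are essentially the ones used in the $n^{1/7}$ construction of \cite{BodwinH22}, and I would reuse their probabilistic lemmas: pick $P \approx n^{1+\delta}$ critical pairs and argue using Lovász Local Lemma or second moment that the shortest-path structure holds with positive probability.

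The second (and core) step is to design the inner graph so that its technical properties are ``provably tight'' with the upper-bound assumptions, as the introduction foreshadows. Concretely, for each outer-edge $e$ I would substitute a small gadget $I_e$ of $\Theta(\mu)$ vertices such that (i) between its two designated portals there is a short path of length $\ell$, (ii) removing any $k$ edges from $I_e$ forces the portal-to-portal distance to increase by $\Omega(f(k))$ for some function $f$ optimized against the upper-bound trade-off, and (iii) the gadget composes cleanly in the sense that the unique shortest path between portals in $I_e$ matches the outer shortest path. Obtaining all three properties simultaneously with the correct quantitative profile is the heart of the contribution; the natural candidate is a structured subset of a high-dimensional grid or an obstacle-product construction, cut down so that its ``deletion-cost curve'' matches the tight exponent.

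The final step is to assemble the counting inequality and optimize. If the spanner has $O(n)$ edges and the composed graph has $N = n \cdot \mu \cdot W$ vertices and roughly $n \cdot \mu$ edges, then on average each critical path in $G$ loses $\Omega(L \mu \cdot P / (n\mu)) = \Omega(LP/n)$ inner-gadget edges, which via property (ii) forces additive stretch $\Omega(f(LP/(nP')))$ for each of most pairs, where $P'$ is the effective edge-disjointness count; equating this with the number of edges saved and solving for the exponent yields $n^{3/17}$. The primary obstacle I expect is proving the inner-graph deletion-cost lower bound with the exact exponent needed: prior gadget analyses relied on ad hoc arguments that stopped at the $1/7$ exponent, and pushing further requires an inner graph whose ``convex'' combinatorial geometry is nontrivial to analyze, especially since its extremal behavior must simultaneously match what is squeezed from the outer graph's disjointness and girth bounds. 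A secondary technical hurdle is verifying that the tight inner graph does not introduce shortcut paths between non-portal vertices across different gadgets, which would destroy the shortest-path uniqueness assumed in the outer analysis; this typically requires a careful parity/distance-coloring argument to rule out cross-gadget interference.
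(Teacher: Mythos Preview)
Your plan diverges from the paper's approach in ways that would not reach the $n^{3/17}$ exponent.

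First, you propose a \emph{randomized layered} outer graph analyzed via LLL or second-moment arguments. The paper's construction is entirely deterministic and geometric: the outer graph is an \emph{unlayered} integer-lattice construction in $\mathbb{R}^3$ with edges drawn from the convex set $W = \{(x,y,x^2+y^2) : x,y \in [r/2,r]\}$, combined with an alternation-product structure. The move from layered to unlayered outer graphs was already the key step in \cite{BodwinH22} that got from $n^{1/10.5}$ to $n^{1/7}$; the paper explicitly notes (Section~\ref{subsec:overview-outer}) that layered versions give strictly worse bounds. A layered random construction is not known to achieve even $n^{1/7}$, let alone $n^{3/17}$.

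Second, and more fundamentally, you describe the inner gadget abstractly via a ``deletion-cost curve'' $f(k)$ with two portals. This misses the structural reason the paper can use a stronger inner graph than prior work. The inner graph here is the \emph{subset} distance-preserver lower bound of Coppersmith--Elkin, whose critical pairs have the product form $S \times S$. To plug such a graph into the obstacle product, the critical paths through each outer vertex must themselves carry a matching product structure --- this is precisely what the alternation product provides (each outer critical path is indexed by a pair $(\vec{w}_1,\vec{w}_2) \in W_1 \times W_2$), and it is why the plain Coppersmith--Elkin outer graph cannot be used directly. Your two-portal gadget model cannot accommodate this: the inner graph has $\Theta(|S|)$ sources and $\Theta(|S|)$ sinks, and different outer critical paths through the same vertex route through \emph{different} source--sink pairs in the inner copy. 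Without this idea, you are limited to inner graphs of the biclique or pairwise-DP type used in earlier work, which cap out below $n^{3/17}$.

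Finally, the analysis is not a straightforward edge-counting average. Because the outer graph is unlayered, one must rule out ``backward'' and ``zigzag'' detours via a projection argument onto a direction vector $\vec{d}^* = (2x,2y,-1)$, with a delicate amortized charging scheme over a move decomposition (Claims~\ref{claim:backward}--\ref{claim:obstacle_2}). This requires the additional ``stripe'' structure imposed on $W$ in Section~\ref{subsec:modifying_W}, which your outline does not anticipate.
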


Our techniques also lead to progress on related questions for $O(n)$-size emulators, which we discuss further in Section \ref{sec:addresults}.
Before we explain this, we contextualize Theorem \ref{thm:spanner} by explaining in more depth the sense in which it moves the upper and lower bounds closer together.

\subsection{Our Contribution and Next Steps for the Area}
\label{subsec:program}

There are well-established frameworks in place for proving upper and lower bounds for $O(n)$-size spanners, and the current sentiment among experts is that these two frameworks \emph{could} eventually produce near-matching (likely within $n^{\eps}$ factors) upper and lower bounds.
Both frameworks can be broken down into three corresponding steps, and over the last few years, a research program has emerged in which the long-term goal is to find optimal bounds for the problem by making each of these three steps align.\footnote{This program was made somewhat explicit in \cite{BodwinH22} (c.f.\ Section 2.4), but was implicit in work before that.}
That is, we can investigate what ``should'' happen in each step if a hypothetical optimal version of the upper bound framework were run on the graph from a hypothetical optimal version of the lower bound framework.
This thought experiment leads to a list of three concrete features that should be realized in an ideal lower bound, which we overview at a high level in Table \ref{tbl:alignment}.

However, it is easier to write down this wishlist for the lower bound than it is to actually achieve the listed features in a construction; we discuss the various technical barriers in Section \ref{sec:overview}.
The contribution of the current paper is to achieve the first two steps of alignment (i.e., the first two items in Table \ref{tbl:alignment}) simultaneously, which both have to do with optimizing properties of the so-called \emph{inner graph} in the lower bound construction.
That said, the ideal structure of an inner graph has been known since \cite{BV21}, and well before that Coppersmith and Elkin \cite{CoppersmithE06} found graph constructions achieving this ideal structure (``subset distance preserver lower bound graphs'').
Our main technical contributions are not in designing new inner graphs, but rather, in improving the \emph{outer graph} in a way that allows these previously known optimal inner graphs to be used within the framework for the first time.

This paper makes no real progress on the third and final point of alignment, which contends with optimizing certain quantitative properties of the shortest paths in the outer graph. 
Here there is still significant misalignment between the upper and lower bounds, which is responsible for essentially all of the remaining numeric gap between the current upper and lower bounds for $O(n)$-size spanners.
Improving this third point, either on the upper bounds side or the lower bounds side, is the clear next step for the area and it may first require advances in our understanding of distance preservers \cite{CoppersmithE06}; see \cite{BodwinH22} for discussion.

\begin{table}[t]
\begin{center}
\begin{tabularx}{\textwidth}{X|X|X}
Step in Upper Bounds & Step in Lower Bounds & What should ideally happen when we run the upper bound framework on a lower bound graph?\\
\hline
& & \\
Cover the input graph by clusters $C$ of radius $r$ each.
These clusters are classified as either \emph{small} or \emph{large}, depending on whether their number of nodes is smaller or larger than $r^{4/3}$. & Start with an \emph{outer graph}, and systematically replace each node with a disjoint copy of an \emph{inner graph}. & The upper bound should select the inner graphs as its clusters.  All inner graphs should have $\Theta(r^{4/3})$ nodes, since the worst case for the upper bound is when all clusters are near the large/small threshold.\\
& & \\
Small clusters $C$ have a node separator of size $\le |C|^{1/4}$.
Construct a \emph{subset distance preserver} on each small cluster, preserving all shortest paths between separator nodes, at cost $O(|C|)$ \cite{CoppersmithE06}. & The inner graphs should be selected as the union of many long unique shortest paths among nodes that form a separator for the graph, and also any two of these shortest paths may intersect on at most one node. & The inner graph should be a lower bound graph against subset distance preservers with $\Theta(|C|^{1/4})$ source nodes (with a large implicit constant), so that the approach of constructing a subset distance preserver is too expensive to be used in an attack against the lower bound.\\
& & \\
Large clusters $C$ are handled by adding some additional shortest paths in the spanner to connect far-away clusters to each other.
Using the \emph{path-buying framework} \cite{BKMP10}, we can limit the total number and length of the shortest paths we need to add. & The outer graph is selected to be the union of as many long unique shortest paths as possible, and any two of these shortest paths may intersect on at most one edge.  That is, the outer graph is a slightly modified distance preserver lower bound graph. & The shortest paths added for large clusters should coincide with the shortest paths in the original outer graph (before inner graph replacement).
The path-buying bounds on the number and length of these shortest paths should coincide with the number and length of these shortest paths in the outer graph.
\end{tabularx}
\end{center}
\caption{\label{tbl:alignment} A point-by-point comparison of the frameworks used to prove upper and lower bounds.  Our main technical contributions are to satisfy the first point of alignment by enabling the use of inner graphs with $\Theta(r^{4/3})$ nodes (where $+\Omega(r)$ is the desired lower bound on spanner error), and to satisfy the second point of alignment by enabling the use of subset distance preserver lower bounds for our inner graphs.  Neither of these properties were fully achieved in prior work.
}
\end{table}

\FloatBarrier

\subsection{Additional Results \label{sec:addresults}}

The technical improvements to the construction that enable our improved spanner lower bounds also imply improvements for two nearby objects, which we overview next.
First, an \emph{emulator} is similar to a spanner, but not required to be a subgraph:

\begin{definition} [Additive Emulators]
Given a graph $G$, a graph $H$ on the same vertex set as $G$ is an \emph{additive $+ k$ emulator} if for all nodes $s, t$ we have
$$\dist_G(s, t) \le \dist_H(s, t) \le \dist_G(s, t) + k.$$
\end{definition}
An emulator $H$ is allowed to be weighted, even when the input graph $G$ is unweighted.
Emulators generalize spanners, and hence the upper and lower bounds known for $O(n)$-size emulators are a bit lower than the corresponding bounds for spanners.
See Table \ref{tbl:priorworkem} for the progression of results on the additive error that can be obtained for $O(n)$-size emulators.

\begin{table}[h]
\begin{center}
    \begin{tabular}{|l||lc|lc|}
        \hline
        & \multicolumn{2}{c|}{Upper Bound} & \multicolumn{2}{c|}{Lower Bound} \\
        \hline
        \multirow{5}{*}{$O(n)$-size Emulators} & $O(n^{1/3 + \eps})$  & \cite{BV15}     & $\Omega(\log n)$ & \cite{Woodruff06} \\
                                               & $O(n^{3/11 + \eps})$ & \cite{BV21}     & $\Omega(n^{1/22})$ & \cite{AB17jacm}\\
                                               & $\Oish(n^{1/4})$     & \cite{Pettie09} & $\Omega(n^{1/18})$ & \cite{HuangP18}\\
                                               & $\Oish(n^{2/9 - 1/1600 < 0.222})$ & \cite{KP23} & $\Omega(n^{2/29})$ & \cite{LuVWX21}\\
                                               & {\color{red} \bf $O(n^{\frac{1}{3+\sqrt{5}}+\eps<0.191})$} & \cite{Hoppenworth24} & {\bf \color{red} $\Omega(n^{1/14})$} & this paper\\
        \hline
    \end{tabular}
\caption{\label{tbl:priorworkem} The progression of upper and lower bounds on the additive error associated to $n$-node emulators on $O(n)$ edges; current state of the art bounds are highlighted in red.  See also \cite{DHZ00}.}
\end{center}
\end{table}

A similar lower bound framework is used to achieve lower bounds for emulators, and hence our new technical machinery improves the current lower bounds for emulators as well:
\begin{theorem} \label{thm:emulator}
There exists an infinite family of $n$-vertex undirected graphs for which any additive emulator on $O(n)$ edges has additive stretch $\Omega(n^{1/14})$.
\end{theorem}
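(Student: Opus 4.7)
The plan is to adapt the spanner lower bound construction of Theorem~\ref{thm:spanner} to the emulator setting, using the same outer-graph / inner-graph framework but re-optimizing the parameters. The outer graph is again a distance preserver lower bound graph, with each node replaced by a subset distance preserver lower bound graph serving as the inner graph; the dimensions of the inner graph (and the number of critical source--target pairs) must be re-tuned to account for the fact that an emulator's edges are weighted and need not be subgraph edges.

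The first step is the standard reduction from emulators to weighted subgraphs of paths: given an emulator $H$ of $G$ with $|E(H)| = O(n)$, replace each weighted edge $(u,v) \in H$ by a canonical shortest $u$-$v$ path $\pi_{u,v}$ in $G$ of length equal to the weight of $(u,v)$. The resulting object $H' \subseteq G$ still satisfies $\dist_{H'}(s,t) \le \dist_H(s,t)$ for all $s,t$, so it is an additive spanner of $G$ with the same additive stretch, but edges of $H$ now correspond to (possibly long) paths of $G$ whose structure is controlled by the fact that the outer graph has a unique shortest path between each critical pair.

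The second step is a path-buying / charging argument: for each critical $(s,t)$-pair, argue that $\dist_H(s,t) \le \dist_G(s,t) + o(n^{1/14})$ forces the associated emulator edges either to lie densely inside a small number of inner graphs, at which point the subset distance preserver lower bound on each inner graph forces many edges to be present, or to include ``shortcut'' emulator edges whose weight is charged proportional to the number of inner graphs they cross. Summing over all critical pairs and balancing these two contributions against the $O(n)$ edge budget gives the claimed lower bound.

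The main obstacle is the extra flexibility afforded by emulators: a single weighted edge of $H$ can act as a shortcut traversing many inner-graph copies in the outer construction, whereas in the spanner setting of Theorem~\ref{thm:spanner} each edge lives in exactly one inner graph or between two adjacent inner graphs. To keep the counting tight, the charging scheme must guarantee that a shortcut edge cannot serve too many critical pairs simultaneously relative to its weight. This is precisely the loss that degrades the bound from $n^{3/17}$ for spanners to $n^{1/14}$ for emulators; obtaining the sharp exponent (rather than a weaker value like the prior $n^{2/29}$) requires pairing the optimal subset distance preserver inner graph from the spanner construction with an emulator-specific re-optimization of the sizes of the inner graphs and the lengths of the shortest paths in the outer graph.
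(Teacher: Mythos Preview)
Your proposal has a genuine conceptual error about where the improvement comes from. You claim that ``obtaining the sharp exponent \ldots\ requires pairing the optimal subset distance preserver inner graph from the spanner construction with an emulator-specific re-optimization.'' This is not how the paper proves Theorem~\ref{thm:emulator}: the emulator construction in Section~\ref{sec:emulator_lb} uses the plain biclique $K_{r,r}$ as the inner graph, exactly as in prior emulator lower bounds. The entire numerical improvement from $n^{2/29}$ to $n^{1/14}$ comes from the new \emph{outer} graph $G_O(a,r)$ of Section~\ref{sec:outer_graph} (the unlayered DP LB graph combined with the optimal alternation product), which increases $|\Pi_O|$ to $\Theta(a^2 r^4)$. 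The paper says as much directly: ``the role of the inner graph is generally less important in emulator lower bounds.''

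The reason the subset DP LB inner graph does not help here is visible in the emulator-to-spanner reduction you yourself sketch. When an emulator edge $(u,v)$ with both endpoints on a critical path $\pi$ is replaced by a shortest $u\leadsto v$ path in $G$, that replacement can contain at most as many ``private'' inner-graph edges as $\pi$ itself does (in the paper's biclique version: at most $2a/r$ clique edges, since $|\pi_O|\le 2a/r$). So the ratio (private edges per critical path)/(private edges contributed per converted emulator edge) is $\Theta(1)$ no matter how elaborate the inner graph is, and the argument always bottoms out at needing $\Theta(|\Pi|)$ emulator edges. Enlarging the inner graph only inflates $|V|$ without tightening this ratio. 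The actual proof is therefore much simpler than the ``path-buying / charging'' you describe: Lemma~\ref{lem:emulator-spanner-stretch} counts clique edges in the derived spanner, Lemma~\ref{lem:emulator-stretch} bounds how many clique edges each emulator edge can spawn, and then setting $|\Pi|=\Theta(|V|)$ with $\psi=\Theta(a/r)$ forces $a=\Theta(r^{3/2})$, $|V|=\Theta(r^7)$, and stretch $\Omega(r^{1/2})=\Omega(n^{1/14})$.
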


Our numeric improvement in the lower bound for emulators is more modest than our improvement for spanners; at a high level, this is because our main improvement is to enable stronger inner graphs in the lower bound framework, but the role of the inner graph is generally less important in emulator lower bounds.

We next provide a more fine-grained overview of our lower bound framework, and we describe our technical improvements that lead to our new results in more detail.
\section{Technical Overview}\label{sec:overview}

In this section we will give an overview of the different technical components in our lower bound graph construction. We start by reviewing the obstacle product framework in \cref{subsec:obs-product} and recalling some ideas from prior work in \cref{subsec:overview-outer}. Finally we will discuss the new components in our construction in \cref{subsec:overview-our-construction}.

\subsection{The obstacle product framework}\label{subsec:obs-product}

Similar to all previous works including \cite{AB17jacm,HuangP18,LuVWX21,BodwinH22} on proving stretch lower bounds for linear-sized additive spanners, our construction falls under the \emph{obstacle product} framework introduced in \cite{AB17jacm}. Any construction under this framework consists of an \emph{outer graph} $G_O = (V_O, E_O)$ and an \emph{inner graph} $G_I = (V_I, E_I)$ where every vertex in the outer graph is replaced by a copy of the inner graph. The desired outer graph should contain a set pairs $P_O\subseteq V_O\times V_O$ often called the \emph{critical pairs} such that the following holds:
\begin{enumerate}
    \item For each pair $(s,t)\in P_O$, the shortest path from $s$ to $t$ is \emph{unique}. These unique shortest paths connecting between pairs in $P_O$ are often called the \emph{critical paths}.
    \item The critical paths have roughly the same length $\Theta(k)$.
    \item The critical paths are pairwise edge disjoint.
\end{enumerate}
When we replace each vertex in the outer graph with a copy of the inner graph $G_I$, we make sure that the critical paths remain the unique shortest paths between their endpoints and pairwise edge-disjoint by attaching each incoming edge and outgoing edge to distinct vertices of the inner graph. Finally, we subdivide the edges originally in $G_O$ into paths of length $\Theta(k)$. Now in the resulting graph denoted as $G_{obs} = (V_{obs}, E_{obs})$ with critical pairs $P_{obs}$, each critical path between the endpoints in $P_{obs}$ uniquely corresponds to a critical path in $G_O$ and it takes the form of traveling alternatingly between subdivided edges in $G_O$ and paths in $G_I$. In particular, each critical path travels through $\Theta(k)$ subdivided paths of length $\Theta(k)$, and $\Theta(k)$ inner graph copies.

Now let us see how to show that any sparse spanner on $G_{obs}$ must suffer additive distortion $+\Omega(k)$. The goal is to argue that if lots of edges are missing in the spanner $H\subseteq G_{obs}$ compared to $G_{obs}$, then there exists some pair $(s,t)\in P_{obs}$ whose shortest path $\pi$ in $H$ falls into one of the following two cases:
\begin{enumerate}
    \item If $\pi$ traverses the same sequence of inner graph copies as the critical path in $G_{obs}$, then it must use at least one {\em extra} edge in each inner graph copy compared to the critical path in the original graph due to missing edges. Since the critical path passes through $\Theta(k)$ inner graph copies, the path $\pi$ must suffer a $+\Omega(k)$ distortion in total.

    \item If $\pi$ traverses a different sequence of inner graph copies, then it must traverse a different set of subdivided paths corresponding to the edges in $G_O$. Since the critical paths in $G_O$ are the unique shortest paths between its endpoints, $\pi$ must traverse at least one more subdivided path of length $\Theta(k)$ and thus suffer a $+\Omega(k)$ distortion.
\end{enumerate}

Furthermore, note that the reason behind doing inner graph replacement is that without the inner graphs, subdividing each edge of the outer graph would significantly sparsify the graph so that even a trivial spanner including all the edges would have linear size. Adding the inner graphs helps balance the overall density of the graph so that any linear-sized spanner needs to be nontrivial. Thus, ideally we would want the inner graphs to be dense.

\subsection{The outer graph: distance preservers and the alternation product}\label{subsec:overview-outer}

In this subsection, we review the two key components for the outer graph construction: the distance preserver lower bound graph given in \cite{CoppersmithE06} and the alternation product first used in \cite{Hesse03}.

\paragraph{Distance preserver lower bound graph.}

 Given a graph $G = (V,E)$ and a set of pairs $P\subseteq V\times V$, a distance preserver $H$ is a sparse subgraph of $G$ that preserves the distances for every pair in $P$ \emph{exactly}. Previously, Coppersmith and Elkin \cite{CoppersmithE06} obtained a lower bound instance for distance preservers by constructing a large set of vertex pairs with pairwise {\em edge-disjoint} unique shortest paths that are as long as possible; the union of the edges of these paths is the lower bound instance.
Following the intuition outlined in \cref{subsec:obs-product}, it is natural to consider using the distance preserver lower bound construction of \cite{CoppersmithE06} as the outer graph. 
From now on, we will abbreviate the term ``distance preserver lower bound graph''  to ``DP LB graph'' and the term ``Coppersmith-Elkin construction'' to ``CE construction'' for convenience.

Indeed, all prior work uses some version of the CE construction of DP LB graph as the outer graph, and so do we.
The CE construction is a geometric construction where the vertex set corresponds to a $d$-dimensional integer grid $[n]^d$ and edges are added corresponding to a $d$-dimensional convex set $B_d(r)$ defined to be the vertices of the convex hull of integer points contained in a ball of radius $r > 0$. More specifically, the vertices corresponding to the points $\vec{x}, \vec{y}$ are connected by an edge if $\vec{y} - \vec{x}\in B_d(r)$. Then the critical paths are defined to be the paths corresponding to the straight lines starting from a ``start zone'' passing through the grid, i.e. the paths that repeatedly take the edge corresponding to the same vector in $B_d(r)$. By convexity of the set $B_d(r)$, one can show that these critical paths are edge-disjoint and they are the unique shortest paths between their endpoints.

Prior to the work of \cite{BodwinH22}, works including \cite{HuangP18,LuVWX21} all considered a {\em layered} version of the DP LB graph as the outer graph. Namely the graph contains $\ell+1$ layers where each layer corresponds to a $d$-dimensional integer grid $[n]^d$ and edges are added between adjacent layers corresponding to the convex set $B_d(r)$ similarly as defined in \cite{CoppersmithE06}. Then the critical paths are defined to be the paths that start in the first layer and end in the last layer that repeatedly take the edge corresponding to the same vector in $B_d(r)$. This layering simplifies the stretch analysis for additive spanner lower bounds because it is easy to argue that all the critical paths have length exactly $\ell$ and the shortest path should not take any backward edges as it will then need to traverse more layers. However, the layered version resulted in worse bounds compared to the original unlayered version but it was unclear at the time how to analyze an unlayered outer graph. Most recently, Bodwin and Hoppenworth \cite{BodwinH22} developed a new analysis framework and successfully analyzed an obstacle product graph with a modified version of the unlayered DP LB graph as the outer graph. As a result, they improved the lower bound to $\Omega(n^{1/7})$ from $\Omega(n^{1/10.5})$ where the former remains the current best known lower bound before this work. We use the unlayered outer graph construction in \cite{BodwinH22} as an ingredient in our construction.

\paragraph{The alternation product.}

Another important idea that goes in to the outer graph construction is the alternation product first used in \cite{Hesse03}. Subsequent works including \cite{AB17jacm, HuangP18, LuVWX21} all use the alternation product in the outer graph construction. Consider two copies $G_1, G_2$ of the same $2$-dimensional layered DP LB graph with $\ell+1$ layers and convex set $B_2(r)$. Namely, each layer corresponds to the $[n]^2$ grid and the edges correspond to the $2$-dimensional convex set $B_2(r)$ of radius $r$. 
The original implementation of the alternation product graph $G_{alt}$ used in \cite{Hesse03, AB17jacm, HuangP18} of $G_1$ and $G_2$ is a graph on $2\ell+1$ layers with each layer corresponding to the $4$-dimensional grid $[n]^4$. Each vertex in $G_{alt}$ corresponds the pair $(v_1,v_2)$ where $v_1\in G_1, v_2\in G_2$ and the edges are added alternatingly between adjacent layers according to $G_1$ and $G_2$, respectively. Specifically, between layer $i$ and $i+1$ for $i$ odd, we connect the vertex $(\vec{x}, \vec{y})$ for $\vec{x}, \vec{y}\in [n]^2$ to  $(\vec{x}+\vec{w}, \vec{y})$ for $\vec{w}\in B_2(r)$; for $i$ even, we connect the vertex $(\vec{x}, \vec{y})$ to $(\vec{x}, \vec{y}+\vec{w})$ for $\vec{w}\in B_2(r)$.  In other words, $G_{alt}$ keeps track of $G_1$ using the first two coordinates and $G_2$ using the last two coordinates. Then a critical path $\pi$ in $G_{alt}$ corresponds to a pair of critical paths $\pi_1$ in $G_1$ and $\pi_2$ in $G_2$ by taking alternating steps from $\pi_1$ and $\pi_2$. So the main advantage of the alternation product for us is that it gives an extra product structure over the set of critical paths that we want in our construction.

Unlike in our construction, prior works including \cite{Hesse03,AB17jacm,HuangP18,LuVWX21} apply the alternation product in order to obtain a different relative count between the number of vertices and the number of critical pairs rather than to obtain the extra product structure.
However, these changes in parameters are in fact unfavorable to the construction for linear-sized spanner lower bounds. To see this, notice that one can equivalently think of $G_{alt}$ as $4$-dimensional CE construction graph using the smaller convex set $\{(\vec{w}_1, \vec{w}_2)\mid \vec{w}_1, \vec{w}_2\in B_2(r)\}$ instead of $B_4(r)$, which means that $G_{alt}$ has fewer critical pairs (see \cref{subsec:overview-our-construction} for a more detailed discussion). In fact, in \cite{HuangP18}, Huang and Pettie gave an $\Omega(n^{1/11})$ lower bound construction without the alternation product that improved on their own construction that uses the alternation product which gave a bound of $\Omega(n^{1/13})$ in the same paper. Later in \cite{LuVWX21}, Lu, Vassilevska Wiliams, Wein and Xu improved on the alternation product that reduces the loss in the number of critical pairs compared to the CE construction, thereby obtaining an $\Omega(n^{1/10.5})$ lower bound that improved on the previous best bound of $\Omega(n^{1/11})$. Most recently, Vassilevska Wiliams, Xu and Xu implicitly constructed an alternation product graph in their $O(m)$-shortcut lower bound construction in \cite{VWXX24} that asymptotically matches the number critical pairs in the CE construction. Unfortunately, their construction is under a different setting so we cannot directly apply their technique to our construction as a blackbox. However, by isolating a main observation implied in their work, we were able to integrate such an alternation product into our construction (see \cref{subsec:overview-our-construction}).

\subsection{Our construction: optimal unlayered alternation product and optimal inner graph structure}\label{subsec:overview-our-construction} 

Our main technical contribution is a linear-sized additive spanner lower bound construction that carefully combines the following ideas: 
\begin{enumerate}
    \item An unlayered DP LB graph as the outer graph, as in \cite{BodwinH22}.
    \item An optimal alternation product implicit in \cite{VWXX24}.
    \item An optimal subset DP LB graph as the inner graphs, as motivated in \Cref{tbl:alignment}.
\end{enumerate}
We start with comparing our construction with the previously known lower bound constructions in \cref{tab:tech}.

\begin{table}
\centering
{\def\arraystretch{1.2}
\begin{tabular}{|p{4cm}|c|c|c|}
\hline 
   Citation  & Lower bound & Outer graph & Inner graph \\ \hline
     Woodruff \cite{Woodruff06} & $\Omega(\log n)$ &  Butterfly & Biclique \\ \hline
     Abboud, Bodwin \cite{AB17jacm} & $\Omega(n^{1/22})$ & Layered DP LB + Alt Product & Biclique \\ \hline
     Huang, Pettie \cite{HuangP18} & $\Omega(n^{1/13})$ & Layered DP LB + Alt Product & Biclique \\ \hline
     Huang, Pettie \cite{HuangP18} & $\Omega(n^{1/11})$ & Layered DP LB & Layered DP LB \\ \hline
     Lu, Vassilevska W., Wein, Xu \cite{LuVWX21} & $\Omega(n^{1/10.5})$ & Layered DP LB $+$ Improved Alt Product & Biclique \\ \hline
     Bodwin, Hoppenworth \cite{BodwinH22} &  $\Omega(n^{1/7})$ & Unlayered DP LB & DP LB \\ \hline
     \textbf{This work} & $\Omega(n^{3/17})$ & Unlayered DP LB + Optimal Alt Product & Subset DP LB \\ \hline
\end{tabular}}
\caption{All known lower bound constructions}
\label{tab:tech}
\end{table}

In the following, we will discuss the main components of our construction.

\paragraph{Outer Graph: Unlayered DP LB graph with optimal alternation product.}

As mentioned in \cref{subsec:overview-outer}, we would like to be able to apply the implicit alternation product in \cite{VWXX24} to unlayered DP LB graphs. By isolating the main idea that one can use the set $\{(x,y,x^2+y^2)\mid x,y\in [r]\}$ as the convex set in the alternation product graph, we are able to apply the implicit alternation product in \cite{VWXX24} on unlayered DP LB graphs successfully after certain modifications (See \cref{subsec:W} for more details). In the following, we give a more detailed discussion of the informal intuition behind why the alternation product we use is more desirable than the alternation product used in prior works including \cite{AB17jacm,HuangP18,LuVWX21}.

Recall that in \cref{subsec:overview-outer}, one may view an alternation product graph as a CE construction with a different convex set that determines the set of edges of the graph. In addition, a vector from the convex set and a vertex in the ``start zone'' determines a critical path, so we get more critical pairs if we use a larger convex set. More precisely, we want to use a convex set that is ``large'' with respect to the total number of integer points contained in the convex hull of the set. We recall from \cite{BaranyL98} that $|B_d(r)| = \Theta(r^{d\cdot \frac{d-1}{d+1}})$. Let us compare the convex sets used in the various alternation product graphs against $B_d(r)$ in the same number of dimension that is scaled to contain the same number of points in its convex hull asymptotically in \cref{tab:alt-prod}. Then we can see from \cref{tab:alt-prod} that all prior constructions use a convex set that contains less points than the respective $B_d(r)$ while the convex set we use in this work matches the quality of $B_3(r)$, which is optimal in $3$-dimensions (see \cite{BaranyL98} for more details). That is, the construction that we use is as good as the CE construction in $3$-dimensions. 

One may wonder why we do not simply use the CE construction as our outer graph. The reason is that the alternation product has extra structure that is crucial for allowing us to use our desired inner graph. The CE construction lacks these properties. We elaborate on this below.

\begin{table}[]
\centering
{\def\arraystretch{1}
\begin{tabular}{|p{3cm}|c|c|c|}
\hline\hline
Citation  & Convex Set Used & Size of Convex Set & Size of Convex Hull  \\ \hline
4-dim \cite{CoppersmithE06}  & $B_4(r)$ & $\Theta(r^{12/5})$ & $\Theta(r^4)$  \\ \hline
\cite{AB17jacm, HuangP18}  & $\{(\vec{x},\vec{y})\mid \vec{x}, \vec{y}\in B_2(r)\}$ & $\Theta(r^{4/3})$ & $\Theta(r^4)$ \\ \hline\hline

3-dim \cite{CoppersmithE06}  & $B_3(r)$ & $\Theta(r^{3/2})$ & $\Theta(r^3)$  \\ \hline 
\cite{LuVWX21} & $\{(x_1,x_2+y_1,y_2) \mid \vec{x},\vec{y}\in B_2(r)\}$ & $\Theta(r^{4/3})$ & $\Theta(r^3)$ \\ \hline\hline

3-dim \cite{CoppersmithE06}  & $B_3(r^{4/3})$ & $\Theta(r^{2})$ & $\Theta(r^4)$ \\ \hline
\textbf{This work} (based on \cite{VWXX24}) & $\{(x,y,x^2+y^2)\mid x,y\in [r]\}$ & $\Theta(r^2)$ & $\Theta(r^4)$  \\ \hline

\end{tabular}
}
\caption{Comparison between known constructions of the alternation product and the CE construction. The top row in each pair is the corresponding CE construction in the same number of dimensions and scaled to contain the same number of points in its convex hull. The bottom row in each pair indicates the alternation product construction used in the work cited.}
\label{tab:alt-prod}
\end{table}

\begin{center}

\end{center}

\paragraph{Inner graph: Optimal subset DP LB graph.}

For our inner graphs, we use the CE construction of subset DP LB graphs in the regime where the pairs $S\times S$ has size $|S| = \Theta(n^{1/4})$ where $n$ denotes the number of vertices in the graph. In fact, this construction is tight in the sense that it has $\Omega(n)$ edges while on the other hand it is known that there exists subset distance preservers of size $O(n)$ for every set of sources $S$ of size $O(n^{1/4})$. So not only are we using an inner graph structure that aligns with the upper bound algorithm as illustrated in \cref{tbl:alignment}, we are in fact using a tight construction of the desired structure. 

The main reason why we are able to use subset DP LB graphs as inner graphs in our construction is that we have an alternation product graph as our outer graph. We discuss below why an alternation product is necessary for using subset DP LB graphs as inner graphs. In the inner graph replacement step under the obstacle product framework, we need to attach the incoming edges and outgoing edges adjacent to a vertex $v$ to vertices in the corresponding inner graph copy so that each critical path passing through $v$ in the outer graph will pass through a unique critical path in the inner graph copy as well. Since the subset DP LB graph has critical pairs of the form $S\times S$ for some subset $S$ of the vertex set, it is required that the critical paths passing through $v$ in the outer graph also be equipped with a product structure. In DP LB graphs, we have no such product structure over the critical paths. However, notice that applying an alternation product would exactly give us a product structure over the critical paths as desired.
\section{Preliminaries}

We use the following notations:
\begin{itemize}
    \item We use $\conv(\cdot)$ to denote the convex hull of a set. 
    \item We use $\langle \cdot, \cdot \rangle$ to denote the standard Euclidean inner product, $\| \cdot \|$ the Euclidean norm, and  $\proj_{\Vec{w}}(\cdot)$  the Euclidean scalar projection onto $\Vec{w}$.
    \item We use $[x, y],$ where $x \leq y \in \mathbb{Z}$,  to denote the set $\{x, x+1, \dots, y-1, y\}$. We use $[x]$, where $x > 0 \in \mathbb{Z}$, to denote $[1, x]$. 
\end{itemize}

\section{Outer Graph $G_O$}

\label{sec:outer_graph}

The goal of this section will be to construct the outer graph $G_O$ of our additive spanner and emulator lower bound constructions. The key properties of $G_O$ are summarized in Theorem \ref{thm:outer_graph}.

\begin{theorem}[Properties of Outer Graph]
\label{thm:outer_graph}
For any $a, r > 0 \in \mathbb{Z}$, there exists a graph $G_O(a, r) = (V_O, E_O)$ with a set $\Pi_O$ of critical paths in $G_O$ that has the following properties:
\begin{enumerate}
    \item The number of nodes, edges, and critical paths in $G_O$ is:
    \begin{align*}
        |V_O| &=  \Theta(a^3 r), \\
        |E_O| &=   \Theta(a^3r^2),\\
        |\Pi_O| &=  \Theta(a^2r^4).
    \end{align*}
    \item Every critical path $\pi \in \Pi_O$ is a unique shortest path in $G_O$ of length at least $|\pi| \geq \frac{a}{4r}$.
    \item Every pair of distinct critical paths $\pi_1, \pi_2 \in \Pi_O$ intersect on at most two  nodes. 
    \item Every edge $e \in E_O$ lies on some critical path in $\Pi_O$. 
\end{enumerate}
\end{theorem}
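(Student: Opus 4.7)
The plan is to define $G_O(a, r)$ using a 3-dimensional Coppersmith--Elkin-style lattice construction with the convex set
\[B \ := \ \{(x, y, x^2 + y^2) : x, y \in [r]\} \subseteq \mathbb{Z}^3,\]
combined with the unlayered outer-graph modifications of \cite{BodwinH22} and an alternation-product structure in the spirit of \cite{VWXX24}. The vertex set $V_O$ will be lattice points in a box of dimensions approximately $[a] \times [a] \times [ar]$, giving $|V_O| = \Theta(a^3 r)$. Because $B$ lies on the strictly convex paraboloid $z = x^2 + y^2$, every element of $B$ is an extreme point of $\conv(B)$; moreover $|B| = \Theta(r^2)$ and $\conv(B)$ contains $\Theta(r^4)$ lattice points. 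These two counts are responsible for the $r^2$ factor in the target $|E_O|$ and the $r^4$ factor in the target $|\Pi_O|$, respectively.

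Critical paths will be defined as straight-line walks through the lattice using the alternation-product structure. At a high level, each critical path corresponds to an ordered pair $(\vec{w}_1, \vec{w}_2) \in B \times B$ and a starting position on a 2-dimensional ``start-face'' of the box; the path alternates steps in directions $\vec{w}_1$ and $\vec{w}_2$. Since there are $\Theta(r^2 \cdot r^2) = \Theta(r^4)$ direction pairs and $\Theta(a^2)$ start positions, we obtain $\Theta(a^2 r^4)$ critical paths. By the box dimensions, each path takes at least $a/(4r)$ steps before leaving the grid (using that the largest $z$-coordinate of any $\vec{w} \in B$ is $2r^2$), giving the length bound in property~(2). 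I would restrict $E_O$ to edges appearing on at least one critical path (granting property~(4) for free), and a double count of path-edge incidences against the per-pair overlap bound of property~(3) yields $|E_O| = \Theta(a^3 r^2)$.

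For uniqueness of shortest paths in property~(2), I would invoke the standard strict-convexity argument adapted to the alternation product: any alternative walk of the same length from $\vec{u}$ to $\vec{u} + L(\vec{w}_1 + \vec{w}_2)/2$ must use edge-vectors in $B$ summing to the same net displacement, and strict convexity of the paraboloid implies that the only such walks are the ones alternating exactly between $\vec{w}_1$ and $\vec{w}_2$. For property~(3), two critical paths correspond to two affine lines in $\mathbb{R}^3$ with effective directions in $B + B$: two lines with distinct directions share at most one point, and two parallel lines with distinct start positions are disjoint. The ``at most two nodes'' slack accommodates a small boundary overlap when the two alternation steps happen to align at one lattice edge.

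The main obstacle I expect is making the unique-shortest-path and pairwise-intersection arguments both work rigorously in the alternation-product setting. When directions come from $B + B$ rather than $B$ itself, one must use the strict convexity of the paraboloid $z = x^2 + y^2$ to argue that the $\Theta(r^4)$ distinct sums in $B + B$ are all extreme points of $\conv(B + B)$, and that alternative walks in $B$ cannot match these sums except by the trivial alternation. This is the key geometric lemma making the VWXX24-style alternation product ``optimal'' in the sense of Table~\ref{tab:alt-prod}, and it relies essentially on the specific paraboloid structure of $B$. A secondary technical challenge is aligning the start-face and box dimensions with the BH22 unlayered modification to match the numeric counts in property~(1), particularly ensuring no boundary defects inflate the edge count beyond $\Theta(a^3 r^2)$.
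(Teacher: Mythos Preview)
Your proposal has a genuine gap in the alternation-product structure that breaks the unique shortest path property. You propose alternating between two vectors $\vec{w}_1, \vec{w}_2 \in B = \{(x,y,x^2+y^2) : x,y \in [r]\}$, but the sum $\vec{w}_1 + \vec{w}_2$ does \emph{not} determine the multiset $\{\vec{w}_1,\vec{w}_2\}$: for instance $(1,2,5)+(2,1,5) = (1,1,2)+(2,2,8) = (3,3,10)$. Hence a walk alternating $(1,2,5),(2,1,5)$ and a walk alternating $(1,1,2),(2,2,8)$ have the same length and the same net displacement, so neither is a \emph{unique} shortest path. Strict convexity of the paraboloid only tells you each element of $B$ is extreme in $\conv(B)$; it does not make midpoints $(\vec{w}_1+\vec{w}_2)/2$ extreme, which is what your argument would need.

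The paper sidesteps this by alternating between the \emph{factor} sets $W_1 = \{(x,0,x^2) : x \in [r/2,r]\}$ and $W_2 = \{(0,y,y^2) : y \in [r/2,r]\}$, whose sumset is $W$. The first two coordinates of $\vec{w}_1+\vec{w}_2 \in W$ then recover $(x,y)$ uniquely. To enforce this alternation structurally, the vertex set is taken to be \emph{two} copies $V_O^L, V_O^R$ of the grid, with $W_1$-edges going $L\to R$ and $W_2$-edges going $R\to L$; any even-length path then decomposes into consecutive pairs whose sums lie in $W' = W \cup (-W) \cup (W_1-W_2) \cup (W_2-W_1)$, and the key convexity lemma shows each $\vec{w}\in W$ is extreme in $\conv(W')$. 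The restriction $x,y \in [r/2,r]$ (rather than $[r]$) is needed precisely here, to control the signs in the $-W$ and mixed-difference cases. Your counting is also misaligned: the $r^4$ factor in $|\Pi_O|$ does not come from $|B|^2$ direction pairs over an $a^2$-sized start face, but from $|W_1|\cdot|W_2| = \Theta(r^2)$ direction pairs times a start zone $S = [a]\times[a]\times[r^2/8]$ of size $\Theta(a^2 r^2)$; the $z$-thickness $r^2/8$ is tuned so that each length-$2$ subpath determines a unique start vertex, which is what drives Property~(3).
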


The rest of the section is devoted to constructing the graph $G_O(a,r)$ and paths $\Pi_O$ that satisfy \cref{thm:outer_graph}.

\subsection{Convex Set of Vectors}
\label{subsec:W}

Before specifying the construction of the graph $G_O$, we begin by specifying our construction of a set of vectors $W \subseteq \mathbb{R}^3$ that is crucial to the construction of $G_O$. 
Set $W$ will be parameterized by a positive integer $r$, i.e., $W = W(r)$. 
The vectors in $W$ will satisfy a certain strict convexity property that we will use to ensure the unique shortest paths property of paths $\Pi_O$ in $G_O$.

\begin{definition}[$W(r)$]\label{def:W}
    Given a positive integer $r$, let 
    \[W_1(r) := \{(x,0,x^2) \mid x\in \{r/2,\dots, r\}\} \quad\text{and}\quad W_2(r) := \{(0,y,y^2) \mid y\in \{r/2,\dots, r\}\}.\]
    We define $W(r)$ to be the sumset
    \[W(r) := W_1(r) + W_2(r) = \{(x,y,x^2+y^2)\mid x,y\in \{r/2,\dots, r\}\}. \]
\end{definition}

We now verify that sets of vectors $W_1(r), W_2(r), W(r)$ have the necessary convexity property to ensure that graph $G_O$ has unique shortest paths (Property 2 of \cref{thm:outer_graph}). The convexity property of $W$ stated in \cref{lem:strongly-convex} is roughly similar to the notion of  `strong convexity' in \cite{BodwinH22}, but is in fact stronger. 

\begin{lemma}[Convexity property]\label{lem:strongly-convex}
Let $W_1, W_2, W$ be the sets defined in \cref{def:W} for some positive integer $r$. Let $W'$ be the set
$$
W' = W \cup (-W) \cup (W_1 - W_2) \cup (W_2 - W_1).
$$
Then each vector $\Vec{w} \in W$ is an extreme point of the convex hull $\conv(W')$ of $W'$. 
\end{lemma}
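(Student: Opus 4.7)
The plan is to exhibit, for each $\vec{w} = (x_0, y_0, x_0^2 + y_0^2) \in W$, a linear functional $\phi_{\vec{w}} \colon \R^3 \to \R$ that attains its strict maximum over $W'$ uniquely at $\vec{w}$; this is precisely the condition for $\vec{w}$ to be an extreme point of $\conv(W')$ (since a convex combination of other points of $W'$ would give a strictly smaller value of $\phi_{\vec{w}}$).

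Since $W$ lies on the paraboloid $\{(x, y, x^2+y^2) : x, y \in \R\}$, the natural candidate is the tangent-hyperplane functional at $\vec{w}$, namely
\[
\phi_{\vec{w}}(a, b, c) := 2x_0 \, a + 2y_0 \, b - c,
\]
so that $\phi_{\vec{w}}(\vec{w}) = x_0^2 + y_0^2$. I would then verify $\phi_{\vec{w}}(\vec{v}) < \phi_{\vec{w}}(\vec{w})$ for every $\vec{v} \in W' \setminus \{\vec{w}\}$ by splitting into the four cases $\vec{v} \in W$, $\vec{v} \in -W$, $\vec{v} \in W_1 - W_2$, and $\vec{v} \in W_2 - W_1$. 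In each case the expression $\phi_{\vec{w}}(\vec{w}) - \phi_{\vec{w}}(\vec{v})$ simplifies, by completing the square in $x$ and $y$, to a non-negative quadratic plus a constant term depending on $x_0, y_0$. Specifically, for $\vec{v} = (x, y, x^2 + y^2) \in W$ one obtains
\[
\phi_{\vec{w}}(\vec{w}) - \phi_{\vec{w}}(\vec{v}) = (x - x_0)^2 + (y - y_0)^2,
\]
which is strictly positive unless $\vec{v} = \vec{w}$; this gives the paraboloid-convexity half of the argument.

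The main obstacle — and the reason the construction specifies $x, y \in \{r/2, \dots, r\}$ rather than $\{0, \dots, r\}$ — is ensuring strict positivity of the gap in the other three cases, where $\phi_{\vec{w}}(\vec{w}) - \phi_{\vec{w}}(\vec{v})$ will contain a negative $-(x - x_0)^2$ or $-(y - y_0)^2$ term. For example, for $\vec{v} = (x, -y, x^2 - y^2) \in W_1 - W_2$, completing the square yields
\[
\phi_{\vec{w}}(\vec{w}) - \phi_{\vec{w}}(\vec{v}) = (x-x_0)^2 - (y-y_0)^2 + 2 y_0^2,
\]
and analogous identities hold in the $-W$ and $W_2 - W_1$ cases with $2x_0^2$, $2(x_0^2 + y_0^2)$ playing the role of the positive constant. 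I would then bound the offending negative term using $x, x_0, y, y_0 \in [r/2, r]$, which gives $(x-x_0)^2, (y-y_0)^2 \le r^2/4$, while the positive constant is at least $2(r/2)^2 = r^2/2$. Thus the gap is bounded below by $r^2/4 > 0$ in every case, completing the proof.
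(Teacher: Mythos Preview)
Your proposal is correct and follows essentially the same approach as the paper: both use the linear functional $\phi_{\vec w}(a,b,c)=2x_0 a+2y_0 b-c$ (the tangent to the paraboloid at $\vec w$) and verify case-by-case over $W$, $-W$, $W_1-W_2$, $W_2-W_1$ that $\vec w$ is its unique maximizer on $W'$, invoking $x,y,x_0,y_0\in[r/2,r]$ for the last three cases. The only cosmetic difference is that you complete the square explicitly to get identities like $(x-x_0)^2-(y-y_0)^2+2y_0^2$, whereas the paper factors as $u_1(2x-u_1)+u_2(2y-u_2)$ and, for the mixed cases, splits $\vec u=\vec u_1+\vec u_2$ and bounds each summand via the earlier cases.
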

\begin{proof}
Let $W_1 = W_1(r)$, $W_2=W_2(r)$, and $W = W(r)$ for some positive integer $r$. Let $W' = W \cup (-W) \cup (W_1 - W_2) \cup (W_2 - W_1)$. 
Fix a vector $\Vec{w} = (x, y, x^2+y^2) \in W$. Let $\Vec{c}$ be the vector $\Vec{c} = (2x, 2y, -1)$. Then we claim that $\Vec{w}$ is the unique vector in $W'$ such that $\Vec{w} = \max_{\Vec{u} \in W'} \langle \Vec{c}, \Vec{u} \rangle$. Note that this immediately implies that $\Vec{w}$ is an extreme point of $\conv(W')$.  We now verify this claim:
\begin{itemize}
    \item The inner product of $\Vec{c}$ and $\Vec{w}$ is  $\langle \Vec{c}, \Vec{w} \rangle  = x^2 + y^2$.
    \item If $\Vec{u} = (u_1, u_2, u_1^2 + u_2^2) \in W$, then 
    $$
    \langle \Vec{c}, \Vec{u} \rangle = 2xu_1 + 2yu_2 - u_1^2 - u_2^2 = u_1(2x - u_1) + u_2(2y - u_2).
    $$
    It is straightforward to verify that $u_1(2x - u_1) + u_2(2y - u_2)$ is uniquely maximized when $u_1 = x$ and $u_2 = y$ (e.g., using the second partial derivative test). Then $\langle \Vec{c}, \Vec{u} \rangle \leq \langle \Vec{c}, \Vec{w} \rangle$, with equality only if $\Vec{u} = \Vec{w}$.  
    \item  If $\Vec{u} = (-u_1, -u_2, -u_1^2 - u_2^2) \in -W$, then 
    $$
    \langle \Vec{c}, \Vec{u} \rangle = -2xu_1 - 2yu_2 + u_1^2 + u_2^2 = u_1(u_1 - 2x) + u_2(u_2 - 2y) \leq 0 < \langle \Vec{c}, \Vec{w} \rangle,
    $$
    using the fact that $u_1 \leq 2x$ and $u_2 \leq 2y$, since $u_1, x, u_2, y \in [r/2, r]$. 
    \item If $\Vec{u} = \Vec{u}_1 + \Vec{u}_2 \in W_1 - W_2$, where $\Vec{u}_1 \in W_1$ and $\Vec{u}_2 \in -W_2$, then by the previous analyses,
    $$
\langle \Vec{c}, \Vec{u}_2\rangle \leq 0 \text{\qquad and \qquad} \langle \Vec{c}, \Vec{u}_1 \rangle \leq x^2. 
$$
Then $\langle \Vec{c}, \Vec{u}\rangle = \langle \Vec{c}, \Vec{u}_1 \rangle + \langle \Vec{c}, \Vec{u}_2 \rangle \leq x^2 < \langle \Vec{c}, \Vec{w} \rangle$. The case where $\Vec{u} \in W_2 - W_1$ is symmetric. 
\end{itemize}
We have shown that for all $\Vec{u} \in W'$, $\langle \Vec{c}, \Vec{u} \rangle \leq \langle \Vec{c}, \Vec{w} \rangle$, with equality only if $\Vec{u} = \Vec{w}$. 
\end{proof}

\subsection{Construction of $G_O$}
Let $a, r > 0 \in \mathbb{Z}$ be the input parameters for our construction of outer graph $G_O = (V_O, E_O)$. Let $W_1 = W_1(r)$, $W_2 = W_2(r)$, and $W = W(r)$ be the sets of vectors constructed in \cref{def:W} and parameterized by our choice of $r$.

\paragraph{Vertex Set $V_O$.}
\begin{itemize}
    \item Our vertex set $V_O$ will correspond to two copies of integer points arranged in a grid in $\mathbb{R}^3$.
    These two copies will be denoted as $V_O^L$ and $V_O^R$. 
    For a point $p \in \mathbb{R}^3$, we will use $p_L$ to denote the copy of point $p$ in $V_O^L$, and $p_R$ to denote the copy of point $p$ in $V_O^R$. Likewise, for a set of points $P \subseteq \mathbb{R}^3$, we will use $P_L$ to denote the copy of set $P$ in $V_O^L$, and $P_R$ to denote the copy of set $P$ in $V_O^R$.
 Then we define $V_O^L$ and $V_O^R$ as: 
    $$
    V_O^L = ([a] \times [a] \times [ar])_L, \text{ \qquad and \qquad } V_O^R = ([a] \times [a] \times [ar])_R.
    $$
    When denoting a node $v_L$ or $v_R$ in $V_O$, we will drop the subscript and simply denote this node as $v$ when its membership in $V_O^L$ and $V_O^R$ is clear from the context or otherwise irrelevant.

\end{itemize}
\paragraph{Edge Set $E_O$.}
\begin{itemize}
    \item The edges $E_O$ in $G_O$ will pass between $V_O^L$ and $V_O^R$, so that $E_O \subseteq V_O^L \times V_O^R$. 
    \item Just as the nodes in $V_O$ are integer points in $\mathbb{R}^3$, we will identify the edges in $E_O$  with integer vectors in $\mathbb{R}^3$. Specifically, for each  edge $(x_L, y_R)$ in $E_O$,  we identify $x_L \rightarrow y_R$ with the vector $y - x \in \mathbb{R}^3$. Note that $y-x$ corresponds to the orientation $x_L \rightarrow y_R$ of edge $(x_L, y_R)$; we would use vector $x-y \in \mathbb{R}^3$ to denote $y_R\rightarrow x_L$. 
    \item For each node $v_L \in V_O^L$ and each vector $\Vec{w} \in W_1$, if $(v + \vec{w})_R \in V_O^R$, then add edge $(v_L,(v + \vec{w})_R)$ to $E_O$. Likewise, for each node $v_R \in V_O^R$ and each vector $\Vec{w} \in W_2$, if $(v  + \vec{w})_L \in V_O^L$, then add edge $(v_R, (v + \vec{w})_L)$ to $E_O$. 
\end{itemize}

\paragraph{Critical Paths $\Pi_O$.}
\begin{itemize}
    \item Let  $S \subseteq \mathbb{R}^3$ denote the set of points $S = [a] \times [a] \times [r^2/8]$. 
\item 
Let $s$ be a point in $S$. Additionally,  let  $\Vec{w}_1$ be a vector in $W_1$, and let $\vec{w}_2$ be a vector in $W_2$, where $\Vec{w}_1 + \vec{w}_2 \in W$.\footnote{Note that $\Vec{w}_1 + \Vec{w}_2 \in W$ for all $\Vec{w}_1 \in W_1$ and $\Vec{w}_2 \in W_2$, since $W = W_1 + W_2$. However, in \cref{subsec:modifying_W}, we will modify $W$ so that $W$ is a strict subset of $W_1+W_2$, which will make this requirement relevant. \label{footnote:path_def} } If  $s + \vec{w}_1 \not \in S$, then we define a corresponding path in $G_O$ starting from $s_L \in S_L$ as
$$
s_L \rightarrow (s+ \Vec{w}_1)_R \rightarrow (s + \vec{w}_1 + \vec{w}_2)_L \rightarrow   (s + 2\vec{w}_1 + \vec{w}_2)_R \rightarrow   (s + 2\vec{w}_1 + 2\vec{w}_2)_L \rightarrow \dots \rightarrow (s + i \cdot \vec{w}_1 + i \cdot \vec{w}_2)_L,
$$
where $i$ is the largest integer $i$ such that node $(s + i \cdot \vec{w}_1 + i \cdot \vec{w}_2)_L \in V_O^L$. Let $t = (s + i \cdot \vec{w}_1 + i \cdot \vec{w}_2)_L$ be the endpoint of this path, and add this $s \leadsto t$ path to our set of critical paths $\Pi_O$. 
\item Note that every critical path $\pi \in \Pi_O$ constructed this way is uniquely specified by a start node $s_L \in S_L$ and vectors $\Vec{w}_1 \in W_1$ and $\Vec{w}_2 \in W_2$. 

\item For every critical path $\pi \in \Pi_O$ where $|\pi| < \frac{a}{4r}$, remove $\pi$ from $\Pi_O$. 

\end{itemize}
As a final step in our construction of $G_O$, we remove all edges in $G_O$ that do not lie on some critical path $\pi \in \Pi_O$. 

\subsection{Properties of $G_O$}
We will now verify that $G_O$ and $\Pi_O$ satisfy the properties specified in \cref{thm:outer_graph}.

\begin{claim}
\label{claim:outer_graph_unique_path}
Every critical path $\pi \in \Pi_O$ is the unique shortest path between its endpoints in $G_O$. Moreover, $|\pi| \geq \frac{a}{4r}$. 
\end{claim}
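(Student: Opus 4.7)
Fix a critical path $\pi \in \Pi_O$ with start node $s_L$, step vectors $\vec{w}_1 = (x, 0, x^2) \in W_1$ and $\vec{w}_2 = (0, y, y^2) \in W_2$, and endpoint $t_L = (s + i(\vec{w}_1+\vec{w}_2))_L$, so that $|\pi| = 2i$. The plan is to analyze $s_L \leadsto t_L$ walks via a scalar projection argument using the support vector $\vec{c} = (2x, 2y, -1)$ appearing in the proof of \cref{lem:strongly-convex}. The key structural observation is that every edge of $G_O$ crosses between $V_O^L$ and $V_O^R$, so any $s_L \leadsto t_L$ walk must have even length, and along such a walk every $L \to R$ step has its edge vector lying in $W_1 \cup (-W_2)$ while every $R \to L$ step has its vector in $W_2 \cup (-W_1)$.

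The technical heart of the argument will be two per-edge projection bounds: for $\vec{u} \in W_1 \cup (-W_2)$, one has $\langle \vec{c}, \vec{u} \rangle \leq x^2$ with equality iff $\vec{u} = \vec{w}_1$; for $\vec{u} \in W_2 \cup (-W_1)$, one has $\langle \vec{c}, \vec{u} \rangle \leq y^2$ with equality iff $\vec{u} = \vec{w}_2$. These follow from essentially the same direct computations carried out in the proof of \cref{lem:strongly-convex}. For a forward vector $\vec{u} = (x', 0, (x')^2) \in W_1$, $\langle \vec{c}, \vec{u} \rangle = x'(2x - x')$ is uniquely maximized at $x' = x$ by the second derivative test; for a backward vector $\vec{u} = -(0, y', (y')^2) \in -W_2$, $\langle \vec{c}, \vec{u} \rangle = y'(y' - 2y) \leq 0 < x^2$, using $y, y' \in [r/2, r]$ (so $y' \leq r \leq 2y$). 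The $R \to L$ case is symmetric.

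Finally, summing this projection along an arbitrary $s_L \leadsto t_L$ walk of length $2j$ --- which, by alternation, uses exactly $j$ edges of each direction --- yields $\langle \vec{c}, t - s \rangle \leq j(x^2 + y^2)$, while the critical path itself realizes $\langle \vec{c}, t - s \rangle = i \langle \vec{c}, \vec{w}_1+\vec{w}_2\rangle = i(x^2 + y^2)$. Hence $j \geq i$, so every $s_L \leadsto t_L$ walk has length at least $2i = |\pi|$, with equality only when every $L \to R$ edge uses $\vec{w}_1$ and every $R \to L$ edge uses $\vec{w}_2$ --- i.e., only for the critical path itself, giving uniqueness. The bound $|\pi| \geq a/(4r)$ is then immediate from the filtering step in the construction of $\Pi_O$. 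The main bookkeeping hurdle is carefully separating edges by direction and confirming that the backward-edge contributions are strictly dominated by the forward ones, both of which are enabled by the choice $x, y \in [r/2, r]$; the cleanliness of the overall argument is essentially bought by the careful design of the convex set $W$ in \cref{def:W}.
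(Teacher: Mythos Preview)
Your proof is correct and is essentially the same projection-onto-$\vec{c}$ argument the paper uses, just carried out at a finer granularity: the paper pairs consecutive edges into vectors $\vec v_i\in W'$ and then invokes \cref{lem:strongly-convex} to conclude $\vec w$ cannot be a nontrivial convex combination, whereas you bound the projection of each edge separately (obtaining $\le x^2$ for $L\to R$ steps and $\le y^2$ for $R\to L$ steps) and sum. Both arguments hinge on the same computations from the proof of \cref{lem:strongly-convex}; your per-edge formulation has the minor advantage of making the alternating $L/R$ bookkeeping and the strict-inequality case for backward edges more explicit.
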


\begin{proof}
    Let $\pi \in \Pi_O$ be a critical path with endpoints $s,  t$ that uses vectors $\vec{w}_1\in W_1$ and $\vec{w}_2\in W_2$. By definition we must have $|\pi| \ge \frac{a}{4r}$ since otherwise the path $\pi$ would have been discarded in the construction. It remains to show that $\pi$ is the unique shortest path between its endpoints. Observe that by construction the graph $G_O$ is bipartite and  endpoints $s, t$ both lie in $V_O^L$, so any path from $s$ to $t$ has an even number of edges.

    Suppose for the sake of contradiction that $\pi$ is not a unique shortest path. Then there exists a path $\pi'$, where $\pi' \neq \pi$ and $|\pi'| = |\pi| = 2k$, for some positive integer $k$. Let $\Vec{u}_i$ denote the vector associated with the $i$th edge of $\pi'$, for $i \in [1, 2k]$. Define vector $\Vec{v}_i$ to be $\Vec{v}_i := \Vec{u}_{2i -1} + \Vec{u}_{2i}$ for $i \in [1, k]$. Let $W' = W \cup (-W) \cup (W_1 - W_2) \cup (W_2 - W_1)$. It is straightforward to verify that by  construction, $\Vec{v}_i \in W'$ for $i \in [1, k]$. 

    Since $\pi$ and $\pi'$ are both $s \leadsto t$ paths, it follows that
    $$
    k(\Vec{w}_1 + \Vec{w}_2) = t - s = \sum_{i=1}^k \Vec{v}_i. 
    $$
    Let $\Vec{w} = \Vec{w}_1 + \Vec{w}_2 \in W$. Then this implies that $$\Vec{w} = \frac{1}{k} \sum_{i=1}^k \Vec{v}_i,$$
    where $\Vec{v}_i \in W'$ for all $i \in [1, k]$. Additionally, note that $\Vec{v}_j \neq \Vec{w}$ for some $j \in [1, k]$, since $\pi' \neq \pi$.    However, this implies that $\Vec{w}$ is a convex combination of points in $\Vec{W'} \setminus \{\Vec{w}\}$, contradicting \cref{lem:strongly-convex}.     

\end{proof}

\begin{claim}\label{claim:outer_graph_overlap}
Every pair of distinct critical paths $\pi_1, \pi_2 \in \Pi_O$ can intersect on either a single vertex or a single edge. 
\end{claim}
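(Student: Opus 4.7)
I will show the contrapositive: if two distinct critical paths share three or more nodes, they must actually be equal. This forces $|\pi_1 \cap \pi_2| \leq 2$, and the proof will also show that the two shared nodes (when there are two) must be joined by a common edge of $G_O$.

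\emph{Step 1 (intersections are connected subpaths).} By \cref{claim:outer_graph_unique_path}, each $\pi_i$ is the unique shortest path between its endpoints. A standard swap argument then shows that for any $u, v \in \pi_i$, the subpath of $\pi_i$ between $u$ and $v$ is the unique shortest $u \leadsto v$ path in $G_O$: any alternative shortest $u \leadsto v$ path could be spliced into $\pi_i$, producing a second shortest path between the endpoints of $\pi_i$ and contradicting \cref{claim:outer_graph_unique_path}. Applying this to any two shared nodes $u, v \in \pi_1 \cap \pi_2$, the $u \leadsto v$ subpaths of $\pi_1$ and $\pi_2$ coincide. Hence $\pi_1 \cap \pi_2$ is a connected subpath $\sigma$ of each $\pi_i$, and it suffices to derive a contradiction assuming $\sigma$ contains at least two edges.

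\emph{Step 2 (matching direction vectors).} By construction every $L \to R$ edge of $G_O$ uses a vector in $W_1$ and every $R \to L$ edge uses a vector in $W_2$. Let $\pi_i$ be determined by $(\vec{w}_1^{(i)}, \vec{w}_2^{(i)}) \in W_1 \times W_2$. Pick two consecutive edges of $\sigma$: one is an $L \to R$ edge and the other is an $R \to L$ edge, so matching the shared vector on each edge forces $\vec{w}_1^{(1)} = \vec{w}_1^{(2)} =: \vec{w}_1$ and $\vec{w}_2^{(1)} = \vec{w}_2^{(2)} =: \vec{w}_2$.

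\emph{Step 3 (matching start nodes).} Let $s, s' \in S$ be the starts of $\pi_1, \pi_2$. The $V_O^L$-nodes along each path form an arithmetic progression with common step $\vec{w}_1 + \vec{w}_2$, and $\sigma$ contains at least one $L$-node (a subpath with two edges has three alternating nodes). So the two progressions share a point, forcing $s - s' = k(\vec{w}_1 + \vec{w}_2)$ for some $k \in \mathbb{Z}$. The third coordinate of $\vec{w}_1 + \vec{w}_2$ equals $x^2 + y^2 \geq 2 \cdot (r/2)^2 = r^2/2$, while $s, s' \in S$ have third coordinates in $[1, r^2/8]$, so $|s_z - s'_z| \leq r^2/8 < r^2/2$. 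This forces $k = 0$, i.e., $s = s'$, whence $\pi_1 = \pi_2$, contradicting the hypothesis.

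The main obstacle is Step 1: justifying that $\pi_1 \cap \pi_2$ is a connected subpath rather than a potentially scattered set of shared nodes. This rests on the full uniqueness guarantee of \cref{claim:outer_graph_unique_path}, together with the standard fact that subpaths of unique shortest paths are themselves unique shortest paths. Once that reduction is in hand, the start zone $S$ was chosen precisely so that its diameter in the third coordinate is strictly smaller than a single $\vec{w}_1 + \vec{w}_2$-step, which makes Step 3 immediate; Step 2 is then just bookkeeping about the bipartite alternation structure.
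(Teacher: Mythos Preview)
Your proof is correct and takes essentially the same approach as the paper's: both reduce to showing that a length-$2$ subpath $\sigma$ determines the critical path uniquely, by reading off $\vec{w}_1,\vec{w}_2$ from the two consecutive edges and then using the third-coordinate gap between $S$ (width $\le r^2/8$) and a single step $\vec{w}_1+\vec{w}_2$ (third coordinate $\ge r^2/2$) to pin down the start node. The only cosmetic difference is that the paper phrases Step~3 as ``there is a unique $i$ with $x - i(\vec{w}_1+\vec{w}_2)\in S$,'' whereas you compare the two start nodes $s,s'$ directly; the underlying arithmetic is identical.
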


\begin{proof}
Note that since  critical paths are unique shortest paths between their endpoints by \cref{claim:outer_graph_unique_path}, if two critical paths intersect on two vertices, then they must share the subpath between the two vertices. We show that any pair of critical paths cannot possibly share a subpath of length $2$. 

Let  $\sigma = (x_L, y_R, z_L)$ be a subpath of a critical path $\pi$ of length $2$. We will show that $\pi \in \Pi_O$ is the unique critical path in $\Pi_O$ that contains $\sigma$. The vectors $y - x$ and $z-y$ correspond to two vectors $\vec{w}_1\in W_1$ and $\vec{w}_2\in W_2$. Note that any critical path $\pi'$ in $\Pi_O$ that contains $\sigma$ must take edges corresponding to $\vec{w}_1$ and $\vec{w}_2$ in an alternating fashion. In particular, any path $\pi'$ containing subpath $\sigma$ must have a start node $s_L \in S_L$ where
$$
s_L = x - i \cdot \Vec{w}_1 - i \cdot \Vec{w}_2 \in S_L,
$$
where $i$ is a positive integer. We claim that there is a \textit{unique} positive integer $i$ such that $x - i \cdot \Vec{w}_1 - i \cdot \Vec{w}_2 \in S_L$.  This follows from the fact that the third coordinates of $\Vec{w}_1$ and $\Vec{w}_2$ lie in the range $[r^2/4, r^2]$ by \cref{def:W}, while the third coordinate of any point in $S$ only lies in the range $[r^2/8]$. This implies that there is a unique start node $s_L \in S_L$ shared by all critical paths $\pi'$ in $\Pi_O$ that contain subpath $\sigma$. By construction of $G_O$ and $\Pi_O$, every critical path in $\Pi_O$ is uniquely identified by a start node $s_L \in S_L$ and associated edge vectors $\vec{w}_1\in W_1$ and $\vec{w}_2\in W_2$. Then path $\pi$ is the unique critical path in $\Pi_O$ containing subpath $\sigma$. 
\end{proof}

\begin{claim}\label{claim:outer_graph_edge_use}
Every edge in $G_O$ is used by at most $r/2$ critical paths $\pi \in \Pi_O$.
\end{claim}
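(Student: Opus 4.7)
The plan is to mimic the strategy used in the proof of \cref{claim:outer_graph_overlap}: parametrize the critical paths through a fixed edge $e$ by the triple $(s_L, \vec{w}_1, \vec{w}_2)$ that defines them, use the structure of $e$ to pin down one of these parameters, and use the third-coordinate argument to show that only one additional parameter is free, which ranges over a set of size $\approx r/2$.

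More concretely, I would first observe that by the construction of $E_O$, every edge $e \in E_O$ has one endpoint in $V_O^L$ and one in $V_O^R$, and the associated vector (going from $V_O^L$ to $V_O^R$) lies either in $W_1$ or in $-W_2$. A critical path in $\Pi_O$ alternates $W_1$ and $W_2$ steps starting in $V_O^L$, so it can traverse $e$ only in the direction in which the step vector belongs to $W_1 \cup W_2$. Hence, without loss of generality, write $e = (x_L, y_R)$ with $y - x = \vec{w}_1 \in W_1$ (the case $\vec{w}_2 \in W_2$ is symmetric). Then any critical path $\pi$ through $e$ is uniquely determined by its start node $s_L \in S_L$, the fixed $W_1$-vector $\vec{w}_1$, and some $W_2$-vector $\vec{w}_2$, where $x_L$ is the $(2i)$-th node of $\pi$ for a nonnegative integer $i$, so that $s = x - i\vec{w}_1 - i\vec{w}_2$.

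The main step is then to count: for each $\vec{w}_2 \in W_2$, how many valid starts $s_L \in S_L$ are there? Exactly as in the proof of \cref{claim:outer_graph_overlap}, the third coordinates of $\vec{w}_1$ and $\vec{w}_2$ each lie in $[r^2/4, r^2]$, so the third coordinate of $s$ differs from that of $x$ by a multiple of a value in $[r^2/2, 2r^2]$. Since $S = [a] \times [a] \times [r^2/8]$, the set of admissible third coordinates is an interval of length less than $r^2/2$, so at most one nonnegative integer $i$ can place $s$ inside $S$. Thus each $\vec{w}_2 \in W_2$ contributes at most one critical path through $e$, giving a total of at most $|W_2| \le r/2 + O(1)$ paths, which matches the claimed bound (the $+O(1)$ is absorbed into $r/2$, or one can tighten by observing that extremal choices of $\vec{w}_2$ force $i = 0$ and hence $s = x \notin S$ when $x_3 > r^2/8$).

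No step seems particularly hard: the bookkeeping is essentially the same as in \cref{claim:outer_graph_overlap}, with the one conceptual point being the observation that a $W_1$-edge can be traversed only in one direction by critical paths (so that the edge pins down $\vec{w}_1$ rather than leaving it ambiguous). The only mild subtlety I expect is careful handling of the $i = 0$ boundary case, which corresponds to $e$ being the very first edge of $\pi$; but since this is counted once inside the $|W_2|$-bound above, it causes no trouble.
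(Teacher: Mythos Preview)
Your proposal is correct and follows essentially the same approach as the paper: the edge $e$ pins down one of the two vectors $\vec{w}_1,\vec{w}_2$, the other ranges over a set of size $r/2$, and for each such choice the third-coordinate argument from \cref{claim:outer_graph_overlap} forces a unique start node, hence a unique critical path. The paper's proof is a two-line version of exactly this; your extra care about traversal direction and the $i=0$ boundary is harmless elaboration.
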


\begin{proof}
    This follows immediately from the proof of \cref{claim:outer_graph_overlap}. Given an edge $e$, any critical path $\pi$ that uses the edge $e$ will use the vector $\Vec{e} \in W_1 \cup W_2$ corresponding to $e$. There are $\max(|W_1|, |W_2|) = r/2$ possible choices for the other vector used by $\pi$, which together with the nodes incident to $e$ will uniquely determine a critical path. Thus $e$ can be used by at most $r/2$ critical paths.
\end{proof}

\begin{claim}\label{claim:outer_graph_count}
    The number of nodes, edges, and critical paths in $G_O(a, r)$ is: 
    \begin{align*}
        |V_O| &=  \Theta(a^3 r), \\
        |E_O| &=   \Theta(a^3r^2),\\
        |\Pi_O| &=  \Theta(a^2r^4).
    \end{align*}
\end{claim}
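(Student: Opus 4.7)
\textbf{Proof proposal for Claim \ref{claim:outer_graph_count}.}
The plan is to handle each of the three counts in turn, with the node count immediate, the critical-path count being the main combinatorial estimate, and the edge count following by combining the path count with Claim~\ref{claim:outer_graph_edge_use}.

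For $|V_O|$, the vertex set is $V_O^L \cup V_O^R$ with each copy equal to $[a]\times[a]\times[ar]$, so $|V_O| = 2 \cdot a^2 \cdot ar = \Theta(a^3 r)$. For the upper bound on $|E_O|$, I would note that every edge is either incident to a node in $V_O^L$ with direction vector in $W_1$, or incident to a node in $V_O^R$ with direction vector in $W_2$; since $|W_1| = |W_2| = r/2$, summing degrees gives $|E_O| \leq |V_O^L|\cdot|W_1| + |V_O^R|\cdot|W_2| = O(a^3 r^2)$.

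For $|\Pi_O|$, each critical path retained in $\Pi_O$ is uniquely encoded by a triple $(s, \vec{w}_1, \vec{w}_2)$ with $s \in S$, $\vec{w}_1 \in W_1$, $\vec{w}_2 \in W_2$. The upper bound is then $|S|\cdot|W_1|\cdot|W_2| = \Theta(a^2 r^2)\cdot\Theta(r)\cdot\Theta(r) = O(a^2 r^4)$. For the matching lower bound, the key step is to show that a constant fraction of such triples produces a path of length at least $a/(4r)$ and therefore survives the pruning step. Writing $\vec{w}_1 = (x,0,x^2)$ with $x\in[r/2,r]$ and $\vec{w}_2=(0,y,y^2)$ with $y\in[r/2,r]$, the path terminates at iteration $i$ where $i$ is the largest integer satisfying $s_1 + ix \le a$, $s_2 + iy \le a$, and $s_3 + i(x^2+y^2) \le ar$. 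I would restrict attention to start nodes $s = (s_1,s_2,s_3)$ with $s_1, s_2 \le a/2$ and $s_3 \in [r^2/8]$ (which gives $\Theta(a^2 r^2)$ valid starts), and then verify that for any choice of $\vec{w}_1,\vec{w}_2$ all three constraints allow $i \ge a/(8r)$ — the $s_1,s_2$ constraints because $a - s_i \ge a/2 \ge (a/(8r))\cdot r$, and the $s_3$ constraint because $ar - s_3 \ge ar/2 \ge (a/(8r))\cdot 2r^2$ whenever $a \gtrsim r$. This yields at least $\Theta(a^2 r^2)\cdot\Theta(r^2) = \Omega(a^2 r^4)$ surviving paths.

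Finally, for the matching lower bound on $|E_O|$, I combine the above with Claim \ref{claim:outer_graph_edge_use}. Every surviving critical path has length at most $O(a/r)$ (from the same coordinate bounds, since each iteration advances the $z$-coordinate by $\ge r^2/2$ and the total $z$-range is at most $ar$), so the total number of (path, edge) incidences is at most $|\Pi_O|\cdot O(a/r) = O(a^3 r^3)$ from above but also $\Omega(a^3 r^3)$ from below using the lower bound on $|\Pi_O|$. Since by Claim \ref{claim:outer_graph_edge_use} each edge participates in at most $r/2$ critical paths, the number of distinct edges on surviving critical paths is at least $\Omega(a^3 r^3)/\Theta(r) = \Omega(a^3 r^2)$. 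Because the final construction step deletes precisely the edges lying on no critical path, this gives $|E_O| = \Theta(a^3 r^2)$ and completes the three estimates. The main obstacle I expect is the careful case analysis ensuring that a constant fraction of parameter triples $(s,\vec{w}_1,\vec{w}_2)$ survive the $|\pi| \ge a/(4r)$ pruning — restricting to $s_1,s_2 \le a/2$ is the clean way to sidestep the dependence of the bound on $x,y$.
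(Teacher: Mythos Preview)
Your proposal is correct and follows essentially the same route as the paper: the node count is immediate, the upper bounds on $|\Pi_O|$ and $|E_O|$ come from the obvious parameter count and degree bound respectively, the lower bound on $|\Pi_O|$ is obtained by restricting to start nodes with small first two coordinates (the paper uses $[a/4]$ rather than your $[a/2]$, which is immaterial), and the lower bound on $|E_O|$ comes from combining the path-length lower bound with Claim~\ref{claim:outer_graph_edge_use}. The only cosmetic difference is that the paper goes directly from the incidence lower bound to $|E_O|$ without the detour through the $O(a/r)$ upper bound on path length, which you do not actually need.
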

\begin{proof}
We will show $|E_O| = \Theta(a^3 r^2)$ last.
\begin{itemize}
    \item $|V_O|$: Clear from definition.
    
    \item $|\Pi_O|$: Notice that each critical path is uniquely identified by a vertex in $S$ and a pair of vectors $\vec{w}_1\in W_1, \vec{w}_2\in W_2$, where $\Vec{w}_1 + \Vec{w}_2 \in W$, so there are at most $O(|S_L|\cdot |W|) = O(a^2r^4)$ critical paths. On the other hand, although we remove the critical paths $\pi$ of length $|\pi| < a/4r$, we argue that at least a constant fraction of the critical paths remain. To see this, notice that for the vertices in the set $S'=\{1\}\times [a/4]\times [a/4]\times [r^2/8]\subseteq S$ and all possible pairs of vectors $\vec{w}_1\in W_1, \vec{w}_2\in W_2$, the corresponding unique critical path has length at least $a/(4r)$. This proves that $|\Pi_O| = \Theta(a^2r^4)$ as desired.

    \item $|E_O|$: By construction of the edge set, every vertex in $V_O$ has degree at most $|W_1| + |W_2| = O(r)$. Then $|E_O| = O(r \cdot |V_O|) = O(a^3r^2)$. What remains is to show that $|E_O| = \Omega(a^3r^2)$. By \cref{claim:outer_graph_edge_use} and the fact that all critical paths have length at least $a/(4r)$, we have
        \[|E_O|\ge \frac{1}{r}\cdot |\Pi_O|\cdot \frac{a}{4r} = \Omega(a^3 r^2).\]

\end{itemize}

\end{proof}

\begin{proof}[Proof of \cref{thm:outer_graph}]
Note that graph $G_O$ and critical paths $\Pi_O$ satisfy Properties 1, 2, and 3 of \cref{thm:outer_graph} by \cref{claim:outer_graph_unique_path,claim:outer_graph_overlap,claim:outer_graph_count}.  Moreover, Property 4 of \cref{thm:outer_graph} follows immediately from the final step in our construction of $G_O$. This completes the proof of \cref{thm:outer_graph}. 
\end{proof}

\section{Emulator Lower Bound}
\label{sec:emulator_lb}

In this section we will finish our emulator lower bound by constructing the obstacle product  graph $G$ specified in  \cref{thm:emulator}. 

\subsection{Construction of Obstacle Product Graph $G$}

Let $a, r > 0 \in \mathbb{Z}$ be construction parameters to be specified later. Let $G_O = G_O(a, r)$ be an instance of our outer graph with parameters $(a, r)$ from \cref{thm:outer_graph}. We will construct our final graph $G$ by performing the obstacle product. The obstacle product is performed in two steps: the edge subdivision step and the inner graph replacement step. 

\paragraph{Inner Graph $G_I$.}
Our inner graph $G_I(r) = (V_I, E_I)$ will be the biclique $K_{r, r}$. We denote the two sides of the biclique by $L_I = \{x_I^1, \dots, x_I^r\}$ and $R_I = \{y_I^1, \dots, y_I^r\}$.

\paragraph{Edge Subdivision.} We subdivide each edge in $G_O$ into a path of length $\psi = \Theta\left(\frac{a}{r}\right)$. Denote the resulting graph $G_O'$. 
For any edge $e = (u, v) \in E_O$, let $P_e$ denote the resulting $u \leadsto v$ path of length $\psi$. We will refer to the paths  in $G$ replacing edges from $G_O$ as \textit{subdivided paths}. 

\paragraph{Inner Graph Replacement.}
Let $G_I = G_I(r)$ be an instance of our inner graph with input parameter $r$. We perform the following operations on graph $G_O'$.
\begin{itemize}
    \item For each node $v$ in $V(G_O')$ originally in $G_O$, replace $v$ with a copy of $G_I$. We refer to this copy of $G_I$ as $G_I^v$. Likewise, we refer to the partite sets $L_I$ and $R_I$ in $G_I^v$ as $L_I^v$ and $R_I^v$. 
    \item After applying the previous operation, the endpoints of the subdivided paths $P_e$ in $G_O'$ no longer exist in the graph. If $e = (u, v) \in E_O$, then $P_e$ will have endpoints $u$ and $v$. We will replace the endpoints $u$ and $v$ of $P_e$ with nodes in $G_I^u$ and $G_I^v$ respectively. 
    \item In order to precisely define this replacement operation, it will be helpful to  define two injective functions, $\phi_1: W_1 \mapsto L_I \times R_I$ and $\phi_2: W_2 \mapsto L_I \times R_I$. Let $\Vec{w}_1^i$ (respectively, $\Vec{w}_2^i$) denote the $i$th vector in $W_1$ (respectively, $W_2$), for $i \in [1, r/2]$.  Then we define our injective functions to be
    $$
    \phi_1(\Vec{w}_1^i) = (x_I^i, y_I^i) \text{\quad and \quad} \phi_2(\Vec{w}_2^i) = (x_I^i, y_I^i) \text{\quad for $i \in [1, r/2]$.}
    $$
    \item Let $e = (u, v) \in E_O$. If $v-u \in W_1$, then let $\phi_1(v-u) = (x, y) \in L_I \times R_I$. We will replace the endpoints $u$ and $v$ of $P_e$ with nodes $y \in R_I^u$ in $G_I^u$ and $x \in L_I^v$ in $G_I^v$, respectively.  Otherwise, if $v - u \in W_2$, then let $\phi_2(v-u) = (x, y) \in L_I \times R_I$, and replace the endpoints $u$ and $v$ of $P_e$ with nodes $y \in R_I^u$ in $G_I^u$ and $x \in L_I^v$ in $G_I^v$, respectively. We repeat this operation for each $e \in E_O$ to obtain the obstacle product graph $G$.

    \item Note that after performing the previous operation, every subdivided path $P_e$, where $e = (u, v)$, will have a start node in $R_I^u$ and an end node in $L_I^v$. We will use $r_e$ to denote the start node of $P_e$ in $R_I^u$ and $l_e$ to the end node of $P_e$ in $L_I^v$.

\end{itemize}

\paragraph{Critical Paths $\Pi$.}
\begin{itemize}
    \item Fix a critical path $\pi_O \in \Pi_O$ with associated vectors $\Vec{w}_1 \in W_1$ and $\Vec{w}_2 \in W_2$.
 
    \item Let $e_i$ denote the $i$th edge of $\pi_O$ for $i \in [1, k]$.     

    Then we define a corresponding path $\pi$ in $G$:
    $$
    \pi = P_{e_1} \circ (l_{e_1}, r_{e_2}) \circ P_{e_2} \circ \dots \circ P_{e_{k-1}} \circ  (l_{e_{k-1}}, r_{e_{k}}) \circ P_{e_k}
    $$
    Note that if $e_i = (x, y)$ and $e_{i+1} = (y, z)$, then $(l_{e_i}, r_{e_{i+1}})$ corresponds to an edge from $L_I$ to $R_I$ in inner graph copy $G_I^y$. We add path $\pi$ to our set of critical paths $\Pi$.

    \item We repeat this process for all critical paths in $\Pi_O$ to obtain our set of critical paths $\Pi$ in $G$. Each critical path $\pi \in \Pi$ is uniquely constructed from a critical path $\pi_O \in \Pi_O$, so $|\Pi| = |\Pi_O|$. We will use $\phi:\Pi \mapsto \Pi_O$ to denote the bijection between $\Pi$ and $\Pi_O$ implicit in the construction.

\end{itemize}

As the final step in our construction of obstacle product graph $G$, we remove all edges in $G$ that do not lie on some critical path $\pi \in \Pi$. Note that this will only remove edges in $G$ that are inside copies of the inner graph $G_I$.

\subsection{Analysis of $G$.}

\label{subsec:emulator_analysis}

Our stretch analysis follows a similar argument to the emulator stretch analysis in \cite{LuVWX21}. In the analysis, we will crucially use the fact that in our inner graph replacement step, we attach the incoming edges incident to a node $v \in V_O$ to nodes in $L_I^v$, and we attach the outgoing edges incident to $v$ to nodes in $R_I^v$. 
Then the image of any path of length $2$ in $G_O$ will pass through a unique edge in some inner graph copy in $G$.
From now on, we will call the set of edges in the inner graph copies \emph{clique edges} for convenience. Then the key observation is that each critical path in $\Pi$ will pass through a set of unique clique edges. In particular, since a path of length $2$ in $G_O$ uniquely identifies a clique edge, by \cref{claim:outer_graph_overlap} we know that the critical paths in $\Pi$ do not intersect on clique edges. So in the following, we first argue that any spanner on $G$ cannot miss too many clique edges, then we will show that any emulator can effectively be turned into a spanner with the same stretch using some more edges. This will give us a lower bound on the additive stretch of linear-sized emulators.

We begin with a lemma formalizing the above described property of clique edges.

\begin{claim}\label{claim:clique-edge-no-overlap}
    Every pair of distinct critical paths $\pi, \pi' \in \Pi$ do not intersect on clique edges.
\end{claim}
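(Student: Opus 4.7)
The plan is to reduce the claim to \cref{claim:outer_graph_overlap}, by establishing that a clique edge encodes a unique length-$2$ subpath of the corresponding critical path in $G_O$, and then invoking the strong form of \cref{claim:outer_graph_overlap} (which, in its proof, in fact shows that a length-$2$ subpath lies in a unique critical path of $\Pi_O$).

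First I would fix an arbitrary clique edge $e^* = (x, y)$ lying in some copy $G_I^v$ of the inner graph, with $x \in L_I^v$ and $y \in R_I^v$, and unpack which critical paths in $\Pi$ can use $e^*$. By the construction of $\Pi$, any $\pi \in \Pi$ using $e^*$ must have $\phi(\pi)$ passing through $v$ so that the incoming edge $e_i$ and outgoing edge $e_{i+1}$ of $\phi(\pi)$ at $v$ satisfy $l_{e_i} = x$ and $r_{e_{i+1}} = y$. The subtle point here is that the images of $\phi_1$ and $\phi_2$ coincide as subsets of $L_I \times R_I$, so in principle a single biclique vertex $x_I^i$ could come from either $\phi_1(\vec{w}_1^i)$ or $\phi_2(\vec{w}_2^i)$. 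This ambiguity is resolved by the bipartite structure of $G_O$: since $V_O = V_O^L \sqcup V_O^R$ and every edge crosses the partition, knowing whether $v \in V_O^L$ or $v \in V_O^R$ determines whether the incoming edge $e_i$ uses a vector from $W_2$ (when $v \in V_O^L$) or $W_1$ (when $v \in V_O^R$), and symmetrically for $e_{i+1}$. Combined with the injectivity of $\phi_1$ and $\phi_2$, this shows that the clique edge $e^*$ together with $v$ recovers the two vectors labeling $e_i$ and $e_{i+1}$ uniquely, and hence determines a unique length-$2$ subpath $(u, v, w)$ of $\phi(\pi)$ in $G_O$.

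Next I would invoke the content of the proof of \cref{claim:outer_graph_overlap}, which establishes that any length-$2$ subpath $\sigma = (u, v, w)$ of a critical path in $\Pi_O$ is in fact contained in exactly one critical path of $\Pi_O$ (the key ingredient being the third-coordinate argument: the third coordinates of vectors in $W_1 \cup W_2$ lie in $[r^2/4, r^2]$, while the third coordinate of any point of $S$ is in $[r^2/8]$, pinning down the unique start node in $S$). Putting the two steps together: if two distinct $\pi, \pi' \in \Pi$ shared a clique edge $e^*$ in some $G_I^v$, then $\phi(\pi)$ and $\phi(\pi')$ would both contain the same length-$2$ subpath $(u, v, w)$, forcing $\phi(\pi) = \phi(\pi')$ and hence $\pi = \pi'$ by bijectivity of $\phi$, a contradiction.

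The main obstacle is the first step, i.e., articulating cleanly why a single unordered pair of biclique vertices inside $G_I^v$ pins down both vectors at $v$ despite the coincidence of the images of $\phi_1$ and $\phi_2$. Once the bipartite structure of $G_O$ is used to disambiguate which of $\phi_1$ or $\phi_2$ is relevant on each side of the clique edge, the rest of the argument is a short appeal to the uniqueness half of the proof of \cref{claim:outer_graph_overlap}.
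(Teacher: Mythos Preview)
Your proposal is correct and follows the same route as the paper: a shared clique edge in $G_I^y$ forces the corresponding outer paths $\phi(\pi),\phi(\pi')\in\Pi_O$ to share a length-$2$ subpath $(x,y,z)$, contradicting \cref{claim:outer_graph_overlap}. Your treatment is in fact more careful than the paper's in explicitly using the bipartite structure of $G_O$ to resolve the $\phi_1/\phi_2$ ambiguity; one small simplification is that the \emph{statement} of \cref{claim:outer_graph_overlap} (intersection on at most a single edge, hence at most two vertices) already rules out a shared length-$2$ subpath, so there is no need to dip into its proof and the third-coordinate argument.
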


\begin{proof}
    
    Consider the corresponding critical paths $\pi_O, \pi_O'\in \Pi_O$.  If $(x, y, z)$ is a subpath of $\pi_O$, then $\pi$ and $\pi'$ intersect on a clique edge in $G_I^y$ only if $(x, y, z)$ is a subpath of $\pi_O'$. However, by \cref{claim:outer_graph_overlap}, paths $\pi_O, \pi_O' \in \Pi_O$ do not intersect on a path of length $2$, so $(x, y, z) \not \subseteq \pi_O \cap \pi_O'$. 

\end{proof}

\begin{claim}\label{claim:clique-edge-in-crit-path}
    Every critical path $\pi \in \Pi$ contains at least $\frac{a}{4r}-1$ clique edges.
\end{claim}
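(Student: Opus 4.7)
The plan is to reduce the claim directly to the construction and to Property 2 of \cref{thm:outer_graph}. Let $\pi \in \Pi$ be any critical path, and let $\pi_O = \phi(\pi) \in \Pi_O$ be the corresponding outer critical path with edges $e_1, e_2, \dots, e_k$, so that $|\pi_O| = k$. By the construction of $\pi$, we have
\[
\pi = P_{e_1} \circ (l_{e_1}, r_{e_2}) \circ P_{e_2} \circ \cdots \circ P_{e_{k-1}} \circ (l_{e_{k-1}}, r_{e_k}) \circ P_{e_k}.
\]
Each subdivided path $P_{e_i}$ consists entirely of edges coming from the subdivision step and contains no clique edges, while each connector edge $(l_{e_i}, r_{e_{i+1}})$ is by definition an edge inside the inner graph copy $G_I^y$ where $y$ is the shared endpoint of $e_i$ and $e_{i+1}$, hence a clique edge. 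So I would first note that $\pi$ contains exactly $k-1$ clique edges.

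Next, I would invoke Property 2 of \cref{thm:outer_graph}, which guarantees that every critical path $\pi_O \in \Pi_O$ satisfies $|\pi_O| \geq \frac{a}{4r}$, i.e., $k \geq \frac{a}{4r}$. Combining these two observations gives that $\pi$ contains at least $\frac{a}{4r} - 1$ clique edges, as required.

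There is essentially no obstacle here; the claim is a direct bookkeeping consequence of the definition of $\pi$ together with the length lower bound already established in \cref{thm:outer_graph}. The only mild care needed is to verify that subdivided paths contribute no clique edges and that each consecutive pair of subdivided paths contributes exactly one clique edge, both of which are immediate from how the inner graph replacement step was defined (the endpoints of $P_{e_i}$ live in $R_I^{u} \cup L_I^{v}$ for the corresponding endpoints $u,v$ of $e_i$, and the connector $(l_{e_i}, r_{e_{i+1}})$ is a single edge between $L_I$ and $R_I$ inside one copy of $K_{r,r}$).
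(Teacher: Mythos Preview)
Your proof is correct and essentially identical to the paper's: the paper phrases it as ``every node on $\pi_O$ except for the endpoints corresponds to a unique clique edge,'' which is exactly your count of $k-1$ connector edges $(l_{e_i}, r_{e_{i+1}})$, and then applies the bound $|\pi_O| \ge a/(4r)$ from \cref{thm:outer_graph}.
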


\begin{proof}
    By construction of $\pi \in \Pi$ from $\pi_O \in \Pi_O$, every node on $\pi_O$ except for the endpoints will correspond to a unique clique edge. Since $|\pi_O|\ge a/4r$, the critical path in $\pi\in \Pi$ constructed from $\pi_O$ will go through at least $\frac{a}{4r} - 1$ clique edges.
\end{proof}

Now we show that any spanner on $G$ must include many clique edges in order to have small additive stretch.

\begin{lemma}\label{lem:emulator-spanner-stretch}
    Any spanner $H\subseteq G$ that contains $< (\frac{a}{8r}-1)\cdot |P|$ clique edges must have additive stretch $\Omega(\min\{\psi,a/r\})$.
\end{lemma}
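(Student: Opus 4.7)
The plan is to combine a counting/averaging argument on clique edges with the uniqueness of outer-graph shortest paths. By \cref{claim:clique-edge-in-crit-path} each critical path $\pi \in \Pi$ contains at least $\tfrac{a}{4r}-1$ clique edges, and by \cref{claim:clique-edge-no-overlap} these clique edges are pairwise disjoint across critical paths, so the total number of clique edges lying on critical paths is at least $(\tfrac{a}{4r}-1)|\Pi|$. Assuming $H$ has fewer than $(\tfrac{a}{8r}-1)|\Pi|$ clique edges, the number of clique edges missing from $H$ but used by some critical path is at least $\tfrac{a}{8r}\cdot|\Pi|$, and averaging yields a witness critical path $\pi \in \Pi$ with endpoints $s,t$ that is missing at least $\tfrac{a}{8r}$ of its clique edges.

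Next I would fix such a $\pi$, let $\pi_O = \phi(\pi) \in \Pi_O$, and analyze the shortest $s$--$t$ path $\pi_H$ in $H$. Since every internal node of a subdivided path $P_e$ has degree $2$ in $G$, $\pi_H$ uses each subdivided path either entirely or not at all; the ordered list of subdivided paths used by $\pi_H$ therefore defines a walk $W$ from $s_L$ to $t_L$ in $G_O$. The argument splits according to whether $W = \pi_O$ as edge sequences.

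\textbf{Case A ($W = \pi_O$).} Because $\pi_H$ enters and exits each inner copy $G_I^v$ along $\pi_O$ at the same attachment nodes in $L_I^v$ and $R_I^v$ as $\pi$, and $G_I^v$ is an induced subgraph containing only clique edges, within each such $G_I^v$ the path $\pi_H$ must traverse a $L_I^v$-to-$R_I^v$ walk consisting entirely of clique edges. For each of the at least $\tfrac{a}{8r}$ copies where the particular clique edge used by $\pi$ is absent from $H$, bipartiteness of $K_{r,r}$ forces this internal walk to have odd length at least $3$, contributing $+2$ relative to $\pi$. Summing gives $|\pi_H| \geq |\pi| + 2\cdot\tfrac{a}{8r} = |\pi| + \tfrac{a}{4r}$.

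\textbf{Case B ($W \neq \pi_O$).} Since $\pi_O$ is the unique shortest $s_L$-$t_L$ path in $G_O$ (\cref{thm:outer_graph}, Property~2) and $G_O$ is bipartite (so $s_L$-$t_L$ walks have even length), any walk $W \neq \pi_O$ satisfies $|W| \geq |\pi_O| + 2$. Each of the $|W|$ subdivided paths contributes $\psi$ edges to $\pi_H$, and consecutive subdivided paths are separated by at least one clique edge, so $|\pi_H| \geq |W|\psi + (|W|-1) \geq |\pi| + 2\psi + 2$. Combining the two cases, $\dist_H(s,t) - \dist_G(s,t) \geq \min\{\tfrac{a}{4r},\,2\psi\} = \Omega(\min\{\psi,\,a/r\})$.

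The main subtlety to verify carefully is in Case A: I must confirm that because $\pi_H$ uses exactly the subdivided path sequence of $\pi$ and because $\pi_O$ is simple (being a shortest path), $\pi_H$ enters each inner graph copy along $\pi_O$ exactly once, so that the $+2$ detour penalties from distinct missing clique edges genuinely add and cannot interfere with each other or be amortized by savings elsewhere. In Case B, the small care needed is to correctly associate $\pi_H$ with a walk in $G_O$ ending at $s_L$ and $t_L$ (rather than at other inner-graph nodes) and to track the clique edges separating subdivided paths so that the bound $|\pi_H| \geq |W|\psi + (|W|-1)$ is justified.
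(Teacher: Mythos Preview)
Your proposal is correct and follows essentially the same approach as the paper: an averaging argument (via \cref{claim:clique-edge-no-overlap} and \cref{claim:clique-edge-in-crit-path}) to locate a critical path missing many clique edges, followed by the same two-case split based on whether the $G_O$-projection of $\pi_H$ coincides with $\pi_O$. Your Case~B is slightly sharper (exploiting bipartiteness of $G_O$ to get $+2\psi$ rather than $+\psi$), and you are more explicit about why consecutive subdivided paths contribute at least one clique edge, but these are refinements rather than a different route.
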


\begin{proof}
    The proof follows similarly to the proof of \cite[Lemma 6.1]{LuVWX21}. Let $H\subseteq G$ be a spanner containing $< (\frac{a}{8r}-1)\cdot |\Pi|$ clique edges. Then by \cref{claim:clique-edge-no-overlap} and the pigeonhole principle, there exists some $s \leadsto t$ critical path $\pi \in \Pi$ with less than $\frac{a}{8r}-1$ clique edges. By \cref{claim:clique-edge-in-crit-path}, this means that at least $\frac{a}{8r}$ clique edges on $\pi$ are missing in $H$. Let $\pi_H$ denote the shortest path between $s$ and $t$ in $H$ and let $\pi_H^O$ denote the corresponding path in $G_O$ obtained by contracting subdivided paths in $G$ into single edges and contracting inner graph copies $G_I$ in $G$ into single nodes. We compare $\pi_H^O$ against the critical path $\pi_O = \phi^{-1}(\pi) \in \Pi_O$ in $G_O$:

    \begin{itemize}
        \item $\pi_H^O = \pi_O$: This means that $\pi_H$ and $\pi$ traverse the same set of subdivided paths but possibly different clique edges. Furthermore, this implies that  $\pi_H$ and $\pi$ enter and exit inner graph copy $G_I^y$ through nodes $l_{(x, y)}, r_{(y, z)}$, where $(x, y, z) \subseteq \pi_O\cap \pi_H^O$ is a subpath of $\pi_O$ and $\pi_H^O$. For each clique edge $(l_{(x, y)}, r_{(y, z)})$ in $\pi$ missing from $H$, $\pi_H$ must use at least two extra clique edges to go between $l_{(x, y)}$ and $r_{(y, z)}$ in $G_I^y$ because the inner graphs are bipartite. Since there are $\frac{a}{8r}$ clique edges in $\pi$ missing from $H$, we have $|\pi_H|\ge |\pi| + 2\cdot \frac{a}{8r} = |\pi| + \frac{a}{4r}$.

        \item Path $\pi_{H}^O\ne \pi_O$: $\pi_O$ is the unique shortest path between $s$ and $t$ in $G_O$, so $\pi_{H}^O$ must use at least one more edge than $\pi_O$. Since the edges in $G_O$ are subdivided to paths of length $\psi$, we have $|\pi_H|\ge |\pi| + k$.
    \end{itemize}

    This concludes the proof.
\end{proof} 

Then we show that any emulator can be turned into a spanner with the same additive stretch using some more edges.

\begin{lemma}\label{lem:emulator-stretch}
    Any emulator $H$ of $G$ containing $< |\Pi|/32$ edges must have additive stretch $\Omega(\min\{\psi,a/r\})$.
\end{lemma}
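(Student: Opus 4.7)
The plan is to reduce to the spanner case by converting any emulator $H$ with $|E(H)| < |\Pi|/32$ into a subgraph $H^* \subseteq G$ of comparable additive stretch, and then to invoke \cref{lem:emulator-spanner-stretch} on $H^*$. First, I would normalize $H$ without loss of generality: each edge $e = (u,v)$ may be assigned weight exactly $w_e = \dist_G(u,v)$, since decreasing weights preserves the lower inequality $\dist_H \geq \dist_G$ via the triangle inequality, and any edge of $H$ not lying on a shortest $H$-path between some critical pair may be deleted without affecting critical-pair distances. Then I build $H^*$ by replacing each remaining emulator edge $e$ with a shortest $u,v$-path $P_e$ in $G$ and taking the union. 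Concatenating these $P_e$'s along a shortest $s,t$-path in $H$ shows $\dist_{H^*}(s,t) \leq \dist_H(s,t) \leq \dist_G(s,t) + c$ for every critical pair $(s,t)$, where $c$ denotes the additive stretch of $H$; in particular $H^*$ is a spanner of $G$ whose additive stretch on critical pairs is controlled by $c$.

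Next I would bound the number of clique edges in $H^*$. After the normalization, each weight $w_e$ satisfies $w_e \leq \dist_G(s,t) + c$ for some critical pair $(s,t)$ on whose shortest $H$-path $e$ lies; by \cref{thm:outer_graph} combined with $\psi = \Theta(a/r)$, the critical-pair distance $\dist_G(s,t)$ in $G$ is $O(a^2/r^2)$, so $w_e = O(a^2/r^2)$ in the nontrivial regime $c = O(a^2/r^2)$. The key structural claim, analogous to the emulator analysis in \cite{LuVWX21}, is that any shortest path in $G$ of length $\ell$ contains only $O(\ell/\psi) + O(1)$ clique edges: interior vertices of subdivided paths have degree $2$, so a shortest path must traverse each subdivided path it enters in full (except possibly for partial segments at its two endpoints), and since $\psi$ is large relative to the effective diameter of a restricted inner-graph copy, a shortest path does not revisit any inner graph and uses $O(1)$ clique edges per visit. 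Substituting $w_e = O(a^2/r^2)$ yields $O(a/r)$ clique edges per $P_e$, and summing over the fewer than $|\Pi|/32$ emulator edges gives a total of at most $(|\Pi|/32) \cdot O(a/r)$ clique edges in $H^*$, which with an appropriate absolute constant is strictly less than $(a/(8r)-1)\cdot|\Pi|$ once $a/r$ is sufficiently large.

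Finally, applying \cref{lem:emulator-spanner-stretch} to $H^*$ produces some critical pair $(s,t)$ with $\dist_{H^*}(s,t) - \dist_G(s,t) = \Omega(\min\{\psi,a/r\})$; combined with the stretch upper bound $\dist_{H^*}(s,t) \leq \dist_G(s,t) + c$ inherited from $H$, this forces $c = \Omega(\min\{\psi,a/r\})$, completing the proof. The main obstacle will be pinning down the absolute constants in the shortest-path clique-edge estimate tightly enough that the total clique-edge count in $H^*$ falls strictly below the $(a/(8r)-1)\cdot|\Pi|$ threshold of \cref{lem:emulator-spanner-stretch}; this in turn rests on verifying that shortest $G$-paths neither revisit inner graphs nor dwell in any single inner graph for many clique edges, which is where the comparison between $\psi$ and the effective inner-graph diameter must be leveraged.
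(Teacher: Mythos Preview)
Your overall strategy matches the paper's: normalize weights, replace emulator edges by shortest $G$-paths to get a subgraph $H'$, bound the number of clique edges in $H'$, and invoke \cref{lem:emulator-spanner-stretch}. The divergence is entirely in how you bound clique edges, and this is exactly where your acknowledged ``main obstacle'' bites.

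You bound the length of each $P_e$ by $w_e = O(a^2/r^2)$ and then invoke a structural claim that a shortest path of length $\ell$ uses $O(\ell/\psi)$ clique edges. Even granting the structural claim, the constant you get is essentially $2\cdot \frac{\dist_G(s,t)}{\psi}$ (two clique edges per inner-graph visit, one visit per subdivided path), and since $\dist_G(s,t)$ can be about $\frac{2a}{r}\cdot \psi$, this yields up to $\approx 4a/r$ clique edges per emulator edge. Summed over $<|\Pi|/32$ edges that is $\approx \frac{a}{8r}|\Pi|$, which is \emph{not} strictly below the $(\frac{a}{8r}-1)|\Pi|$ threshold. So the constant really is borderline, and your route does not obviously close the gap.

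The paper sidesteps this entirely with a cleaner observation. It only processes emulator edges $(u,v)$ whose endpoints both lie on a single critical path $\pi\in\Pi$; since $\pi$ is a unique shortest path in $G$, one may simply take $P_e=\pi[u,v]$. The number of clique edges on $P_e$ is then at most the number on all of $\pi$, which is at most $|\phi^{-1}(\pi)|\le 2a/r$ because every critical path in $\Pi_O$ has length $\le 2a/r$. This gives the clean bound $\frac{2a}{r}\cdot\frac{|\Pi|}{32}<(\frac{a}{8r}-1)|\Pi|$, with no structural analysis of arbitrary shortest paths needed. Adopting this observation in place of your length-based estimate removes the obstacle you identified.
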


The proof follows almost identically from \cite[Lemma 6.2]{LuVWX21}, but we repeat the proof for completeness.

\begin{proof}
    Let $H$ be such an emulator for $G$. We will construct a spanner $H'\subseteq G$ with similar additive stretch. Note that we may assume without loss of generality that every edge $(u,v)\in E(H)$ has weight $\wt(u,v) = \dist_{G}(u,v)$. 
   For each edge $(u, v) \in E(H)$ such that $u, v \in \pi$ for some critical path $\pi$ in $\Pi$, we add a $u \leadsto v$ path $\pi'$ to $H'$, where $|\pi'| = \dist_G(u, v)$ and $\pi' \subseteq G$. Then by construction, for all $(u, v) \in E(H)$ such that $u, v \in \pi$ for some $\pi \in \Pi$, $\dist_H(u, v) = \dist_{H'}(u, v)$. 

    Now we compute the number of clique edges in $H'$. We claim that each edge $(u, v)$ in $E(H)$ such that $u, v \in \pi$ for some $\pi \in \Pi$ can contribute at most $2a/r$ clique edges to $H'$. Indeed, the number of clique edges contributed by an edge $(u, v) \in E(H)$ to $H'$ is at most the number of clique edges in $\pi \in \Pi$. This is at most the length of the path $\pi_O = \phi^{-1}(\pi)$,  the path in $\Pi_O$ associated with $\pi$. Since $|\pi_O| \leq 2a/r$, the claim follows.

    Thus the number of clique edges in $H'$ is $< \frac{2a}{r}\cdot \frac{|\Pi|}{32} < (\frac{a}{8r}-1)\cdot |\Pi|$. Then by \cref{lem:emulator-spanner-stretch}, $H'$ must have stretch $\Omega(\min\{\psi,a/r\})$, so $H$ must have stretch $\Omega(\min\{\psi,a/r\})$ as well.
\end{proof}

We are now finally ready to proof \cref{thm:emulator}.

\begin{proof}[Proof of \cref{thm:emulator}]
    We set $\psi = \Theta(a/r)$. Then the graph $G$ satisfies:
    \begin{align*}
        |V| &= \Theta(\psi \cdot |E_O| + r\cdot |V_O|) = \Theta(a^4r + a^3r^2)\\
        |\Pi| &= |\Pi_O| = \Theta(a^2r^4).
    \end{align*}
    We set $|\Pi| = \Theta(|V|)$ so we get that $a = \Theta(r^{3/2})$. Then we have $|V| = \Theta(r^7)$. By \cref{lem:emulator-stretch}, any emulator on $\Theta(|V|)$ edges must have stretch $\Omega(\min\{\psi, a/r\}) = \Omega(a/r) = \Omega(r^{1/2})$. Since $n = |V| = \Theta(r^7)$, we can conclude that any linear-sized emulator on $G$ with $n$ vertices must have additive stretch $\Omega(n^{1/14})$.
\end{proof}
\section{Spanner Lower Bound Construction}


In this section we present our lower bound construction for additive spanners. This construction will have a similar structure to the obstacle product graph $G$ constructed for our emulator lower bound in \cref{sec:emulator_lb}, but with several complications. We now describe these  modifications to the obstacle product argument:
\begin{itemize}
    \item \textbf{Convex Sets $W_1, W_2,$ and $W$.} In \cref{subsec:modifying_W}, we modify the convex sets of vectors $W_1, W_2,$ and $W_3$ that we defined in \cref{sec:outer_graph}. 
    The purpose of this modification is technical, but it has to do with the projection argument we employ in our analysis in \cref{sec:spanner_analysis}. Our new convex sets of vectors $W_1(r, c), W_2(r, c), W(r, c)$ will now be parameterized by an additional integer $c > 0$. These new sets of vectors  will roughly resemble the outer graph vectors in Lemma 8 of \cite{BodwinH22} and  will play a similar role in our analysis of $G$. 
    \item \textbf{Inner Graph $G_I$.} Instead of choosing our inner graphs to be bicliques as in \cref{sec:emulator_lb}, we will choose our inner graphs to be the sourcewise distance preserver lower bound graphs constructed in \cite{CoppersmithE06}. \cref{lem:sourcewise_dp} specifies the exact properties of these new inner graphs $G_I$ that we require in our analysis. See Subsections \ref{subsec:program} and \ref{subsec:overview-outer} for an overview of why we make this design choice. 
\end{itemize}

\subsection{Modifying Convex Sets $W_1, W_2, \text{ and } W$ in $G_O$}
\label{subsec:modifying_W}

In this section, we modify the definitions of the convex sets of vectors $W_1, W_2,$ and $W$ used to construct outer graph $G_O$. Let $r, c > 0 \in \mathbb{Z}$ be the input parameters to our convex sets of vectors $W_1, W_2$, and $W$.

 We define $I_i$ be the interval $$I_i := \left[\frac{r}{2} + \frac{(2i-2)\cdot r}{4c}, \quad \frac{r}{2} + \frac{(2i-2) \cdot r }{4c} + \frac{r}{16c^3}  \right],$$ for $i \in [1, c]$. Our specific choice of intervals $I_i$ will be come relevant in the proofs of Claims \ref{claim:backward} and \ref{claim:sink}.
The following claim is immediate from the definition of intervals $I_i$.

\begin{claim}
\label{claim:intervals}
    Intervals $\{I_i\}_{i \in [1, c]}$ satisfy the following properties:
    \begin{itemize}
        \item $I_i \subseteq [r/2, r]$,
        \item $|I_i| = \frac{r}{16c^3}$
        \item if $x, y \in I_i$, then $|x - y| \leq \frac{r}{16c^3}$, and
        \item if $x \in I_i$ and $y \in I_j$, where $i \neq j$, then $|x-y| \geq r/(2c)$. 
    \end{itemize}
\end{claim}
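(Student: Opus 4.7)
The plan is to verify each of the four asserted properties by direct computation from the definition of $I_i$. None of the steps require any machinery beyond elementary arithmetic, which is consistent with the claim being ``immediate.''

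First, I would establish the containment $I_i \subseteq [r/2, r]$. The left endpoint of $I_i$ equals $\frac{r}{2} + \frac{(2i-2) r}{4c}$, which is at least $r/2$ (attained at $i=1$) and which is increasing in $i$. So it suffices to check that the right endpoint of $I_c$ is at most $r$, i.e.\ that $\frac{(2c-2) r}{4c} + \frac{r}{16c^3} \le \frac{r}{2}$. This simplifies to $\frac{r}{16c^3} \le \frac{r}{2c}$, which is equivalent to $8c^{2} \ge 1$ and hence holds for every positive integer $c$.

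Next, the length property $|I_i| = \frac{r}{16c^3}$ follows immediately by subtracting the two endpoints in the definition of $I_i$, and the third property is then just the observation that any two points in an interval of length $\frac{r}{16c^3}$ differ by at most $\frac{r}{16c^3}$.

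For the fourth property, I would assume without loss of generality that $i < j$ and bound the minimum possible value of $|x - y|$, which occurs when $x$ is the right endpoint of $I_i$ and $y$ is the left endpoint of $I_j$. In that case
\[
y - x \;\ge\; \frac{(2j-2)\,r}{4c} \,-\, \frac{(2i-2)\,r}{4c} \,-\, \frac{r}{16c^3} \;\ge\; \frac{2r}{4c} \,-\, \frac{r}{16c^3} \;=\; \frac{r}{2c} \,-\, \frac{r}{16c^3},
\]
using $j \ge i+1$. For any $c \ge 1$ this quantity is at least $\Omega(r/c)$, which is the separation the rest of the paper needs.

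The only potential obstacle is a minor sharpness issue: the gap between two distinct intervals is actually $\frac{r}{2c} - \frac{r}{16c^3}$, very slightly smaller than the stated $r/(2c)$. I would either absorb the lower-order term into the stated bound (justified since $\frac{r}{16c^3}$ is negligible compared to $\frac{r}{2c}$ for $c \ge 1$) or restate the fourth item as $|x-y| = \Theta(r/c)$. Either way, the quantitative content is sufficient for the downstream projection arguments in Section \ref{sec:spanner_analysis}.
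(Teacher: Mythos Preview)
Your approach is correct and matches the paper's treatment: the paper simply declares the claim ``immediate from the definition of intervals $I_i$'' and gives no explicit proof, so your direct elementwise verification is exactly what is intended. You also correctly catch the off-by-$\frac{r}{16c^3}$ discrepancy in the fourth item; the paper is indeed slightly imprecise there, and your fix (absorb the lower-order term, since the downstream uses in Claims~\ref{claim:sink} and \ref{claim:same} only need a $\Theta(r/c)$ separation) is appropriate.
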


We will use intervals $\{I_i\}_{i \in [1, c]}$ to construct our sets of vectors $W_1(r, c)$ and $W_2(r, c)$. 

\begin{definition}[$W_1(r, c)$ and $W_2(r, c)$]
    Let $r, c$ be  positive integers. We define $W_1(r, c)$ and $W_2(r, c)$ as
    \[W_1(r, c) := \left\{(x, 0, x^2) \mid x\in I_i, i \in [1, c]\right\} \text{\quad and \quad} W_2(r, c) := \left\{(0, y, y^2) \mid y\in I_i, i \in [1, c]\right\}. \]
\end{definition}

Now we partition the vectors in $W_1(r, c)$ into $c$ sets $\mathcal{S}^1_1, \dots, \mathcal{S}^1_c$ we call \textit{stripes}. We define the $i$th stripe $\mathcal{S}^1_i$ of $W_1(r, c)$ as $\{(x, 0, x^2) \mid x \in I_i\}$. Likewise, we define the $i$th stripe $\mathcal{S}^2_i$ of $W_2(r, c)$ as 
$\{(0, y, y^2) \mid y \in I_i\}$. 
The key properties of our stripes are summarized in \cref{claim:stripes}, which follows immediately from \cref{claim:intervals}.

\begin{claim}
    \label{claim:stripes}
    Stripes $\{\mathcal{S}_i^1\}_{i \in [1,c]}$ satisfy the following properties:
    \begin{itemize}
        \item $\mathcal{S}_i^1 \subseteq [r/2, r]$, 
        \item $|\mathcal{S}_i^1| = \frac{r}{16c^3}$, 
        \item if $(x, 0, x^2), (y, 0, y^2) \in \mathcal{S}_i^1$, then $|x-y| \leq \frac{r}{16c^3}$, and
        \item if $(x, 0, x^2) \in \mathcal{S}_i^1$ and $(y, 0, y^2) \in \mathcal{S}_j^1$, where $i \neq j$,  then $|x - y| \geq \frac{r}{2c}$.
    \end{itemize}
    Moreover, stripes $\{\mathcal{S}_i^2\}_{i \in [1,c]}$ satisfy analogous properties.
\end{claim}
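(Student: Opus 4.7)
The plan is to observe that each stripe $\mathcal{S}_i^1$ is, by definition, simply the image of the interval $I_i$ under the map $\varphi_1 : x \mapsto (x, 0, x^2)$, and similarly $\mathcal{S}_i^2$ is the image of $I_i$ under $\varphi_2 : y \mapsto (0, y, y^2)$. Since both $\varphi_1$ and $\varphi_2$ are injective (their first two coordinates already determine the vector), each property of the stripes translates coordinate-wise from the corresponding property of the intervals given in \cref{claim:intervals}.

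Concretely, I would verify the four bullets for $\mathcal{S}_i^1$ in order. For the first bullet, I would interpret $\mathcal{S}_i^1 \subseteq [r/2, r]$ as a statement about the first (nonzero) coordinate of vectors in $\mathcal{S}_i^1$, which is immediate from $I_i \subseteq [r/2, r]$. The second bullet follows from $|\mathcal{S}_i^1| = |I_i| = r/(16c^3)$ using injectivity of $\varphi_1$. The third and fourth bullets, which bound $|x-y|$ for vectors $(x,0,x^2), (y,0,y^2)$ in the same or different stripes, are restatements of the third and fourth bullets of \cref{claim:intervals} since the condition of lying in $\mathcal{S}_i^1$ is exactly the condition that the first coordinate lies in $I_i$. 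The argument for $\mathcal{S}_i^2$ is word-for-word identical after swapping the roles of the first and second coordinates via $\varphi_2$.

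There is no real obstacle here: the claim is a direct repackaging of \cref{claim:intervals} through the bijections $\varphi_1$ and $\varphi_2$, and no additional geometric or combinatorial input is required. The only mild subtlety worth flagging in the write-up is that properties (1), (3), and (4) are implicitly statements about the first nonzero coordinate of the stripe vectors rather than the vectors themselves, so I would add a one-line remark clarifying this convention before listing the four verifications.
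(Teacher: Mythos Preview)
Your proposal is correct and matches the paper's approach: the paper simply states that the claim follows immediately from \cref{claim:intervals}, and your write-up spells out exactly why, namely that each stripe $\mathcal{S}_i^1$ is the image of $I_i$ under the injective map $x \mapsto (x,0,x^2)$, so the four bullets are direct transcriptions of the four bullets of \cref{claim:intervals}. Your remark about the first bullet being a statement about the first coordinate is a helpful clarification of what is left implicit in the paper.
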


Roughly, \cref{claim:stripes} states that vectors in the same stripe in $W_1(r, c)$ or $W_2(r, c)$ are ``close'' to each other in some sense, and vectors in different stripes in $W_1(r, c)$ and $W_2(r, c)$ are ``far'' from each other in some sense. This notion of partitioning a set of vectors into stripes satisfying these properties was introduced in the spanner lower bound construction of \cite{BodwinH22}. 
We are now ready to define our set of vectors $W(r, c)$. 

\begin{definition}[$W(r, c)$] Let $r, c$ be positive integers. Unlike in Definition \ref{def:W}, we will define $W(r, c)$ to be a \textit{subset} of the sumset $W_1(r, c) + W_2(r, c)$.
In particular, for $\vec{w}_1 \in W_1(r, c)$ and $\vec{w}_2 \in W_2(r, c)$, we only add $\vec{w}_1 + \vec{w}_2$ to $W(r, c)$ if $\vec{w}_1$ and $\vec{w}_2$ share the same stripe index $i \in [1, c]$. 
Formally,
    \[
    W(r, c) := \left\{  (x, y, x^2 + y^2) \mid   x, y \in I_i, i \in [1, c]\right\} \subset W_1(r, c) + W_2(r, c).
    \]
\end{definition}
The following claim is immediate from the definitions of $W_1(r, c)$, $W_2(r, c)$ and $W(r, c)$. 

\begin{claim}
    Sets $W_1(r, c)$, $W_2(r, c)$ and $W(r, c)$ satisfy the following properties:
    \begin{itemize}
        \item $|W_1(r, c)| = |W_2(r, c)| = \Theta\left( \frac{r}{c^2} \right)$,
        \item $|W(r, c)| = \Theta\left( \frac{r^2}{c^5}\right)$, 
        \item $W_1(r, c) \subset W_1(r)$, $W_2(r, c) \subset W_2(r)$, and $W(r, c) \subset W(r)$, so $W(r, c)$ satisfies the convexity property stated in Lemma \ref{lem:strongly-convex}. 
    \end{itemize}
\end{claim}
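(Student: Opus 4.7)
The plan is to verify each of the three properties in turn; all follow directly from the definitions together with the interval properties established in \cref{claim:intervals,claim:stripes}. There is no real obstacle here, but the most substantive step is the convexity property, which I handle last.

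For the cardinality bounds, I would first count the integer points inside each interval $I_i$. By \cref{claim:intervals}, $|I_i| = r/(16c^3)$, so $|I_i \cap \mathbb{Z}| = \Theta(r/c^3)$ (assuming $r/c^3$ is sufficiently large, which is the relevant parameter regime). Since $W_1(r,c)$ is the disjoint union over $i \in [1,c]$ of the stripes $\mathcal{S}_i^1$, each of cardinality $|I_i \cap \mathbb{Z}|$, I conclude $|W_1(r,c)| = c \cdot \Theta(r/c^3) = \Theta(r/c^2)$, and the same calculation applies to $W_2(r,c)$. For $W(r,c)$, the $i$th stripe contributes $|I_i \cap \mathbb{Z}|^2 = \Theta(r^2/c^6)$ vectors, since $x$ and $y$ range independently over $I_i \cap \mathbb{Z}$, giving $|W(r,c)| = c \cdot \Theta(r^2/c^6) = \Theta(r^2/c^5)$.

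For the containment statements, since $I_i \subseteq [r/2, r]$ by \cref{claim:intervals}, every vector $(x,0,x^2) \in W_1(r,c)$ has $x$ an integer in $[r/2,r]$, so $W_1(r,c) \subseteq W_1(r)$; the argument for $W_2(r,c) \subseteq W_2(r)$ is identical, and $W(r,c) \subseteq W(r)$ follows because every triple $(x,y,x^2+y^2)$ with $x,y$ integers in $[r/2,r]$ lies in $W(r)$ by \cref{def:W}.

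Finally, for the convexity property, the key observation is that an extreme point of a convex hull remains an extreme point when we pass to a subset that still contains it. Concretely, let $W'_{\mathrm{small}} = W(r,c) \cup (-W(r,c)) \cup (W_1(r,c) - W_2(r,c)) \cup (W_2(r,c) - W_1(r,c))$, and let $W'_{\mathrm{big}}$ denote the analogous set built from the unparameterized $W, W_1, W_2$ of \cref{def:W}. The three containments above give $W'_{\mathrm{small}} \subseteq W'_{\mathrm{big}}$. Fix $\vec{w} = (x, y, x^2+y^2) \in W(r,c) \subseteq W$. The proof of \cref{lem:strongly-convex} exhibits the explicit linear functional $\vec{c} = (2x, 2y, -1)$ that is uniquely maximized at $\vec{w}$ over all of $W'_{\mathrm{big}}$, and hence is also uniquely maximized at $\vec{w}$ over the subset $W'_{\mathrm{small}}$. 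Since $\vec{w} \in W'_{\mathrm{small}}$, this certifies $\vec{w}$ as an extreme point of $\conv(W'_{\mathrm{small}})$, yielding the desired convexity property. The only mild thing to confirm is that \cref{lem:strongly-convex}'s witnessing functional works uniformly against the entire larger set (rather than merely against $W(r)$), which is evident from inspection of its proof.
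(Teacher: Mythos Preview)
Your proof is correct and is essentially the approach intended by the paper, which simply states that the claim ``is immediate from the definitions of $W_1(r,c)$, $W_2(r,c)$ and $W(r,c)$'' without further argument. Your write-up just makes explicit the straightforward counting (via \cref{claim:intervals}) and the subset-based reduction to \cref{lem:strongly-convex} that the paper leaves to the reader.
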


We modify the construction of outer graph $G_O$ in \cref{sec:outer_graph} by replacing sets $W_1(r), W_2(r)$, and $W(r)$ defined in \cref{subsec:W} with the new sets $W_1(r, c), W_2(r, c)$, and $W(r, c)$. Note that our new choice of sets $W_1(r, c)$ and $W_2(r, c)$ changes the the set of vectors $E_O$ in $G_O$, while our new choice of set $W(r, c)$ changes the set of critical paths $\Pi_O$ (see \cref{footnote:path_def}).

By inserting convex sets $W_1(r, c), W_2(r, c)$, and $W(r, c)$ into $G_O$ in place of the sets $W_1(r), W_2(r)$, and $W(r)$, we obtain the following theorem about our modified outer graph $G_O = G_O(a, r, c)$.

\begin{theorem}[Properties of Modified Outer Graph]
\label{thm:modified_outer_graph}
For any $a, r, c > 0 \in \mathbb{Z}$, there exists a graph $G_O(a, r, c) = (V_O, E_O)$ with a set of critical paths $\Pi_O$ that has the following properties:
\begin{enumerate}
    \item The number of nodes, edges, and critical paths in $G_O$ is:
    \begin{align*}
        |V_O| &=  \Theta(a^3 r), \\
        |E_O| &=   \Theta\left(\frac{a^3r^2}{c^2}\right),\\
        |\Pi_O| &=  \Theta\left(\frac{a^2r^4}{c^5}\right).
    \end{align*}
    \item Every critical path $\pi \in \Pi_O$ is a unique shortest path  in $G_O$ of length at least $|\pi| \geq \frac{a}{4r}$.
    \item Every pair of distinct critical paths $\pi_1$ and $\pi_2$ intersect on at most two  nodes. 
    \item Every edge $e \in E_O$ lies on some critical path in $\Pi_O$. 
\end{enumerate}
\end{theorem}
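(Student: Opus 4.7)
The plan is to re-run the four-claim analysis from \cref{sec:outer_graph} (\cref{claim:outer_graph_unique_path,claim:outer_graph_overlap,claim:outer_graph_edge_use,claim:outer_graph_count}) on the modified construction, observing that the qualitative structural arguments transfer almost verbatim while the quantitative counts must be recomputed to reflect the stripe restriction.

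For Property 2, I would reuse the proof of \cref{claim:outer_graph_unique_path}. The only ingredient needed is the convexity statement of \cref{lem:strongly-convex}, and since $W_1(r,c)\subset W_1(r)$, $W_2(r,c)\subset W_2(r)$, $W(r,c)\subset W(r)$, the associated set $W'$ shrinks, so the extreme-point property of each $\Vec{w}\in W(r,c)$ inside $\conv(W')$ is inherited. The length bound $|\pi|\ge a/(4r)$ is by construction. For Property 3, I would rerun the proof of \cref{claim:outer_graph_overlap}: if $\sigma=(x_L,y_R,z_L)$ is a length-$2$ subpath of a critical path, then $y-x$ and $z-y$ are two vectors $\Vec{w}_1\in W_1(r,c)$, $\Vec{w}_2\in W_2(r,c)$, and the third-coordinate gap between $W_1\cup W_2$ (values in $[r^2/4,r^2]$) and $S$ (values in $[r^2/8]$) still forces a unique positive integer $i$ with $x-i\Vec{w}_1-i\Vec{w}_2\in S_L$. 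Property 4 follows by construction, as the final step of the construction removes any edge not on a critical path.

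The one place where the analysis really changes is the edge-use claim \cref{claim:outer_graph_edge_use}, which drives both the upper and lower bounds on $|E_O|$. Here the stripe constraint in the definition of $W(r,c)$ becomes essential: given an edge $e$ with vector $\Vec{w}_1\in\mathcal{S}^1_i$, any critical path through $e$ must use a second vector $\Vec{w}_2$ such that $\Vec{w}_1+\Vec{w}_2\in W(r,c)$, which by definition forces $\Vec{w}_2\in\mathcal{S}^2_i$, leaving only $|\mathcal{S}^2_i|=\Theta(r/c^3)$ options. So each edge is used by at most $O(r/c^3)$ critical paths, a factor of $c^3$ improvement over the unstriped count.

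I would then assemble Property 1 as follows. The vertex count $|V_O|=\Theta(a^3 r)$ is unchanged. For $|\Pi_O|$, each critical path is uniquely identified by a start point in $S$ of size $\Theta(a^2 r^2)$ and a vector in $W(r,c)$ of size $\Theta(r^2/c^5)$; keeping only paths of length $\ge a/(4r)$ loses at most a constant factor (as in the proof of \cref{claim:outer_graph_count}, by restricting to a sub-box $S'\subseteq S$), yielding $|\Pi_O|=\Theta(a^2 r^4/c^5)$. For $|E_O|$, the upper bound $O(a^3 r^2/c^2)$ comes from each vertex having degree $O(|W_1(r,c)|+|W_2(r,c)|)=O(r/c^2)$, and the matching lower bound comes from double-counting incidences: every critical path contributes $\ge a/(4r)$ edges, and every edge lies on at most $O(r/c^3)$ critical paths, giving $|E_O|=\Omega(|\Pi_O|\cdot(a/r)\cdot(c^3/r))=\Omega(a^3 r^2/c^2)$. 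The main subtlety, and the only real novelty relative to \cref{sec:outer_graph}, is making sure the stripe constraint is tracked consistently in this double count; everything else is an unchanged application of the earlier claims.
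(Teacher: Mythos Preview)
Your proposal is correct and matches the paper's approach: the paper does not give a standalone proof of \cref{thm:modified_outer_graph} but simply asserts it follows by inserting $W_1(r,c),W_2(r,c),W(r,c)$ into the construction of \cref{sec:outer_graph} and rerunning that analysis. You have correctly identified that Properties 2--4 go through verbatim via \cref{claim:outer_graph_unique_path,claim:outer_graph_overlap} and the final pruning step, and that the only substantive change is the recount in Property 1, in particular the sharpened edge-use bound (each edge lies on $O(r/c^3)$ critical paths rather than $O(r)$, because the stripe constraint in $W(r,c)$ pins the companion vector to a single stripe of size $\Theta(r/c^3)$), which is exactly what is needed to close the lower bound $|E_O|=\Omega(a^3r^2/c^2)$.
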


Just like in the original construction of $G_O$ in \cref{sec:outer_graph}, every critical path $\pi \in \Pi_O$ corresponds to a unique vector $\Vec{w} \in W(r, c)$. Specifically, by the definition of $W_1(r, c), W_2(r, c),$ and $W(r, c)$,  path $\pi$ is constructed using vectors $\Vec{w}_1 \in W_1(r, c)$ and $\vec{w}_2 \in W_2(r, c)$, where
\begin{itemize}
    \item $\vec{w} = \vec{w}_1 + \Vec{w}_2$, and
    \item $\Vec{w}_1 \in \mathcal{S}_i^1$ and $\Vec{w}_2 \in \mathcal{S}_i^2$, for some $i \in [1, c]$.
\end{itemize}
Critically, $\Vec{w}_1$ and $\Vec{w}_2$ both lie in the $i$th stripe $\mathcal{S}_i^1$ and $\mathcal{S}_i^2$, respectively. 

\subsection{Inner Graph $G_I$}
\label{subsec:inner_graph}

In this subsection, we formally state the properties of the family of graphs we  choose for our inner graphs $G_I$ when constructing the obstacle product graph $G$. We will choose our inner graphs to be the sourcewise distance preserver lower bound graphs constructed in \cite{CoppersmithE06}. \cref{lem:sourcewise_dp} formally captures the exact properties of this family of graphs that we need for spanner lower bound argument. We will defer our proof of \cref{lem:sourcewise_dp} to \cref{app:sslb}, as it largely follows from the proof of Theorem 5.10 in \cite{CoppersmithE06}.

\begin{lemma}[cf. Theorem 5.10 of \cite{CoppersmithE06}] 
\label{lem:sourcewise_dp}
For any $a, c > 0 \in \mathbb{Z}$, there exists a graph $G_I(a, c) = (V_I, E_I)$ with a set $S_I \subseteq V_I$ of sources, a set $T_I \subseteq V_I$ of sinks, and a set $P_I \subseteq S_I \times T_I$ of critical pairs that has the following properties:
\begin{enumerate}
\item The number of nodes, edges,  sources, sinks, and critical pairs in $G_I$ is:
\begin{align*}
    |V_I| &= \Theta(a^2), \\
    |E_I| &= \Theta(a^2c), \\
    |S_I| &= \Theta(a^{1/2}c^{11/4}), \\
    |T_I| &= \Theta(a^{1/2}c^{11/4}), \\
        |P_I| &= \Theta(ac^{5/2}).
\end{align*}
\item Every path $\pi_{s, t}$, where $(s, t) \in P_I$, contains $\Theta(a/c^{3/2})$ edges that do not lie on any other path $\pi_{s', t'}$, where $(s', t') \in P_I$. 
\item For every source $s \in S_I$ and sink $t \in T_I$, the distance between $s$ and $t$ in $G_I$ satisfies the following:
$$
 \dist_{G_I}(s, t) = \Theta(ac^{1/2}).
$$

\item The set of sources $S_I$ can be partitioned into $b = \Theta(c^3)$ sets $S_I^1, \dots, S_I^b$, where $|S_I^i| = \Theta(a^{1/2}c^{-1/4})$ for all $i \in [b]$. Let $T_I^i = \{t \in T_I \mid (S_I^i \times \{t\}) \cap P_I \neq \emptyset \}$ be the set of all sinks that belong to a critical pair with a source in $S_I^i$. Then for all $i \in [b]$  the following properties hold:
\begin{itemize}
    \item $|T_I^i| = \Theta(a^{1/2}c^{-1/4})$ for all $i \in [b]$,
    \item $S_I^i \times T_I^i \subseteq P_I$, and
    \item for all  $(s, t) \in P_I$ such that $s \in S_I^i$ and $t \in T_I^i$, 
$$
\dist_{G_I}(s, t) \leq \dist_{G_I}(S_I^i, T_I^i),$$
where $\dist_{G_I}(S_I^i, T_I^i)$ denotes the minimum distance between $S_I^i$ and $T_I^i$ in $G_I$. 
\end{itemize}
\end{enumerate}
\end{lemma}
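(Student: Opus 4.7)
The plan is to follow the classical 3-dimensional Coppersmith--Elkin construction from Theorem~5.10 of \cite{CoppersmithE06}, with parameters tuned so that the vertex degree is $\Theta(c)$. Concretely, I would let $V_I \subseteq \Z^3$ be a box-shaped integer grid of total size $\Theta(a^2)$, and take a convex set $B \subset \Z^3$ consisting of $\Theta(c)$ lattice vectors, all of which are extreme points of $\conv(B)$. The edges of $G_I$ are produced by translating each $\vec{w} \in B$ from each grid point (giving $|E_I| = \Theta(a^2 c)$), and the critical paths are the maximal straight-line trajectories that repeatedly take the same direction $\vec{w} \in B$ starting from a carefully sized start zone; $S_I$ consists of start points of these trajectories and $T_I$ of their endpoints. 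The grid aspect ratio is chosen so that each such trajectory has $\Theta(a c^{1/2})$ edges, giving the distance bound in Property~3.

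Properties~2 and~3 then follow from the standard convexity argument already used in Claim~\ref{claim:outer_graph_unique_path}: any alternative realization of a straight-line path would write its direction as a convex combination of the other vectors in $B$, contradicting extremality; the same argument applied to sub-trajectories shows that no two critical paths share a path of length $2$, which yields the $\Theta(a/c^{3/2})$ edge-uniqueness count per critical path. The cardinality counts in Property~1 follow by direct enumeration: each critical pair is determined by a start-zone point and a direction $\vec{w} \in B$, and the start-zone dimensions are chosen so that a constant fraction of these realize trajectories reaching the end zone.

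The crux is Property~4. I would partition $B = B_1 \sqcup \cdots \sqcup B_b$ into $b = \Theta(c^3)$ clusters of $\Theta(c^{-2})\cdot |B|$ directions each, grouping geometrically close directions. Then $S_I^i$ is defined as the set of start nodes of critical paths whose direction lies in $B_i$, and $T_I^i$ as the set of sinks of these paths. Because the directions within a cluster $B_i$ are close, all resulting trajectories have essentially the same length, yielding $\dist_{G_I}(s,t) \le \dist_{G_I}(S_I^i, T_I^i)$. The product-structure property $S_I^i \times T_I^i \subseteq P_I$ follows from a translation-symmetric design of the start and end zones: any pair $(s,t)$ with $s \in S_I^i$ and $t \in T_I^i$ differs by a vector $k\vec{w}$ for some $\vec{w} \in B_i$, and hence lies on a straight-line critical path.

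The main obstacle will be simultaneously arranging the product structure $S_I^i \times T_I^i \subseteq P_I$ exactly and balancing the sizes $|S_I^i| = |T_I^i| = \Theta(a^{1/2} c^{-1/4})$: the start and end zones must be chosen so that each $s \in S_I^i$ is the start of a critical path for \emph{every} $\vec{w} \in B_i$, and symmetrically for sinks. This is the geometric design step carried out in the proof of Theorem~5.10 of \cite{CoppersmithE06}, and the remaining verification amounts to checking that with $b = \Theta(c^3)$ clusters, $\Theta(c^{-1})$ directions per cluster, and $\Theta(a^{1/2} c^{-1/4})$ sources per class, the totals $|S_I|, |T_I|, |P_I|$ come out as stated.
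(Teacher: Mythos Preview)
Your proposal has a real arithmetic obstruction. You take $|B| = \Theta(c)$ to get $|E_I| = \Theta(a^2 c)$, and then propose to partition $B$ into $b = \Theta(c^3)$ clusters. That is already impossible: there are only $\Theta(c)$ directions to distribute among $\Theta(c^3)$ groups. Consequently your plan for Property~4---having each $S_I^i$ correspond to the sources launching along directions in a cluster $B_i$---cannot produce $\Theta(c^3)$ nonempty classes, and the counts $|S_I^i| = \Theta(a^{1/2}c^{-1/4})$, $|P_I| = \Theta(ac^{5/2})$ do not follow. More fundamentally, even setting the arithmetic aside, your ``translation-symmetric'' argument for $S_I^i \times T_I^i \subseteq P_I$ asserts that every $(s,t)$ in the product differs by $k\vec w$ for some $\vec w \in B_i$; this is a severe geometric constraint on a two-dimensional family of sources and sinks that you have not actually arranged.

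The paper's construction obtains the product structure by a completely different mechanism, and it is \emph{two}-dimensional rather than three-dimensional. The base grid is $[ac^{1/2}] \times [a/c^{1/2}]$ with direction set $W = \{(w,w^2) : w \in [c/2,c]\}$. For each of the $\Theta(c)$ directions it builds a rectangular start zone, and inside each start zone it isolates $\Theta(c^2)$ parallel one-dimensional \emph{bands} of length $\ell = \Theta(a/c^{1/2})$, giving $b = \Theta(c^3)$ bands in total. The product structure on $S_I^i \times T_I^i$ comes from a combinatorial trick on each band: write $\ell = \nu^2$ and partition the source band into $\nu$ contiguous intervals while partitioning the sink band into $\nu$ residue classes mod $\nu$, so that every pair of factors $(F_{i,j}^S, F_{i,j'}^T)$ meets in exactly one index $k$, yielding one straight-line trajectory per pair. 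Finally, a subdivided binary tree is grafted onto each band factor, with the factor nodes as leaves; the sources and sinks are the \emph{roots} of these trees. The trees are what force $\dist_{G_I}(s,t) = \Theta(ac^{1/2})$ for \emph{all} source/sink pairs and make the critical-pair distances exactly equal within each class (the last bullet of Property~4). None of this apparatus appears in your sketch, and without it the required product structure and the tight distance equality do not hold.
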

\begin{proof}
We defer the proof of this statement to \cref{app:sslb}. 
\end{proof}

\subsection{Construction of Obstacle Product Graph $G$}
Let $a, r, c > 0 \in \mathbb{Z}$ be the input parameters of an instance of outer graph $G_O = G_O(a, r, c)$. Let $W_1 = W_1(r, c)$, $W_2 = W_2(r, c)$, and $W = W(r, c)$ be the sets of vectors constructed in \cref{subsec:modifying_W}. 
Additionally, let $a', c' > 0 \in \mathbb{Z}$ be the input parameters of an instance of inner graph $G_I = G_I(a', c')$. We will specify the precise values of $a$, $r$, $c$, $a'$, and $c'$ later, as needed. Roughly, our choices of parameters $a$, $r$, and $a'$ will grow with the size of our final graph $G$, while parameters $c$ and $c'$ will be (sufficiently large) integer constants.

We will construct our final graph $G$ by performing the obstacle product. The obstacle product is performed in two steps: the edge subdivision step and the inner graph replacement. In the inner graph replacement step, we will need to carefully define 
two   functions, $\phi_1: W_1 \mapsto S_I \times T_I$ and $\phi_2: W_2 \mapsto S_I \times T_I$ between   vectors in $W_1$ and $W_2$ and pairs of nodes in $S_I \times T_I$ in inner graph $G_I$.

\paragraph{Edge Subdivision.}
We subdivide each edge in $G_O$ into a path of length $\psi$. Denote the resulting graph as $G_O'$. For any edge $e = (u, v) \in E_O$, let $P_e$ denote the resulting $u \leadsto v$ path of length $\psi$. We will take $\psi = \Theta\left( \frac{r^3}{c^{29/3}} \right)$. 

\paragraph{Inner Graph Replacement.}
We now perform the inner graph replacement step of the obstacle product. 
\begin{itemize}
    \item For each node $v$ in $V(G_O')$ originally in $G_O$, replace $v$ with a copy of $G_I$. We refer to this copy of $G_I$ as $G_I^v$. Likewise, refer to the sources and sinks $S_I$ and $T_I$ in $G_I^v$ as $S_I^v$ and $T_I^v$. 
    \item After applying the previous operation, the endpoints of the subdivided paths $P_e$ in $G_O'$ no longer exist in the graph. If $e = (u, v) \in E_O$, then $P_e$ will have endpoints $u$ and $v$. In the next step, we will replace the endpoints $u$ and $v$ of $P_e$ with nodes in $G_I^u$ and $G_I^v$, respectively.
    \item In order to precisely define this replacement operation, it will be helpful to  define two   functions, $\phi_1: W_1 \mapsto S_I \times T_I$ and $\phi_2: W_2 \mapsto S_I \times T_I$. For ease of understanding, we will first assume the existence   functions $\phi_1$ and $\phi_2$. We will specify our choices of $\phi_1$ and $\phi_2$ later.

    \item Let $e = (u, v) \in E_O$. If $v-u \in W_1$, then let $\phi_1(v-u) = (x, y) \in S_I \times T_I$. We will replace the endpoints $u$ and $v$ of $P_e$ with nodes $y \in T_I^u$ in $G_I^u$ and $x \in S_I^v$ in $G_I^v$, respectively.  Otherwise, if $v - u \in W_2$, then let $\phi_2(v-u) = (x, y) \in S_I \times T_I$, and replace the endpoints $u$ and $v$ of $P_e$ with nodes $y \in T_I^u$ in $G_I^u$ and $x \in S_I^v$ in $G_I^v$, respectively. We repeat this operation for each $e \in E_O$ to obtain the obstacle product graph $G$. 

    \item Note that after performing the previous operation, every subdivided path $P_e$, where $e = (u, v)$, will have a start node in $T_I^u$ and an end node in $S_I^v$. We will use $t_e$ to denote the start node of $P_e$ in $T_I^u$ and $s_e$ to the end node of $P_e$ in $S_I^v$.
\end{itemize} 
This completes the construction of the obstacle product graph $G$ (up to defining   functions $\phi_1$ and $\phi_2$).



\paragraph{Defining   functions $\phi_1$ and $\phi_2$.} 
Let $\mathcal{S}_1^1, \dots, \mathcal{S}_c^1$ be the stripes of $W_1$, and let  $\mathcal{S}_1^2, \dots, \mathcal{S}_c^2$ be the stripes of $W_2$.  Let $S_I^1, \dots, S_I^b$ and $T_I^1, \dots, T_I^b$ be the partition of sources $S_I$ and sinks $T_I$ as described in Lemma \ref{lem:sourcewise_dp}, where $b = \Theta(c'^3)$. In order to construct our desired   functions, we will require the following relations to hold:
\begin{itemize}
    \item $b \geq c$,
    \item $|S_I^i| \geq |\mathcal{S}_i^j|$, for all $i \in [1, c]$ and $j \in \{1, 2\}$, and 
    \item $|T_I^i| \geq |\mathcal{S}_i^j|$, for all $i \in [1, c]$ and $j \in \{1, 2\}$.
\end{itemize} 
This can be achieved by setting
$$c' = \Theta(c^{1/3}) \text{\quad and \quad} a' = \Theta\left( \frac{r^2}{c^{35/6}} \right),$$
using the fact that $b = \Theta(c'^3)$,  $|\mathcal{S}_i^j| = \Theta(r/c^3)$, $|S_I^i| = \Theta(a'^{1/2}c'^{-1/4})$, and $|T_I^i| = \Theta(a'^{1/2}c'^{-1/4})$ by \cref{claim:stripes} and \cref{lem:sourcewise_dp}. 

We are now ready to define our   functions $\phi_1$ and $\phi_2$. Let $\Vec{w}^k_{i, j}$ denote the $j$th vector of $\mathcal{S}^k_i$, where $i \in [1, c]$,   $j \in [1, |\mathcal{S}^k_i|]$,  and $k \in \{1, 2\}$. Let $s^i_j$ denote the $j$th node of $S_I^i$, where $i \in [1, c]$ and $j \in [1, |S_I^i|]$. Likewise, let $t^i_j$ denote the $j$th node of $T_I^i$, where $i \in [1, c]$ and $j \in [1, |T_I^i|]$. 
We define $\phi_1$ and $\phi_2$ as follows:
$$
\phi_1(\Vec{w}^1_{i, j}) = (s_j^i, t_j^i) \text{\qquad and \qquad} \phi_2(\Vec{w}^2_{i, j}) = (s_j^i, t_j^i) \text{\qquad for $i \in [1, c]$, $j \in [1, |\mathcal{S}^1_j|]$}.
$$
The key properties of functions $\phi_1$ and $\phi_2$ are summarized in \cref{claim:functions}.

\begin{claim}
    \label{claim:functions}
    Functions $\phi_1$ and $\phi_2$ satisfy the following properties:
    \begin{enumerate}
        \item Our choice of $\phi_1$ and $\phi_2$ imply that for each node $u \in S_I \cup T_I$ in an inner graph copy $G_I^v$, there is at most one subdivided path $P_e$ incident to $u$ in $G$. 
        \item $\phi_k(\mathcal{S}_i^1) \subseteq S_I^i \times T_I^i$ for $k \in \{1, 2\}$ and $i \in [1, c]$, where $\phi_k(\mathcal{S}_i^1)$ denotes the image of $\mathcal{S}_i^1$ under $\phi_k$. 
    \end{enumerate}
\end{claim}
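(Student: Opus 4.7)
The plan is to verify both parts directly from the construction, since the claim is essentially a consistency check for the attachment rule defined just above. Neither part should present a real obstacle; the main subtlety is keeping track of which side of the bipartition a given vertex of $V_O$ lies on.

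For property 1, I would first recall that $G_O$ is bipartite with parts $V_O^L$ and $V_O^R$, and that by the definition of $E_O$ in Section \ref{sec:outer_graph}, the edges incident to any $v \in V_O^L$ split cleanly: the outgoing edges (into $V_O^R$) correspond to vectors in $W_1$, while the incoming edges (from $V_O^R$) correspond to vectors in $W_2$, with the symmetric statement at $v \in V_O^R$. Next, I would invoke the inner graph replacement rule: for $e = (u, v)$ with $v - u \in W_k$, the endpoint of $P_e$ at $u$ is placed at the sink coordinate $t \in T_I^u$ of $\phi_k(v-u)$, and the endpoint at $v$ at the source coordinate $s \in S_I^v$. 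Combining these, every subdivided path incident to a sink $u = t_j^i \in T_I^v$ must arise from an outgoing edge at $v$ using some $\Vec{w} \in W_k$, with $k$ determined by the side of $v$. Finally I would use the injectivity of $\phi_k$: since $\phi_k(\Vec{w}_{i,j}^k) = (s_j^i, t_j^i)$, the sink coordinate $t_j^i$ uniquely determines the pair $(i, j)$, and hence the vector $\Vec{w}_{i,j}^k$. So at most one outgoing edge from $v$ can place a subdivided-path endpoint at $t_j^i$. The argument for a source $u = s_j^i \in S_I^v$ is entirely symmetric, applied to incoming edges.

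For property 2, I would simply unwind the definitions: $\mathcal{S}_i^k = \{\Vec{w}_{i,j}^k : j \in [1, |\mathcal{S}_i^k|]\}$, and $\phi_k(\Vec{w}_{i,j}^k) = (s_j^i, t_j^i)$ lies in $S_I^i \times T_I^i$ by the indexing convention used to define $\phi_k$ (inherited from the partitions of $S_I$ and $T_I$ in Lemma \ref{lem:sourcewise_dp}), so $\phi_k(\mathcal{S}_i^k) \subseteq S_I^i \times T_I^i$. I would flag one small point: the statement as written contains what appears to be a typo, namely $\phi_2(\mathcal{S}_i^1)$ is not well-typed because $\mathcal{S}_i^1 \subseteq W_1$ lies outside the domain of $\phi_2$; the intended assertion is presumably $\phi_k(\mathcal{S}_i^k) \subseteq S_I^i \times T_I^i$ for $k \in \{1, 2\}$, which is exactly what the one-line argument above establishes.
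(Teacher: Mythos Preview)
Your proposal is correct and follows essentially the same approach as the paper: both arguments reduce Property~1 to the observation that the sink (resp.\ source) coordinate $t_j^i$ (resp.\ $s_j^i$) of $\phi_k$ already determines $(i,j)$ and hence the vector, and both treat Property~2 as immediate from the indexing convention. Your write-up is in fact slightly more careful than the paper's, since you explicitly invoke the bipartite structure of $G_O$ to explain why only a single $W_k$ is in play at any fixed sink or source node of $G_I^v$; the paper's proof silently assumes this when it writes ``Let $\vec{w}, \vec{w}' \in W_k$.'' Your observation about the apparent typo ($\phi_k(\mathcal{S}_i^1)$ should read $\phi_k(\mathcal{S}_i^k)$) is also correct.
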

\begin{proof}
Note that Property 2 is immediate from the definition of $\phi_1$ and $\phi_2$. What remains is to prove Property 1. Note that $\phi_1$ and $\phi_2$ are injective functions. Each pair $(s_j^i, t_j^i) \in S_I \times T_I$ is uniquely determined by vector $\Vec{w}^i_j$. 
 Let $\Vec{w}, \Vec{w}' \in W_{k}$, $k \in \{1, 2\}$. Then $\phi_k(\Vec{w}) = (s^i_j, t^i_j)$ and $\phi_k(\Vec{w}') = (s^{i'}_{j'}, t^{i'}_{j'})$, for $i, i' \in [1, c]$ and  $j, j' \in [1, |\mathcal{S}_j^k|]$. If $s_j^i = s_{j'}^{i'}$ or $t_j^i = t_{j'}^{i'}$, then $i = i'$ and $j = j'$. This implies that $\Vec{w}  = \Vec{w}'$ since $\phi_k$ is injective. By the construction of graph $G$, we conclude that nodes $s_j^i, t_j^i \in S_I \cup T_I$ are each incident to at most one subdivided path $P_e$.  
\end{proof}

\paragraph{Critical Paths $\Pi$.} 

\begin{itemize}
    \item Fix a critical path $\pi_O \in \Pi_O$ with associated vectors $\Vec{w}_1 \in W_1$ and $\Vec{w}_2 \in W_2$.
    \item By our construction of $G_O$, there exists $\Vec{w} \in W$ such that $ \Vec{w} = \Vec{w}_1 + \Vec{w}_2$ (see \cref{footnote:path_def}). Then by the construction of $W$ there exists some index $i \in [1, c]$ such that every edge $(u, v) \in \pi_O$ satisfies $v - u \in \{\Vec{w}_1, \Vec{w}_2\} \subseteq  \mathcal{S}_i^1 \cup \mathcal{S}_i^2$. Let $\chi \in [1, c]$ denote this index. 
    \item Let $e_i$ denote the $i$th edge of $\pi_O$ for $i \in [1, k]$.   
    Note that by Property 2 of \cref{claim:functions}, it follows that $s_{e_i} \in S_I^{\chi}$ and $t_{e_i} \in T_I^{\chi}$ for $i \in [1, k]$. Then by Property 5 of \cref{lem:sourcewise_dp},  $(s_{e_i}, t_{e_{i+1}}) \in P_I$. By Property 2 of \cref{lem:sourcewise_dp}, path $\pi_{s_{e_i}, t_{e_{i+1}}}$ is a unique shortest $s_{e_i} \leadsto t_{e_{i+1}}$ path in $G_I$. 
    \item We now define a  corresponding path $\pi$ in $G$:
    $$
    \pi = P_{e_1} \circ \pi_{s_{e_1}, t_{e_2}} \circ P_{e_2} \circ \dots \circ P_{e_{k-1}} \circ  \pi_{s_{e_{k-1}}, t_{e_{k}}} \circ P_{e_k},
    $$
    Note that if $e_i = (x, y)$ and $e_{i+1} = (y, z)$, then $\pi_{s_{e_i}, t_{e_{i+1}}}$ corresponds to the unique shortest path between $s_{e_i} \in S_I^{\chi}$ and $t_{e_i} \in T_I^{\chi}$ in inner graph copy $G_I^y$. We add path $\pi$ to our set of critical paths $\Pi$. 
    

    \item We repeat this process for all critical paths in $\Pi_O$ to obtain our set of critical paths $\Pi$ in $G$. Each critical path $\pi \in \Pi$ is uniquely constructed from a critical path $\pi_O \in \Pi_O$, so $|\Pi| = |\Pi_O|$. We will use $\phi:\Pi \mapsto \Pi_O$ to denote the bijection between $\Pi$ and $\Pi_O$ implicit in the construction.

    

   
\end{itemize}

As the final step in our construction of obstacle product graph $G$, we remove all edges in $G$ that do not lie on some critical path $\pi \in \Pi$. Note that this will only remove edges in $G$ that are inside copies of the inner graph $G_I$. \cref{thm:spanner_obstacle_graph} summarizes some of the key properties of obstacle product graph $G$. The proof of \cref{thm:spanner_obstacle_graph} follows from straightforward calculations and arguments similar to those in \cref{subsec:emulator_analysis}.




\begin{theorem}[Properties of Obstacle Product Graph]
\label{thm:spanner_obstacle_graph}
For any $a, r, c > 0 \in \mathbb{Z}$, there exists a graph $G(a, r, c) = (V, E)$ with a set of critical paths $\Pi$ that has the following properties:
\begin{enumerate}
    \item The number of nodes, edges, and critical paths in $G$ is:
    \begin{align*}
        |V| &  = \Theta(a^3r^5c^{-23/2}), \\
        |E| & =   \Theta\left(c^{1/4} \cdot |V|\right),\\
        |\Pi| &=  \Theta\left(\frac{a^2r^4}{c^5}\right).
    \end{align*}
    \item Every path $\pi \in \Pi$ that passes through inner graph copy $G_I^v$ contains $\Theta(a'/c'^{3/2})$ edges that do not lie on any other path $\pi' \in \Pi$, for all $v \in V_O$. 
    
\end{enumerate}
\end{theorem}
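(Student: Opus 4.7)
The plan is to verify each assertion using the parameter settings $\psi = \Theta(r^3/c^{29/3})$, $a' = \Theta(r^2/c^{35/6})$, $c' = \Theta(c^{1/3})$ together with \cref{thm:modified_outer_graph} and \cref{lem:sourcewise_dp}. As noted in the text, this proof largely parallels the analysis in \cref{subsec:emulator_analysis}: the three counting claims in Property 1 reduce to routine arithmetic, while Property 2 requires lifting the inner-graph uniqueness of \cref{lem:sourcewise_dp} to the full graph $G$ using the injective attachment functions $\phi_1, \phi_2$.

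For the vertex count, I would split $V$ into nodes inside the inner graph copies (a contribution of $|V_O|\cdot|V_I| = \Theta(a^3 r)\cdot \Theta(a'^2)$) and internal nodes of the subdivided paths (a contribution of $|E_O|\cdot(\psi-1) = \Theta(a^3 r^2 / c^2)\cdot\Theta(\psi)$), then plug in $a'$ and $\psi$ and take the dominant term to obtain $\Theta(a^3 r^5 c^{-23/2})$. For the edge count, the same decomposition gives contributions $|V_O|\cdot|E_I|$ from inner graph copies and $|E_O|\cdot\psi$ from subdivided paths, where $|E_I|=\Theta(a'^2 c')$ by \cref{lem:sourcewise_dp} and where removing edges that lie on no critical path only affects a constant fraction of $E_I$ since every $\pi_{s,t}$ has length $\Theta(a' c'^{1/2})$ and $|P_I|=\Theta(a' c'^{5/2})$ together cover at least $\Theta(a'^2 c')$ edges. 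Simplifying then yields $|E|=\Theta(c^{1/4}|V|)$. Finally, $|\Pi|=|\Pi_O|=\Theta(a^2r^4/c^5)$ is immediate from the bijection $\phi:\Pi\to\Pi_O$ built into the construction together with \cref{thm:modified_outer_graph}.

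Property 2 is the only substantive step. I would fix a critical path $\pi\in\Pi$ and a copy $G_I^v$ through which it passes, and let $(s,t)\in P_I$ be the pair such that the segment of $\pi$ inside $G_I^v$ is the unique shortest $s\leadsto t$ path $\pi_{s,t}$ guaranteed by \cref{lem:sourcewise_dp}. Property 2 of \cref{lem:sourcewise_dp} then gives $\Theta(a'/c'^{3/2})$ edges on $\pi_{s,t}$ that lie on no other $\pi_{s',t'}$ for $(s',t')\in P_I$. To conclude, I need to rule out that such an edge $e$ lies on some other $\pi'\in\Pi$: if it did, $\pi'$ would have to traverse $\pi_{s,t}$ inside $G_I^v$, and so would enter $G_I^v$ at $s$ and leave at $t$; by Property 1 of \cref{claim:functions}, each of $s,t$ is incident to a unique subdivided path in $G$, forcing $\pi$ and $\pi'$ to share both subdivided paths incident to $G_I^v$. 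Translating back to the outer graph, $\phi(\pi)$ and $\phi(\pi')$ share a length-two subpath through $v$, which by \cref{claim:outer_graph_overlap} forces $\phi(\pi)=\phi(\pi')$ and hence $\pi=\pi'$.

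The main potential obstacle is simply bookkeeping the exponents of $r$ and $c$ in the counting claims, which demands careful substitution of the tuned parameters $\psi$, $a'$, $c'$; beyond that, the lifting argument for Property 2 is the only place where any non-numeric reasoning is needed, and it is short once the injectivity from \cref{claim:functions} is invoked.
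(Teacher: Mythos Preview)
Your treatment of $|V|$, $|\Pi|$, the upper bound on $|E|$, and Property~2 matches the paper almost verbatim (the paper phrases the last step via Property~3 of \cref{thm:modified_outer_graph} rather than \cref{claim:outer_graph_overlap}, but these are equivalent here).

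There is, however, a genuine gap in your lower bound for $|E|$. You argue that ``removing edges that lie on no critical path only affects a constant fraction of $E_I$'' because the pairs in $P_I$ together cover $\Theta(a'^2c')$ edges. That statement is about the inner graph $G_I$ in isolation. In the obstacle product graph $G$, the final cleanup step deletes every inner-graph edge that lies on no critical path in $\Pi$, and the inner-graph segments of paths in $\Pi$ passing through a given copy $G_I^v$ are only those $\pi_{s,t}$ for which $(s,t)$ is actually realized by some $\pi_O\in\Pi_O$ through $v$. You have not argued that every (or even a constant fraction of) pairs $(s,t)\in P_I$ are realized at every vertex $v$, so the per-copy survival claim is unjustified.

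The paper avoids this issue by reordering: it proves Property~2 \emph{first}, and then uses it to lower-bound $|E|$ globally via
\[
|E|\ \ge\ |\Pi|\cdot\Bigl(\tfrac{a}{4r}-1\Bigr)\cdot\Theta\!\left(\tfrac{a'}{c'^{3/2}}\right),
\]
since each $\pi\in\Pi$ visits at least $\tfrac{a}{4r}-1$ inner-graph copies and contributes $\Theta(a'/c'^{3/2})$ edges unique to it in each. Since you already prove Property~2 correctly, the fix is simply to swap the order and use this counting instead of the per-copy argument.
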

\begin{proof} We will prove the size bounds of $|E|$ last. 
    \begin{enumerate} 
        \item The number of nodes in $G$ is
        $$
        |V| = \psi \cdot |E_O| + |V_I||V_O| = \Theta\left( \frac{r^3}{c^{29/3}} \cdot \frac{a^3r^2}{c^2} + a'^2 \cdot a^3r \right) = \Theta\left(  \frac{a^3r^5}{c^{35/3}} \right)
        $$
        and the number of paths in $\Pi$ is $|\Pi| = |\Pi_O| = \Theta\left(\frac{a^2r^4}{c^5}\right)$, by \cref{thm:modified_outer_graph}.
        \item 
        By Property 3 of \cref{lem:sourcewise_dp}, if path $\pi \in \Pi$ passes through $G_I^v$, then the subpath $\pi_{s, t}$ of $\pi$ induced on $G_I^v$ contains $\Theta(a'/c'^{3/2})$ edges that do not lie on any other path $\pi_{s', t'}$ where $(s', t') \neq (s, t)$ and $(s', t') \in P_I$. Then if path $\pi \in \Pi$ does not contain $\Theta(a'/c'^{3/2})$ edges that do not lie on any other path $\pi' \in \Pi$, it follows that $(s, t) = (s', t')$.


        Let $P_{e_s}$ be a subdivided path incident to $s$ in $G_I^v$, and let $P_{e_t}$ be a subdivided path incident to $t$ in $G_I^v$. Note that subdivided paths $P_{e_s}$ and $P_{e_t}$ must exist by the construction of critical path $\pi \in \Pi$. By Property 1 of \cref{claim:functions}, $P_{e_s}$ and $P_{e_t}$ are the unique subdivided paths incident to $s$ and $t$ respectively in $G_I^v$. Then this implies that $P_{e_s}$ and $P_{e_t}$  are both subpaths of critical paths $\pi$ and $\pi'$.

        Let $\pi_O = \phi^{-1}(\pi)$, and let $\pi'_O = \phi^{-1}(\pi')$. Then if $P_{e_s}, P_{e_t} \subseteq \pi \cap \pi'$, it follows that $\pi_O$ and $\pi_O'$ intersect on at least two edges, contradicting Property 3 of \cref{thm:modified_outer_graph}. We conclude that $\pi$ and $\pi'$ do not intersect on any inner graph edges in $E(G_I^v)$ for any $v \in V_O$. 
    \end{enumerate}
    We are now ready to prove the size bounds of $|E|$. The number of edges in $G$ is at most 
    $$
    |E| \leq \psi \cdot |E_O| + |E_I||V_O| = O\left(  \frac{r^3}{c^{29/3}} \cdot \frac{a^3r^2}{c^2} + a'^2c' \cdot a^3r \right) =  O\left(  \frac{a^3r^5}{c^{34/3}} \right) = O(c^{1/3} \cdot |V|).
    $$
    To obtain a lower bound on $|E|$, observe the following:
    \begin{itemize}
        \item Every critical path $\pi \in \Pi$ passes through at least $\frac{a}{4r} - 1$ distinct inner graph copies by \cref{thm:modified_outer_graph}.
        \item By Property 3 of this theorem, every path $\pi \in \Pi$ that passes through inner graph copy $G_I^v$ contains $\Theta(a'/c'^{3/2})$ edges that do not lie on any other path $\pi' \in \Pi$, for all $v \in V_O$.
    \end{itemize}
    Then combining these three observations we obtain:
    $$
    |E| \geq |\Pi| \cdot \left(\frac{a}{4r} - 1\right) \cdot \Theta\left(\frac{a'}{c'^{3/2}}\right) = \Omega\left(  \frac{a^2r^4}{c^5} \cdot \frac{a}{r} \cdot \frac{r^2}{c^{19/3}} \right) = \Omega\left( \frac{a^3r^5}{c^{34/3}}  \right) = \Omega(c^{1/4} \cdot |V|).
    $$
\end{proof}

\section{Spanner Lower Bound Analysis}
\label{sec:spanner_analysis}

Fix a critical path $\pi^* \in \Pi$ with endpoints $s, t$ in $G$. We may assume that $s, t \in S_I$ are source nodes in two distinct copies of inner graph $G_I$ (e.g., by truncating path $\pi^*$ so that this condition holds).  Let $\vec{w}_1^* = (x, 0, x^2) \in W_1$ and $\vec{w}_2^* = (0, y, y^2) \in W_2$ be the two vectors used to construct path $\pi^*$. Let $\pi$ be any alternate simple $s \leadsto t$ path. The majority of our analysis will be towards proving that if $\pi$ takes a subdivided edge not in $\pi^*$, then $\pi$ is much longer than $\pi^*$; specifically, we show that in this case, $|\pi| - |\pi^*| = \Omega(\psi)$ (see Lemma \ref{lem:obstacle}). Once we prove this lemma, the remainder of the proof will follow from standard arguments in prior work.  

We begin our analysis by importing some basic definitions and claims from \cite{BodwinH22}.

\begin{definition}[cf. \cite{BodwinH22}, Moves]
\label{def:moves}
Let $\pi$ be a $u \leadsto v$ path in $G$ from some source $u \in S_I$ in some inner graph copy $G_I^{(1)}$ to some source $v \in S_I$ in some inner graph copy $G_I^{(2)}$. If no internal node of $\pi$ is a source node in $S_I$ in some inner graph copy $G_I^{(3)}$, where $G_I^{(3)} \neq G_I^{(1)}$, then we call $\pi$ a \textit{move}.  We define the following categories of moves in $G$.
\begin{itemize}
    \item \textbf{Forward Move.} We say that a path $\pi$ is a forward move if it travels from $u \in S_I$ to some sink $w \in T_I$ in $G_I^{(1)}$, and then takes a subdivided path $P_e$ from $w$ to reach a source $v \in S_I$ in $G_I^{(2)}$.
    \item \textbf{Backward Move.} We say that a path $\pi$ is a backward move if it
    travels from $u \in S_I$ to some source node $u_1 \in S_I$ in $G_I^{(1)}$, and then takes a subdivided path $P_e$ incident to $u_1$ to reach some sink $w \in T_I$  in $G_I^{(2)}$, and then travels to a source $v \in S_I$ in $G_I^{(2)}$. 
    \item \textbf{Zigzag Move.} We say that a path $\pi$ is a zigzag move if it travels from $u \in S_I$ to some source node $u_1 \in S_I$ in $G_I^{(1)}$, and then takes a subdivided path $P_{e_1}$ incident to $u_1$ to reach a sink $w_1 \in T_I$ in some inner graph copy $G_I^{(3)}$, and then travels to a sink $w_2 \in T_I$ in $G_I^{(3)}$ and takes a subdivided path $P_{e_2}$ incident to $w_2$ to reach a source vertex $v \in S_I$ in $G_I^{(2)}$. 
\end{itemize}
\end{definition}

\begin{claim}[cf. \cite{BodwinH22}]
    Each simple $s \leadsto t$ path $\pi$ in $G$ can be decomposed into a sequence of pairwise internally vertex-disjoint moves. 
\end{claim}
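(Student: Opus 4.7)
The plan is a one-pass greedy decomposition. Write $\pi = v_0, v_1, \dots, v_k$ with $v_0 = s$ and $v_k = t$, and inductively define cut indices $0 = i_0 < i_1 < \dots < i_m = k$ as follows. Given $i_j$, let $G_I^{(j)}$ denote the inner graph copy containing $v_{i_j}$; take $i_{j+1}$ to be the least index $i$ with $i_j < i \leq k$ such that either (i) $v_i \in S_I$ lies in an inner graph copy other than $G_I^{(j)}$, or (ii) $i = k$. The proposed decomposition is the sequence of subpaths $\pi_j := \pi[v_{i_j}, v_{i_{j+1}}]$ for $j = 0, \dots, m-1$.

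Each $\pi_j$ is a move in the sense of \cref{def:moves}: its two endpoints lie in $S_I$ (they are either cut points chosen by rule (i) --- hence sources in $S_I$ --- or one of $s, t$, which are sources by hypothesis), and by the minimality in the choice of $i_{j+1}$, no internal vertex of $\pi_j$ is a source lying in a copy different from $G_I^{(j)}$, which is exactly the defining condition of a move. Pairwise internal vertex-disjointness is immediate from the simplicity of $\pi$: the vertices $v_0, \dots, v_k$ are all distinct, so adjacent pieces $\pi_j$ and $\pi_{j+1}$ share only the cut vertex $v_{i_{j+1}}$ (an endpoint of both, internal to neither), and non-adjacent pieces share no vertex at all.

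The only subtlety is the tail. If no vertex strictly between $i_{m-1}$ and $k$ is a source in a new inner graph copy, then rule (ii) forces $i_m = k$, so that the last segment absorbs the remainder of $\pi$ and ends at $t \in S_I$; this piece still satisfies the move definition, even though it may remain inside a single inner graph copy. In the subsequent stretch analysis, one may further classify each $\pi_j$ as a forward, backward, or zigzag move by inspecting its subdivided-path traversals; this classification is well-defined because every internal vertex of a subdivided path has degree $2$ in $G$, so $\pi$ is forced to traverse any subdivided path it enters in full, in one of its two possible orientations. I expect the main place one has to be careful is precisely in matching the ``shape'' of each $\pi_j$ to one of the three listed categories, but this is not needed for the bare decomposition claim itself --- that part follows directly from the definition of a move together with the greedy choice of cut indices.
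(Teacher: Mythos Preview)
Your proof is correct and follows essentially the same greedy decomposition as the paper: both arguments walk along $\pi$ and cut at the first source node lying in a new inner graph copy, then invoke simplicity of $\pi$ for internal vertex-disjointness. Your treatment is slightly more careful in that you explicitly handle the tail case via rule~(ii) and correctly separate the bare decomposition (which is all the claim asserts) from the further classification into forward/backward/zigzag moves; the paper asserts the latter classification ``by construction'' without detail, whereas you rightly flag it as a separate verification.
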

\begin{proof}
 Let source node $s \in S_I$ be in inner graph copy $G_I^{(1)}$. Let $s_1 \in S_I$ be the first source node in $\pi$ that belongs to an inner graph copy $G_I^{(2)}$, where $G_I^{(1)} \neq G_I^{(2)}$. Then subpath $\pi[s, s_1]$ is either a forward, backward, or zigzag move, by construction. More generally, if $s_i \in S_I$ is in inner graph copy $G_I^{(i)}$, then  define $s_{i+1} \in S_I$ to be the first source node in subpath $\pi[s_i, t]$ that belongs to an inner graph copy $G_I^{(i+1)}$, where $G_I^{(i+1)} \neq G_I^{(i)}$. Then each subpath $\pi[s_i, s_{i+1}]$ is either a forward, backward, or zigzag move. Moreover, these subpaths will be pairwise internally vertex-disjoint since $\pi$ is simple.
\end{proof}

Note that critical path $\pi^*$ is an  $s \leadsto t$ path that decomposes into a sequence of forward moves. The forward moves in $\pi^*$ alternate between taking the  subdivided path corresponding to vector $\vec{w}_1 \in W_1$ and the subdivided path corresponding to vector $\vec{w}_2 \in W_2$. We will compare the length of the critical path $\pi^*$ with the length of the arbitrary path $\pi$ by comparing the moves in the move decompositions of the two paths. 

Borrowing notation from \cite{BodwinH22}, let $m_1, \dots, m_k$ be the move decomposition of a simple $s \leadsto t$ path $\pi$, where $m_i$ is a move from source node $s_i \in S_I$ in inner graph copy $G_I^{(i)}$ to source node $s_{i+1}$ in inner graph copy $G_I^{(i+1)}$, and $s_1 = s$ and $s_{k+1} = t$. If inner graph $G_I^{(i)}$ is the image of a vertex $v \in V_O$ in $G_O$ (i.e., $G_I^{(i)} = G_I^v$), then we let $\coord(G_I^{(i)})$ be the integer vector in $\mathbb{R}^2$ with the coordinates of $v$. 

\begin{definition}[cf. \cite{BodwinH22}, Move vector]
The move vector $\vec{m}_i$ corresponding to move $m_i$ is
$$
\vec{m}_i := \coord(G_I^{(i+1)}) - \coord(G_I^{(i)}).
$$    
\end{definition}


\noindent
We  define the\textit{ direction vector }$\Vec{d}^*$ associated with $\pi^*$ as $\vec{d}^* = (2x, 2y, -1)$. We will use the projection of $\Vec{m}_i$ onto $\Vec{d}^*$ in order to measure how much progress move $m_i$ is making in travelling towards the end node $t$. 

\begin{definition}[cf. \cite{BodwinH22}, Move distance] The move distance $d_i$ of move vector $\vec{m}_i$ is defined to be 
$$
d_i := \proj_{\vec{d}^*}\vec{m}_i,
$$
where $\proj_{\vec{d}^*}\vec{m}_i$ denotes the scalar projection of $\vec{m}_i$ onto $\vec{d}^*$. 
\end{definition}

We now define a quantity $\mu$ called the unit length of $\pi^*$ that relates distances in $G$ to Euclidean distances. 

\begin{definition}[cf. \cite{BodwinH22}, Unit length of $\pi^*$] Let $\nu = \proj_{\Vec{d}^*}(\coord(G_I^{(k+1)}) - \coord(G_I^{(1)}))$. We define the unit length $\mu$ of $\pi^*$ as 
$$
\mu := \frac{|\pi^*|}{\nu}.
$$
\end{definition}


We are ready to define the key quantity $\Delta(m_i)$ associated with each move $m_i$ in the move decomposition. The move length difference $\Delta(m_i)$ captures (in an amortized sense) the difference between the length $|m_i|$ of move $m_i$ and an equivalent move in $\pi^*$. 

\begin{definition}[cf. \cite{BodwinH22}, Move length difference] We define the move length difference of a move $m_i$ as
$$
\Delta(m_i) := |m_i| - \mu \cdot  d_i.
$$
\end{definition}

The following claim verifies our intuition about $\Delta(m_i)$.

\begin{claim}[cf. \cite{BodwinH22}]
    $\sum_i \Delta(m_i) = |\pi| - |\pi^*|$
    \label{claim:move_length_sum}
\end{claim}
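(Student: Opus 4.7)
The plan is to exploit the linearity of scalar projection together with the telescoping nature of the move vectors $\vec{m}_i$. By the definition of the move decomposition, the moves partition the edge set of $\pi$ into internally vertex-disjoint subpaths whose lengths sum to $|\pi|$; that is, $\sum_i |m_i| = |\pi|$. So after substituting the definition $\Delta(m_i) = |m_i| - \mu \cdot d_i$, what remains is to evaluate $\sum_i d_i$.

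For this, I would use the telescoping identity. Since $\vec{m}_i = \coord(G_I^{(i+1)}) - \coord(G_I^{(i)})$ and the move decomposition starts at $G_I^{(1)}$ (the inner graph copy containing $s$) and ends at $G_I^{(k+1)}$ (the inner graph copy containing $t$), we get
\[
\sum_{i=1}^k \vec{m}_i \;=\; \coord(G_I^{(k+1)}) - \coord(G_I^{(1)}).
\]
Because the scalar projection $\proj_{\vec{d}^*}(\cdot)$ is linear in its argument, this yields
\[
\sum_{i=1}^k d_i \;=\; \sum_{i=1}^k \proj_{\vec{d}^*} \vec{m}_i \;=\; \proj_{\vec{d}^*}\!\left( \coord(G_I^{(k+1)}) - \coord(G_I^{(1)}) \right) \;=\; \nu.
\]

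Combining these two observations gives
\[
\sum_{i=1}^k \Delta(m_i) \;=\; \sum_{i=1}^k |m_i| - \mu \sum_{i=1}^k d_i \;=\; |\pi| - \mu \nu \;=\; |\pi| - |\pi^*|,
\]
where the last equality uses the definition $\mu = |\pi^*|/\nu$. There is no real obstacle here: the claim is essentially bookkeeping, and the only thing to be careful about is that the move decomposition of $\pi$ starts in the same inner graph copy as $\pi^*$ (namely the one containing $s$) and ends in the same inner graph copy as $\pi^*$ (the one containing $t$), so that the telescoping sum of $\vec{m}_i$ matches the displacement used to define $\nu$. This is immediate from the setup, since both $\pi$ and $\pi^*$ are $s \leadsto t$ paths and the first/last source nodes in the decomposition are $s$ and $t$ themselves.
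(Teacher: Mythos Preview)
Your proof is correct and follows essentially the same approach as the paper: expand $\sum_i \Delta(m_i)$ using the definition, use $\sum_i |m_i| = |\pi|$, and evaluate $\sum_i d_i = \nu$ via linearity of projection and the telescoping identity $\sum_i \vec{m}_i = \coord(G_I^{(k+1)}) - \coord(G_I^{(1)})$, then conclude with $\mu\nu = |\pi^*|$. The paper's proof is slightly terser but the logic is identical.
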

\begin{proof}
    We restate this proof for completeness. We have:
    $$
    \sum_i \Delta(m_i) = \sum_i |m_i| - \mu \sum_i d_i = |\pi| - \frac{|\pi^*|}{\nu} \cdot \sum_i d_i = |\pi| - |\pi^*|.
    $$
    The final equality is due to the sequence of equalities:
    $$
    \sum_i d_i = \sum_i \proj_{\vec{d}^*}\vec{m}_i = \proj_{\vec{d}^*}(\coord(G_I^{(k+1)}) - \coord(G_I^{(1)})) = \nu,
    $$
    since $\pi$ being an $s \leadsto t$ path implies $\sum_i \Vec{m}_i = \coord(G_I^{(k+1)}) - \coord(G_I^{(1)})$.
\end{proof}

Now what remains is to lower bound $\Delta(m_i)$ for each of the moves $m_i$ in our move decomposition. The following technical claim will be useful in that respect. 

\begin{claim}
\label{claim:unit_length}
    $$
       \mu =   (2\psi + \Theta(a'c'^{1/2})) \cdot \frac{\|\Vec{d}^*\|}{x^2 + y^2}
    $$
\end{claim}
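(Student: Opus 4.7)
The plan is to compute $|\pi^*|$ and $\nu$ separately by decomposing $\pi^*$ into its natural sequence of forward moves.

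First I would write $\pi^*$ as a sequence of $k$ pairwise internally vertex-disjoint forward moves $m_1^*, \dots, m_k^*$. By the construction of critical paths in $\Pi$, these moves alternate between using the vector $\vec{w}_1^* = (x,0,x^2)$ and the vector $\vec{w}_2^* = (0,y,y^2)$ to select the subdivided edge leaving each inner graph copy. Each such forward move traverses one source-to-sink path of length $\dist_{G_I}(S_I^\chi,T_I^\chi) = \Theta(a'c'^{1/2})$ inside an inner graph copy (Property 3 of Lemma \ref{lem:sourcewise_dp}), followed by one subdivided path of length $\psi$. Hence
\[
|\pi^*| = k \cdot \bigl(\psi + \Theta(a'c'^{1/2})\bigr).
\]

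Next I would compute the total coordinate displacement. Since the $k$ moves alternate between $\vec{w}_1^*$ and $\vec{w}_2^*$, up to a single boundary move (which is absorbed by the implicit constants), we have
\[
\coord(G_I^{(k+1)}) - \coord(G_I^{(1)}) = \tfrac{k}{2}(\vec{w}_1^* + \vec{w}_2^*) = \tfrac{k}{2}\,(x, y, x^2 + y^2).
\]
Taking the scalar projection onto $\vec{d}^* = (2x, 2y, -1)$ and using $\langle (x,y,x^2+y^2),(2x,2y,-1)\rangle = 2x^2 + 2y^2 - (x^2+y^2) = x^2+y^2$ gives
\[
\nu = \frac{\langle \tfrac{k}{2}(x, y, x^2+y^2),\,(2x,2y,-1)\rangle}{\|\vec{d}^*\|} = \frac{k(x^2+y^2)}{2\|\vec{d}^*\|}.
\]

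Finally I would divide:
\[
\mu = \frac{|\pi^*|}{\nu} = \frac{k\bigl(\psi + \Theta(a'c'^{1/2})\bigr)}{k(x^2+y^2)/(2\|\vec{d}^*\|)} = \bigl(2\psi + \Theta(a'c'^{1/2})\bigr)\cdot \frac{\|\vec{d}^*\|}{x^2+y^2},
\]
where the factor of $2$ is absorbed into the $\Theta(\cdot)$ for the inner-graph term. The only subtle point is the parity of $k$: if $k$ is odd, the displacement differs from $\tfrac{k}{2}(\vec{w}_1^*+\vec{w}_2^*)$ by a single $\vec{w}_1^*$ or $\vec{w}_2^*$, and $|\pi^*|$ differs by a single move's length; since both corrections are lower-order relative to the leading terms (after truncating $\pi^*$ to source endpoints as assumed), they are absorbed into the $\Theta$. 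This minor bookkeeping around the endpoints is the main subtlety in an otherwise direct calculation.
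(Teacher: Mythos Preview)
Your argument is correct and follows essentially the same idea as the paper's proof: compute $|\pi^*|$ and $\nu$ from the forward-move decomposition of $\pi^*$ and divide. The one difference worth noting is that the paper avoids your parity discussion entirely. Rather than summing over all $k$ moves, the paper uses the fact that the truncated path satisfies $\coord(G_I^{(k+1)})-\coord(G_I^{(1)}) = i(\vec w_1^*+\vec w_2^*)$ exactly for some integer $i$ (coming from the bipartite structure of $G_O$, which forces $\pi_O$ to have even length). This lets them reduce the ratio $|\pi^*|/\nu$ to the ratio for any single pair of consecutive moves $m_1,m_2$, namely $(|m_1|+|m_2|)/\proj_{\vec d^*}(\vec m_1+\vec m_2)$, and the computation becomes a two-line calculation with no boundary correction. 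Your route works too, but the ``absorb the odd move into $\Theta$'' step is exactly the kind of bookkeeping the paper's reduction is designed to eliminate; in particular, the paper later needs the sharper exact identity $\mu = 2(\lambda+\psi)\|\vec d^*\|/(x^2+y^2)$ (their Claim~\ref{claim:unit_distance_exact}), for which the parity-absorption argument would not suffice.
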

\begin{proof}
Let $m_1, m_2$ be two consecutive moves in $\pi^*$.
Then
$$
m_1 = \pi_{s_1, t_1} \circ P_{e_1} \text{\qquad and \qquad} m_2 = \pi_{s_2, t_2} \circ P_{e_2},
$$
for some critical pairs $(s_1, t_1), (s_2, t_2) \in P_I$ and some subdivided paths $P_{e_1}$ and $P_{e_2}$. 
We may assume wlog that $\Vec{m}_1 = \Vec{w}_1^*$ and $\Vec{m}_2 = \Vec{w}_2^*$.  Then using the fact that 
$\coord(G_I^{(k+1)}) - \coord(G_I^{(1)}) = i(\Vec{w}_1^* + \Vec{w}_2^*)$ for some positive integer $i$, 
we calculate as follows:
\begin{align*}
    \mu & = \frac{|\pi^*|}{\proj_{\Vec{d}^*}(\coord(G_I^{(k+1)}) - \coord(G_I^{(1)}))} \\
    & = \frac{|m_1| + |m_2|}{\proj_{\Vec{d}^*}(\Vec{m}_1 + \Vec{m}_2)} \\
    & = \frac{|\pi_{s_1, t_1}| + |P_{e_1}| + |\pi_{s_2, t_2}| + |P_{e_2}|}{\proj_{\Vec{d}^*}(\Vec{m}_1 + \Vec{m}_2)} \\
    & = \frac{(\psi + \Theta(a'c'^{1/2})) + (\psi + \Theta(a'c'^{1/2}))}{\proj_{\Vec{d}^*}(\Vec{m}_1 + \Vec{m}_2)} & \text{Property 4 of \cref{lem:sourcewise_dp}} \\
    & = \frac{2\psi + \Theta(a'c'^{1/2}) }{\proj_{\Vec{d}^*}(\Vec{m}_1 + \Vec{m}_2)} & \text{\cref{claim:move_len}} \\
    & = \frac{(2\psi + \Theta(a'c'^{1/2}))\|\Vec{d}^*\|}{\langle \Vec{w}_1^* + \Vec{w}_2^*, \Vec{d}^*\rangle}  \\
    & = \frac{(2\psi + \Theta(a'c'^{1/2})) \|\Vec{d}^*\|}{x^2 + y^2}.
\end{align*}
\end{proof}

We will now lower bound the move length difference $\Delta(m_i)$ of backward moves $m_i$. 

\begin{claim}
    Let $m_i$ be a backward move. Then $\Delta(m_i)  = \Omega(\psi)$.
    \label{claim:backward}
\end{claim}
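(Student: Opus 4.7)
The plan is to combine two short observations about any backward move $m_i$. First, by the definition of a backward move (Definition~\ref{def:moves}), the move must contain an entire subdivided path $P_e$ of length exactly $\psi$, so $|m_i| \ge \psi$ regardless of whatever inner-graph walks occur at either end of the move.

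Second, I would argue that $\mu \cdot d_i \le 0$, so this projection term cannot cancel the $\psi$ already sitting inside $|m_i|$. To see this I trace the orientation convention imposed by the obstacle product: for an edge $e=(u,v)\in E_O$, the subdivided path $P_e$ has its sink endpoint $t_e\in T_I^u$ and its source endpoint $s_e\in S_I^v$. In a backward move, $P_e$ is traversed in reverse, from a source $u_1\in S_I^{(1)}$ to a sink $w\in T_I^{(2)}$; this forces $G_I^{(1)}=G_I^v$ and $G_I^{(2)}=G_I^u$, so the move vector satisfies
\[
\vec{m}_i \;=\; \coord(G_I^{(2)}) - \coord(G_I^{(1)}) \;=\; u-v \;=\; -\vec{w}
\]
for some $\vec{w} := v - u \in W_1 \cup W_2$. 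Writing $\vec{w} = (x',0,x'^2) \in W_1$ (the $W_2$ case being symmetric) and $\vec{d}^* = (2x, 2y, -1)$, a short calculation gives $\langle \vec{d}^*, \vec{w}\rangle = x'(2x-x') \ge 0$, where we use the fact that the intervals $\{I_i\}$ defining $W_1$ and $W_2$ all sit inside $[r/2, r]$, and so $x, x' \in [r/2, r]$ gives $2x - x' \ge 0$. Hence $d_i \le 0$.

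Combining these with $\mu > 0$ (which is immediate from $|\pi^*|>0$ and $\nu > 0$), I conclude
\[
\Delta(m_i) \;=\; |m_i| - \mu \cdot d_i \;\ge\; \psi,
\]
which is $\Omega(\psi)$ as claimed. The main (mild) obstacle is really just the bookkeeping to confirm that $\vec{m}_i$ equals a negated element of $W_1 \cup W_2$; once that is nailed down, the argument reduces to a one-line inner-product bound using the placement of the intervals $I_i$ inside $[r/2, r]$. The finer stripe-separation property of Claim~\ref{claim:stripes} does not seem to be needed here, but will presumably be relevant for neighboring claims such as \cref{claim:sink} and for the analysis of zigzag moves.
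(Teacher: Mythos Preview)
Your proof is correct and follows essentially the same approach as the paper's own argument: both identify that a backward move has $\vec{m}_i = -\vec{w}$ for some $\vec{w}\in W_1\cup W_2$, use the interval placement $x,x'\in[r/2,r]$ to show $\langle \vec{d}^*, -\vec{w}\rangle \le 0$ and hence $d_i\le 0$, and then conclude $\Delta(m_i)\ge |m_i|\ge \psi$ from the subdivided path contained in the move. Your extra bookkeeping about the orientation convention and the explicit mention that $\mu>0$ are fine additions the paper leaves implicit.
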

\begin{proof}
    Note that if $m_i$ is a backward move, then it takes a subdivided path $P_e$ that corresponds to a vector $-\Vec{w} \in \mathbb{R}^2$, where  in $\Vec{w} \in W_1 \cup W_2$. We may assume wlog that $\Vec{w} \in W_1$, as the case where $\Vec{w} \in W_2$ is symmetric. Let $\Vec{w} = (w, 0, w^2) \in W_1$. Then we obtain the following upper bound on $d_i$:
    $$
    d_i = \proj_{\vec{d}^*}\vec{m}_i = \frac{\langle \Vec{m}_i, \Vec{d}^* \rangle}{\|\Vec{d}^*\|} =  \frac{\langle -\Vec{w}, \Vec{d}^* \rangle}{\|\Vec{d}^*\|} =  \frac{-2xw + w^2}{\|\Vec{d}^*\|} = \frac{(w - 2x)w}{\|\Vec{d}^*\|} \leq 0,
    $$
    where the final inequality follows from the fact that $x, w \in [r/2, r]$ and so $w \leq 2x$ due to Claim \ref{claim:intervals}. Then
    $$
    \Delta(m_i) = |m_i| \geq \psi,
    $$
    since $m_i$ contains a subdivided path $P_e$ of length $|P_e| = \psi$. 
\end{proof}

We will now lower bound the move length difference $\Delta(m_i)$ of zigzag moves $m_i$. 

\begin{claim}
    Let $m_i$ be a zigzag move. Then $\Delta(m_i) = \Omega(\psi)$.
    \label{claim:zigzag}
\end{claim}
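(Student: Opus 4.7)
The plan is to lower bound $\Delta(m_i) = |m_i| - \mu d_i$ by $\Omega(\psi)$ by combining the trivial length bound $|m_i| \geq 2\psi$ (since the move contains the two subdivided paths $P_{e_1}$ and $P_{e_2}$, each of length $\psi$) with a matching upper bound on $\mu d_i$. The move vector decomposes as
\[
\vec{m}_i = \coord(G_I^{(2)}) - \coord(G_I^{(1)}) = \vec{w}' - \vec{w},
\]
where $\vec{w} \in W_1 \cup W_2$ is the outer-graph vector of $e_1$ (which is traversed backwards, producing the $-\vec{w}$ contribution) and $\vec{w}' \in W_1 \cup W_2$ is the outer-graph vector of $e_2$ (traversed forwards).

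Substituting into \cref{claim:unit_length}, $\mu d_i = (2\psi + \Theta(a'c'^{1/2})) \cdot \langle \vec{m}_i, \vec{d}^*\rangle / (x^2 + y^2)$. In our parameter regime, $\psi \gg a'c'^{1/2}$, so it suffices to show that $\langle \vec{m}_i, \vec{d}^*\rangle \leq (x^2+y^2) - \Omega(r^2)$; this would give $\mu d_i \leq (1 - \Omega(1)) \cdot 2\psi + o(\psi)$ and hence $\Delta(m_i) = \Omega(\psi)$, since $x^2 + y^2 = \Theta(r^2)$.

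I would split the analysis into four subcases according to which of $W_1, W_2$ contains each of $\vec{w}, \vec{w}'$. The two \emph{mixed} subcases are handled directly by \cref{lem:strongly-convex}: if $\vec{w} \in W_1$ and $\vec{w}' \in W_2$, then $\vec{m}_i \in W_2 - W_1 \subseteq W'$, and the inner-product bounds established inside that lemma's proof immediately yield $\langle \vec{m}_i, \vec{d}^*\rangle \leq y^2$. The subcase $\vec{w} \in W_2, \vec{w}' \in W_1$ is symmetric with bound $x^2$. Since $x, y \in [r/2, r]$, either bound falls short of $x^2 + y^2$ by at least $\min(x^2, y^2) \geq r^2/4$, which is the required gap.

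The main obstacle is the \emph{same-set} subcase, because neither $W_1 - W_1$ nor $W_2 - W_2$ is contained in $W'$, so \cref{lem:strongly-convex} does not apply off the shelf. For $\vec{w}, \vec{w}' \in W_1$ I would instead expand $\langle \vec{m}_i, \vec{d}^*\rangle$ directly, obtaining a quadratic of the form $(w_1' - w_1)(2x - w_1 - w_1')$; maximizing over $w_1' \in [r/2, r]$ gives $(x - w_1)^2$, attained at $w_1' = x$. Since $w_1, x \in [r/2, r]$, this is at most $r^2/4 \leq y^2$, and the $W_2$-$W_2$ case is symmetric. Intuitively, a same-set zigzag lies in a plane parallel to the $(x,z)$- or $(y,z)$-plane and therefore misses the entire $\vec{w}_2^*$ (respectively $\vec{w}_1^*$) contribution to $x^2 + y^2$. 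Combining all four subcases gives $\langle \vec{m}_i, \vec{d}^*\rangle \leq (x^2 + y^2) - r^2/4$, which upon substitution into \cref{claim:unit_length} yields $\Delta(m_i) \geq \Omega(\psi)$ as required.
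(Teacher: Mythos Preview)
Your argument is correct. The paper's proof is structurally simpler: it decomposes $d_i$ additively into the backward step and the forward step, observes (as in \cref{claim:backward}) that the backward step has nonpositive projection onto $\vec{d}^*$, and then bounds the forward step exactly as one would bound a single forward move, obtaining $\langle \vec{m}_i, \vec{d}^*\rangle \leq x^2$ (wlog). To turn $x^2/(x^2+y^2)$ into a constant strictly below $1/2$, the paper invokes the stripe property $|x-y|\le r/(16c^3)$, giving $x^2+y^2 \ge (1-1/c)\cdot 2x^2$ and hence $\Delta(m_i)\ge \tfrac{2}{3}\psi - o(\psi)$.

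Your approach instead keeps $\vec{m}_i = \vec{w}'-\vec{w}$ together and bounds the inner product directly. This yields the tighter estimate $\langle \vec{m}_i,\vec{d}^*\rangle \le (x-w_1)^2 \le r^2/4$ in the same-set case, which already falls short of $x^2+y^2$ by $\Omega(r^2)$ without any appeal to the stripe structure. That is a genuine simplification for this claim (the stripes are essential elsewhere in the analysis, e.g.\ \cref{claim:sink}, but not here). One remark: your two ``mixed'' subcases are in fact vacuous. Because $G_O$ is bipartite with $W_1$-edges going $L\to R$ and $W_2$-edges going $R\to L$, in any zigzag both $e_1$ and $e_2$ emanate from the same outer node (that of $G_I^{(3)}$), so their vectors lie in the \emph{same} set $W_1$ or $W_2$. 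Thus only your ``main obstacle'' case actually occurs, and your direct quadratic bound there already suffices.
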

\begin{proof}
Let $G_I^{(1)},  G_I^{(2)}, G_I^{(3)}$ and  $P_{e_1}, P_{e_2}$ be the inner graph copies and subdivided paths, respectively, associated with zigzag move $m_i$, as defined in \cref{def:moves}. Note that subdivided path $P_{e_1}$ corresponds to a vector $-\Vec{w}_1 \in \mathbb{R}^2$, where $\Vec{w}_1$  in $\Vec{w}_1 \in W_1 \cup W_2$. Then by an argument identical to that of \cref{claim:backward}, 
$$
\proj_{\vec{d}^*} (\coord( G_I^{(2)}) - \coord( G_I^{(1)})) = \proj_{\vec{d}^*} (-\Vec{w}_1)  \leq 0.
$$
On the other hand, $P_{e_2}$ corresponds to a vector $\vec{w}_2  \in W_1 \cup W_2$. We may assume wlog that $\vec{w}_2 = (w, 0, w^2) \in W_1$, as the case where $\Vec{w} \in W_2$ is symmetric. Then
$$
\proj_{\vec{d}^*} (\coord( G_I^{(3)}) - \coord( G_I^{(2)})) = \proj_{\vec{d}^*} \Vec{w}_2  = \frac{2xw - w^2}{\|\Vec{d}^*\|}.
$$
Taking the above two inequalities together we get
$$
d_i = \proj_{\vec{d}^*}\vec{m}_i = \proj_{\vec{d}^*} (\coord( G_I^{(2)}) - \coord( G_I^{(1)})) + \proj_{\vec{d}^*} (\coord( G_I^{(3)}) - \coord( G_I^{(2)})) \geq  \frac{2xw - w^2}{\|\Vec{d}^*\|}.
$$
Now note the following brief observations:
\begin{itemize}
    \item $d_i \geq 0$, since $2xw - w^2 \geq 0$, as $w \leq 2x$ by \cref{claim:stripes},
    \item  $|x -y| \leq  r/(16c^3)$ by \cref{claim:stripes}, since $\Vec{w}_1^* = (x, 0, x^2)$ and $\Vec{w}_2^* = (0, y, y^2)$ satisfy $\Vec{w}_1^*, \Vec{w}_2^* \in \mathcal{S}_i^1 \cup \mathcal{S}_i^2$ for some $i \in [1, c]$, and
    \item $r \leq 2x$, since $r/2 \leq x$ by \cref{claim:stripes}.  
\end{itemize}
 
Then we can lower bound $\Delta(m_i)$ as follows:
\begin{align*}
    \Delta(m_i) & = |m_i| - \mu \cdot d_i \\
    & \geq |P_{e_1}| + |P_{e_2}| - \mu \cdot d_i \\
    & \geq 2\psi -  (2\psi + \Theta(a'c'^{1/2})) \cdot \frac{\|\Vec{d}^*\|}{x^2 + y^2} \cdot \frac{2xw - w^2}{\|\Vec{d}^*\|} & \text{ by \cref{claim:unit_length}} \\
    & \geq 2\psi -  (2\psi + \Theta(a'c'^{1/2})) \cdot \frac{2xw - w^2}{x^2 + y^2}  \\
    & \geq 2\psi -  (2\psi + \Theta(a'c'^{1/2})) \cdot \frac{2xw - w^2}{x^2 + (x - r/(16c^3))^2} & \text{ by \cref{claim:stripes}}  \\
    & \geq 2\psi -  (2\psi + \Theta(a'c'^{1/2})) \cdot \frac{2xw - w^2}{2x^2 -xr/c}   \\
    & \geq 2\psi -  (2\psi + \Theta(a'c'^{1/2})) \cdot \frac{2xw - w^2}{(1 - 1/c) \cdot 2x^2}. & \text{ by \cref{claim:stripes}}
\end{align*}
Note that $2xw - w^2$ is maximized when $w = x$, so $2xw - w^2 \leq x^2$. Then we conclude that
\begin{align*}
     \Delta(m_i) & \geq 2\psi -  (2\psi + \Theta(a'c'^{1/2})) \cdot \frac{x^2}{(1 - 1/c) \cdot 2x^2} \\
     & \geq 2\psi - (2\psi + \Theta(a'c'^{1/2})) \cdot \frac{1}{(3/4) \cdot 2} & \text{ by taking $c \geq 4$} \\
          & = 2\psi - (2\psi + \Theta(a'c'^{1/2})) \cdot \frac{2}{3}  \\
          & \geq \frac{2\psi}{3} - \Theta(a'c'^{1/2}) = \Omega(\psi) & \text{since $a'c'^{1/2} = o(\psi)$.}
\end{align*}
Recall that $\psi = \Theta_c\left( r^3 \right)$ and $a'c'^{1/2} = \Theta_c(r^2)$.  Since input parameter $r$ will grow with the input size while $c$ will be taken to be a sufficiently large constant, the final equality follows. 
\end{proof}

    

All subdivided paths $P_e$ in $\pi^*$ correspond to vectors $\vec{w}_1^* = (x, 0, x^2) \in W_1$ and $\Vec{w}_2^* = (0, y, y^2) \in W_2$. Moreover, by construction there exists some index $i \in [1, c]$ such that $\Vec{w}_1^* \in \mathcal{S}_i^1$ and $\Vec{w}_2^* \in \mathcal{S}_i^2$. Denote this index as $i^*$. 
 Now consider a forward move $m$. Move $m$ begins at source node $u \in S_I^i$ in inner graph copy $G_I^{(1)}$ and contains a subdivided path $P_e$ as a suffix. This subdivided path $P_e$ is incident to a unique sink $w \in T_I^j$ in $G_I^{(1)}$.

To analyze forward moves in our move decomposition of $\pi$, it will be useful to partition them into three different sets:
\begin{itemize}
    \item \textbf{Forward Move with Different Sink Stripe.} We say that a move $m$ is a forward move with different sink stripe than $\pi^*$ if $m$ is a forward move from source node $u \in S_I^i$ to sink node $w \in T_I^j$, where $j \neq i^*$. 
    \item \textbf{Forward Move with Different Source Stripe.} We say that a move $m$ is a forward move with different source stripe than $\pi^*$ if $m$ is a forward move from source node $u \in S_I^i$ to sink node $w \in T_I^j$, where $i \neq i^*$ and $j = i^*$. 
    \item \textbf{Forward Move with Same Stripes.} We say that a move $m$ is a forward move with the same  stripes as $\pi^*$ if $m$ is a forward move from source node $u \in S_I^i$ to sink node $w \in T_I^j$, where $i = j = i^*$. 
\end{itemize}

\begin{claim}
    Let $m_i$ be a forward move with a different sink stripe than $\pi^*$. Then $\Delta(m_i) \geq \Omega(\psi/c^2)$.
    \label{claim:sink}
\end{claim}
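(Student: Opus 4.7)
The plan is to compute $\Delta(m_i) = |m_i| - \mu \cdot d_i$ directly, exploiting the stripe separation between the sink stripe $j$ of $m_i$ and the stripe $i^*$ of $\pi^*$ to show a quantitative loss in the projection $d_i$.

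First, I identify the move vector. Since $m_i$ is a forward move ending with subdivided path $P_e$ incident to a sink $w \in T_I^j$ in $G_I^{(1)}$, and our attachment functions $\phi_1, \phi_2$ send stripe-$j$ vectors exactly to $S_I^j \times T_I^j$ (\cref{claim:functions}), the vector associated with edge $e$ lies in $\mathcal{S}_j^1 \cup \mathcal{S}_j^2$, and $\vec{m}_i$ equals this vector by construction of the edge set of $G_O$. Assume WLOG $\vec{m}_i = \vec{w} = (w,0,w^2) \in \mathcal{S}_j^1$ with $w \in I_j$ (the $W_2$ case is symmetric in $y$). Then
\[d_i = \proj_{\vec{d}^*} \vec{m}_i = \frac{2xw - w^2}{\|\vec{d}^*\|}.\]

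The heart of the argument is the identity
\[\frac{2xw - w^2}{2x^2} = \frac{1}{2} - \frac{(w-x)^2}{2x^2},\]
combined with two consequences of \cref{claim:stripes}: (i) $|x-y| \leq r/(16c^3)$ since $\vec{w}_1^*, \vec{w}_2^*$ both lie in stripe $i^*$, giving $x^2 + y^2 = 2x^2(1 + O(1/c^3))$; and (ii) $|w-x| \geq r/(2c)$ since $w \in I_j$, $x \in I_{i^*}$, $j \neq i^*$, so $(w-x)^2/(2x^2) \geq 1/(8c^2)$ (using $x \leq r$). Substituting into \cref{claim:unit_length},
\[\mu \cdot d_i = (2\psi + \Theta(a'c'^{1/2})) \cdot \frac{2xw - w^2}{x^2+y^2} \leq (2\psi + \Theta(a'c'^{1/2}))\left(\tfrac{1}{2} - \tfrac{1}{8c^2} + O(1/c^3)\right),\]
which for a sufficiently large constant $c$ is at most $\psi + O(a'c'^{1/2}) - \psi/(16c^2)$. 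Since $|m_i| \geq |P_e| = \psi$, this yields $\Delta(m_i) \geq \psi/(16c^2) - O(a'c'^{1/2})$.

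Finally, I invoke the parameter choices $\psi = \Theta(r^3/c^{29/3})$ and $a'c'^{1/2} = \Theta(r^2/c^{17/3})$, so $\psi/(c^2 \cdot a'c'^{1/2}) = \Theta(r/c^6)$, which tends to infinity in $r$ since $c$ is held constant. The $O(a'c'^{1/2})$ error is therefore absorbed into $\psi/c^2$ for $r$ large enough, giving $\Delta(m_i) = \Omega(\psi/c^2)$. The main obstacle is the algebraic manipulation extracting the clean $(w-x)^2$ penalty term from the projection; this identity is what converts the discrete stripe separation in the outer graph into a quantitative lower bound on the move length difference, and the ``same-stripe'' closeness $|x-y| \leq r/(16c^3)$ of the two vectors defining $\pi^*$ is essential for the approximation $x^2 + y^2 \approx 2x^2$ to make the identity directly usable.
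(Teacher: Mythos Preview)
Your proof is correct and follows essentially the same approach as the paper: both arguments identify $\vec{m}_i$ as a stripe-$j$ vector with $j\neq i^*$, use the identity $2xw-w^2=x^2-(w-x)^2$ together with the stripe separation $|w-x|\ge r/(2c)$ to extract a $\Theta(1/c^2)$ deficit in the projection, use the same-stripe bound $|x-y|\le r/(16c^3)$ to replace $x^2+y^2$ by $2x^2$ up to $O(1/c^3)$, and then absorb the $\Theta(a'c'^{1/2})$ error into $\psi/c^2$ via the parameter settings. The only difference is cosmetic: the paper carries explicit fractions through the chain of inequalities, whereas you package the same computation with an $O(1/c^3)$ remainder.
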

\begin{proof}
    Since $m_i$ is a forward move with a different sink stripe than $\pi^*$, it follows that the subdivided path $P_e$ taken by $m_i$ corresponds to a vector $\Vec{w}$ in $\mathcal{S}_j^1 \cup \mathcal{S}_j^2$, where $j \neq i^*$. We may assume wlog that $\Vec{w} = (w, 0, w^2) \in \mathcal{S}_j^1 \subseteq W_1$, as the case where $\Vec{w} \in W_2$ is symmetric. Moreover, by  \cref{claim:stripes}, $|w - x| \geq r/(2c)$ and $x \leq r$. Then we obtain the following inequality for $d_i$:
    $$
    d_i = \proj_{\Vec{d}^*}\Vec{m}_i = \frac{\langle \Vec{w}, \Vec{d}^* \rangle}{\|\Vec{d}^*\|} = \frac{ 2xw - w^2 }{\|\Vec{d}^*\|} = \frac{ x^2 - (w-x)^2 }{\|\Vec{d}^*\|} \leq \frac{ x^2 - r^2/(4c^2) }{\|\Vec{d}^*\|} \leq \left(1- \frac{1}{4c^2}\right)  \cdot \frac{  x^2 }{\|\Vec{d}^*\|}.
    $$
    By \cref{claim:intervals}, since $\Vec{w}_1^*, \Vec{w}_2^* \in \mathcal{S}_{i^*}^1 \cup \mathcal{S}_{i^*}^2$, it follows that $|x - y| \leq \frac{r}{16c^3}$, so $$x^2 + y^2 \geq x^2 + \left(x-  \frac{r}{16c^3}  \right)^2 \geq 2x^2 - \frac{2xr}{16c^3} \geq \left( 1 - \frac{1}{8c^3}\right) \cdot 2x^2.$$ 
    Combining  our two inequalities, we can now bound $\Delta(m_i)$ as follows:
    \begin{align*}
        \Delta(m_i) & \geq |m_i| - \mu \cdot d_i \\
        & \geq |P_e| - (2 \psi+\Theta(a'c'^{1/2})) \cdot \frac{\|\Vec{d}^*\|}{x^2 + y^2} \cdot  \left(1- \frac{1}{4c^2}\right)  \cdot \frac{  x^2 }{\|\Vec{d}^*\|} \\
        & \geq \psi - (2 \psi+\Theta(a'c'^{1/2})) \cdot \frac{x^2}{x^2 + y^2} \cdot  \left(1- \frac{1}{4c^2}\right) \\
        & \geq \psi - (2 \psi+\Theta(a'c'^{1/2})) \cdot \frac{x^2}{\left( 1 - \frac{1}{8c^3}\right) \cdot 2x^2} \cdot  \left(1- \frac{1}{4c^2}\right) \\
        & \geq \psi - (\psi+\Theta(a'c'^{1/2})) \cdot \frac{\left(1- \frac{1}{4c^2}\right)}{\left( 1 - \frac{1}{8c^3}\right) }  \\
        & = \psi - (\psi+\Theta(a'c'^{1/2})) \cdot \frac{8c^3 \cdot (4c^2-1)}{4c^2 \cdot (8c^3 - 1)}  \\
        & = \psi - (\psi+\Theta(a'c'^{1/2})) \cdot \frac{8c^3 - 2c}{8c^3 - 1}  \\
        & \geq \psi - (\psi+\Theta(a'c'^{1/2})) \cdot \left( 1 - \frac{1}{4c^2}  \right) \\
        & \geq \frac{1}{4c^2} \cdot \psi - \Theta(a'c'^{1/2})  \\ & = \Omega(\psi/c^2).
    \end{align*}
    The final equality holds since $ \psi = \Theta_c\left( r^3 \right)$ and $a'c'^{1/2} = \Theta_c(r^2)$, and input parameter $r$ will grow with the input size while $c$ will be taken to be a sufficiently large constant.
\end{proof}

Note that since the critical path $\pi^*$ has associated vectors $\Vec{w}_1^*, \vec{w}_2^* \in \mathcal{S}_{i^*}^1 \cup \mathcal{S}_{i^*}^2$, it follows that path $\pi^*$ contains a path $\pi_{u, v}$  between critical pair $(u, v) \in S_I^{i^*} \times T_I^{i^*}$. By Property 5 of Lemma \ref{lem:sourcewise_dp}, $\dist_{G_I}(u, v) = \dist_{G_I}(S_I^{i^*}, T_I^{i^*})$. Let $\lambda = \dist_{G_I}(S_I^{i^*}, T_I^{i^*})$. Then we have the following two claims. 
\begin{claim}
    Let $m$ be a  move in $\pi^*$. Then $|m| = \lambda + \psi$.
    \label{claim:move_len}
\end{claim}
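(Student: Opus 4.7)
The plan is to unpack the structure of a move $m$ in $\pi^*$ directly from the construction of critical paths in Section 5.3. Each move in $\pi^*$ is a forward move by construction, so $m$ has the form $m = \pi_{u,v} \circ P_e$ where $\pi_{u,v}$ is the unique shortest path inside some inner graph copy $G_I^{(1)}$ between a critical pair $(u,v) \in P_I$, and $P_e$ is a subdivided path of length exactly $\psi$. Hence $|m| = |\pi_{u,v}| + \psi$, and the entire task is to show $|\pi_{u,v}| = \lambda$.

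The key observation is that since $\pi^*$ has associated vectors $\vec{w}_1^* \in \mathcal{S}_{i^*}^1$ and $\vec{w}_2^* \in \mathcal{S}_{i^*}^2$ (the two stripes sharing index $i^*$), the maps $\phi_1, \phi_2$ from Section 5.3 send these vectors into $S_I^{i^*} \times T_I^{i^*}$ by Property 2 of Claim \ref{claim:functions}. Therefore the endpoints $u = s_{e_i}$ and $v = t_{e_{i+1}}$ of the inner-graph subpath of $m$ satisfy $u \in S_I^{i^*}$ and $v \in T_I^{i^*}$. By Property 4 of Lemma \ref{lem:sourcewise_dp}, every critical pair $(u,v) \in (S_I^{i^*} \times T_I^{i^*}) \cap P_I$ satisfies
\[
\dist_{G_I}(u,v) \leq \dist_{G_I}(S_I^{i^*}, T_I^{i^*}) = \lambda,
\]
while by definition $\dist_{G_I}(u,v) \geq \dist_{G_I}(S_I^{i^*}, T_I^{i^*}) = \lambda$ since $u \in S_I^{i^*}$ and $v \in T_I^{i^*}$. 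Combining, $\dist_{G_I}(u,v) = \lambda$.

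Finally, the subpath $\pi_{u,v}$ used in $\pi^*$ is the unique shortest $u \leadsto v$ path in $G_I$ by Property 2 of Lemma \ref{lem:sourcewise_dp}, so $|\pi_{u,v}| = \dist_{G_I}(u,v) = \lambda$. Adding the trailing subdivided path of length $\psi$ yields $|m| = \lambda + \psi$. There is no real obstacle here; the only thing one needs to be careful about is verifying that the two endpoints land in the matching stripe index $i^*$, which is exactly what the stripe-respecting definition of $W(r,c)$ together with Claim \ref{claim:functions} guarantees.
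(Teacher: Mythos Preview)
Your proof is correct and follows essentially the same approach as the paper's own (very terse) proof: decompose the move as an inner-graph critical path concatenated with a subdivided path, and use the stripe-respecting structure together with Property~4 of Lemma~\ref{lem:sourcewise_dp} to pin down the inner-graph path length as exactly $\lambda$. You spell out the two-sided inequality $\dist_{G_I}(u,v) \le \lambda$ and $\dist_{G_I}(u,v) \ge \lambda$ more carefully than the paper does, which is a welcome clarification.
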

\begin{proof}
Each move $m$ in $\pi^*$ is of the form $m = \pi_{s, t} \circ P_e$, where $\pi_{s, t}$ is an $s \leadsto t$ path such that $(s, t) \in S_I^{i^*} \times T_I^{i^*}$ of length $|\pi_{s, t}| = \lambda$ and $P_e$ is a subdivided edge of length $|P_e| = \psi$. 
\end{proof}
\begin{claim}
    \label{claim:unit_distance_exact}
    $$\mu = \frac{2(\lambda + \psi)\|\Vec{d}^*\|}{x^2 + y^2}$$
\end{claim}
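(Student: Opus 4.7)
The plan is to essentially re-run the computation behind \cref{claim:unit_length}, but replacing the asymptotic estimate from Property 4 of \cref{lem:sourcewise_dp} with the \emph{exact} move length provided by \cref{claim:move_len}. The key point is that the critical path $\pi^*$ uses only vectors $\Vec{w}_1^*, \Vec{w}_2^* \in \mathcal{S}_{i^*}^1 \cup \mathcal{S}_{i^*}^2$, so every inner graph subpath traversed by $\pi^*$ is a $u \leadsto v$ path with $(u, v) \in S_I^{i^*} \times T_I^{i^*}$; by Property 5 of \cref{lem:sourcewise_dp} these all have length exactly $\lambda = \dist_{G_I}(S_I^{i^*}, T_I^{i^*})$, and each subdivided path has length exactly $\psi$.

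I would begin by unfolding $\mu = |\pi^*|/\nu$ and expressing both the numerator and denominator in terms of a pair of consecutive moves $m_1, m_2$ in $\pi^*$, using that $\pi^*$ is a periodic concatenation of such pairs and that $\coord(G_I^{(k+1)}) - \coord(G_I^{(1)})$ is an integer multiple of $\Vec{w}_1^* + \Vec{w}_2^*$. By \cref{claim:move_len}, each of $m_1, m_2$ has length exactly $\lambda + \psi$, so the numerator contribution of the pair is $2(\lambda + \psi)$, not merely $2\psi + \Theta(a'c'^{1/2})$.

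For the denominator, I would compute the inner product directly:
\[
\langle \Vec{w}_1^* + \Vec{w}_2^*, \Vec{d}^* \rangle = \langle (x, y, x^2 + y^2), (2x, 2y, -1) \rangle = 2x^2 + 2y^2 - (x^2 + y^2) = x^2 + y^2,
\]
so that $\proj_{\Vec{d}^*}(\Vec{w}_1^* + \Vec{w}_2^*) = (x^2+y^2)/\|\Vec{d}^*\|$. Combining these two calculations,
\[
\mu = \frac{|m_1| + |m_2|}{\proj_{\Vec{d}^*}(\Vec{m}_1 + \Vec{m}_2)} = \frac{2(\lambda + \psi)}{(x^2 + y^2)/\|\Vec{d}^*\|} = \frac{2(\lambda + \psi)\|\Vec{d}^*\|}{x^2 + y^2},
\]
which is the claimed identity. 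There is no real obstacle here: the proof is a drop-in replacement for the one for \cref{claim:unit_length}, made cleaner by the fact that all moves of $\pi^*$ use stripe $i^*$ and hence have identical lengths, removing the need for any asymptotic slack.
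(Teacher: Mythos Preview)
Your proposal is correct and follows essentially the same approach as the paper: take two consecutive moves of $\pi^*$, use \cref{claim:move_len} to get each has length exactly $\lambda+\psi$, compute $\langle \Vec{w}_1^*+\Vec{w}_2^*, \Vec{d}^*\rangle = x^2+y^2$, and conclude. The only difference is that you spell out the inner product calculation explicitly, whereas the paper leaves it implicit (it was already done in \cref{claim:unit_length}).
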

\begin{proof}
Let $m_1, m_2$ be two consecutive moves in $\pi^*$. We may assume wlog that $\Vec{m}_1 = \Vec{w}_1^*$ and $\Vec{m}_2 = \Vec{w}_2^*$.  Then using the fact that 
$\coord(G_I^{(k+1)}) - \coord(G_I^{(1)}) = i(\Vec{w}_1^* + \Vec{w}_2^*)$ for some positive integer $i$, 
we calculate as follows:
\begin{align*}
    \mu & = \frac{|\pi^*|}{\proj_{\Vec{d}^*}(\coord(G_I^{(k+1)}) - \coord(G_I^{(1)}))} \\
    & = \frac{|m_1| + |m_2|}{\proj_{\Vec{d}^*}(\Vec{m}_1 + \Vec{m}_2)} \\
    & = \frac{2(\lambda + \psi) }{\proj_{\Vec{d}^*}(\Vec{m}_1 + \Vec{m}_2)} & \text{\cref{claim:move_len}} \\
    & = \frac{2(\lambda + \psi) \|\Vec{d}^*\|}{\langle \Vec{w}_1^* + \Vec{w}_2^*, \Vec{d}^*\rangle}  \\
    & = \frac{2(\lambda + \psi)\|\Vec{d}^*\|}{x^2 + y^2}.
\end{align*}
\end{proof}

We can now bound the move length difference of forward moves with different source stripes than $\pi^*$. These moves can actually have negative move length difference, which will pose difficulties in our analysis later.

\begin{claim}
    Let $m_i$ be a forward move with a different source stripe than $\pi^*$. Then $\Delta(m_i) \geq - \psi/c^3$. 
    \label{claim:source}
\end{claim}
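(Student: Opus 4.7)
\textbf{Proof Proposal for Claim \ref{claim:source}.}
The plan is to follow closely the template of Claim \ref{claim:sink}, but now exploit the fact that the subdivided path taken by $m_i$ lies in the \emph{same} stripe $i^*$ as $\vec{w}_1^*$ (or $\vec{w}_2^*$), rather than a different stripe. This will eliminate the $\Omega(\psi/c^2)$ slack that was used to certify strict positivity in Claim \ref{claim:sink}, and instead yield only the weaker lower bound $-\psi/c^3$ (which is all that is claimed).

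The first step is to identify the vector associated with the subdivided path in $m_i$. Since $m_i$ is a forward move whose sink stripe equals $i^*$, its final subdivided path $P_e$ is incident to a sink in $T_I^{i^*}$. By the definition of $\phi_1$ and $\phi_2$ (see \cref{claim:functions}), this forces the corresponding vector $\vec{w}$ of $P_e$ to lie in $\mathcal{S}_{i^*}^1 \cup \mathcal{S}_{i^*}^2$. WLOG take $\vec{w}=(w,0,w^2)\in \mathcal{S}_{i^*}^1$; the other case is symmetric. Then \cref{claim:stripes} gives $|w-x| \leq r/(16c^3)$, since $\vec{w}_1^* = (x,0,x^2)$ also lies in $\mathcal{S}_{i^*}^1$.

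Next, I would compute $d_i$ and upper-bound it using the proximity of $w$ and $x$:
\[
d_i \;=\; \frac{\langle \vec{w}, \vec{d}^*\rangle}{\|\vec{d}^*\|} \;=\; \frac{2xw-w^2}{\|\vec{d}^*\|} \;=\; \frac{x^2-(w-x)^2}{\|\vec{d}^*\|} \;\leq\; \frac{x^2}{\|\vec{d}^*\|}.
\]
Combined with the trivial bound $|m_i|\ge |P_e|=\psi$ and the exact expression $\mu = 2(\lambda+\psi)\|\vec{d}^*\|/(x^2+y^2)$ from \cref{claim:unit_distance_exact}, this gives
\[
\Delta(m_i) \;\geq\; \psi - \mu \cdot d_i \;\geq\; \psi - \frac{2(\lambda+\psi)\, x^2}{x^2+y^2}.
\]

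The final step is to plug in the same-stripe estimate $|x-y|\le r/(16c^3)$ from \cref{claim:stripes} to deduce $x^2+y^2 \geq (1-1/(8c^3))\cdot 2x^2$, giving
\[
\Delta(m_i) \;\geq\; \psi - \frac{\lambda+\psi}{1-1/(8c^3)} \;\geq\; -\lambda - \frac{\lambda+\psi}{4c^3} \;\geq\; -\frac{\psi}{c^3},
\]
where the last inequality uses $\lambda = \Theta(a'c'^{1/2})$ while $\psi/c^3 = \Theta(r^3/c^{38/3})$, so for $r$ sufficiently large compared to $c$ (which holds since $r$ grows with the input size while $c$ is a constant), the term $\lambda$ is absorbed into $\psi/(4c^3)$.

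The main obstacle is purely calculational: one must be careful that the approximation $x^2+y^2 \approx 2x^2$ is tight enough (only $O(1/c^3)$ loss, not $O(1/c^2)$) since $x$ and $y$ come from the same sub-interval $I_{i^*}$ of width $r/(16c^3)$. The key conceptual difference from the sink-stripe claim is that here $d_i$ can be as large as $x^2/\|\vec{d}^*\|$ (achieved when $w=x$), so there is no $(1-\Theta(1/c^2))$ savings in $d_i$, and therefore $\Delta(m_i)$ is merely bounded below by a small negative quantity rather than a positive one.
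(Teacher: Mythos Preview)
Your proof is correct and follows essentially the same route as the paper's: lower-bound $|m_i|\ge\psi$, upper-bound $d_i$ by $x^2/\|\vec d^*\|$ via $2wx-w^2\le x^2$, plug in the exact formula for $\mu$ from \cref{claim:unit_distance_exact}, and then use $x^2+y^2\ge(1-1/(8c^3))\cdot 2x^2$ together with $\lambda=o(\psi/c^3)$ to conclude. One small remark: your opening paragraph correctly deduces that $\vec w\in\mathcal S_{i^*}^1\cup\mathcal S_{i^*}^2$ and records $|w-x|\le r/(16c^3)$, but you never actually use this---the bound $2wx-w^2\le x^2$ holds for all $w$, which is exactly what the paper invokes (it does not even mention which stripe $\vec w$ lies in).
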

\begin{proof}
Move $m_i$ containing a subdivided path $P_e$, so $m_i \geq |P_e| = \psi$.  The subdivided path $P_e$ taken by $m_i$ corresponds to a vector $\Vec{w}$ in $W_1 \cup W_2$. We may assume wlog that $\Vec{w} = (w, 0, w^2) \in W_1$, as the case where $\Vec{w} \in W_2$ is symmetric. Then  we have the following bound on $\Delta(m_i)$:
\begin{align*}
    \Delta(m_i) & \geq |m_i| - \mu \cdot d_i \\
    & \geq \psi -\frac{ 2(\lambda + \psi) \cdot \|\Vec{d}^*\|}{x^2 + y^2} \cdot \frac{2wx - w^2}{\|\Vec{d}^*\|}  & \text{\cref{claim:unit_distance_exact}} \\
    & \geq \psi -  \frac{2(\lambda + \psi) \cdot x^2}{x^2 + y^2}& \text{since $2wx - w^2$ is maximized when $w =x$} \\
        & \geq \psi - \frac{ 2(\lambda + \psi) \cdot x^2}{\left( 1 - \frac{1}{8c^3} \right) \cdot 2x^2} & \text{\cref{claim:stripes}} \\
        & \geq \psi - (\lambda+ \psi) \cdot \frac{8c^3}{8c^3 - 1} \\
        & \geq -\frac{ \psi}{4c^3} - 2\lambda & \text{sufficiently large $c$} \\
        & \geq - \psi/c^3,
\end{align*}
where the final inequality follows from the fact that $ \psi/c^3 = \Theta_c\left( r^3 \right)$ and $\lambda = \Theta(a'c'^{1/2}) = \Theta_c(r^2)$, and that input parameter $r$ will grow with the input size of our graph construction. 
\end{proof}

Note that while a forward move $m_i$ with a different source stripe than $\pi^*$ can have negative move length difference, this move is always preceded by a move $m_{i-1}$ with a large positive move length difference. This will allow us to `charge' the negative length of forward moves with different source stripes to the moves preceding them in an amortized argument in \cref{claim:obstacle_1}. We now verify our desired claim about forward moves with different source stripes.

\begin{claim}
Every forward move $m_i$ with a different source stripe than $\pi^*$ is immediately preceded by a move $m_{i-1}$, where $m_{i-1}$ is either a backward move, a zigzag move, or a forward move with different sink stripe than $\pi^*$. 
\label{claim:preceded}
\end{claim}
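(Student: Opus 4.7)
The plan is a short case analysis on the type of move $m_{i-1}$, leveraging the stripe-preserving structure of the inner-graph attachment maps $\phi_1, \phi_2$.

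First I would check that $m_{i-1}$ actually exists. By the construction of $\Pi$, every vector used by $\pi^*$ lies in $\mathcal{S}_{i^*}^1 \cup \mathcal{S}_{i^*}^2$, and the definitions of $\phi_1, \phi_2$ then force the start node $s$ of $\pi^*$ to lie in $S_I^{i^*}$ of the initial inner graph copy. Since any move decomposition of $\pi$ has its first move beginning at $s$, that first move has source stripe $i^*$, so a forward move $m_i$ with source stripe $\neq i^*$ must satisfy $i \geq 2$ and thus has a predecessor $m_{i-1}$.

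Next I would enumerate the three possibilities for $m_{i-1}$ from \cref{def:moves}: forward, backward, or zigzag. Backward and zigzag moves are already listed among the allowed predecessors, so the only case requiring work is when $m_{i-1}$ is itself a forward move. In this case, $m_{i-1}$ terminates by traversing some subdivided path $P_e$ from a sink $w \in T_I$ in the inner graph copy preceding $G_I^{(1)}$ to the source $u$ at which $m_i$ begins. The underlying outer-graph edge $e$ corresponds to a vector $\vec{w} \in \mathcal{S}_j^k$ for some stripe index $j$ and side $k \in \{1, 2\}$. The key step is to invoke Property 2 of \cref{claim:functions}, namely $\phi_k(\mathcal{S}_j^k) \subseteq S_I^j \times T_I^j$, which forces simultaneously $w \in T_I^j$ and $u \in S_I^j$. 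Since $\{S_I^{i'}\}$ partitions $S_I$, and $u$ lies in $S_I^\ell$ with $\ell \neq i^*$ by hypothesis, we conclude $j = \ell \neq i^*$. Hence the sink of $m_{i-1}$ lies in $T_I^\ell$ with $\ell \neq i^*$, making $m_{i-1}$ a forward move with a different sink stripe than $\pi^*$.

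I do not anticipate any real obstacle here; the proof is essentially bookkeeping around the orientation of $P_e$ (oriented sink-to-source in the construction, so the endpoint in $G_I^{(1)}$ is the source side of $\phi_k(\vec{w})$) together with the stripe-preserving structure of $\phi_1, \phi_2$. The only care needed is to correctly identify which endpoint of $P_e$ is the sink and which is the source, so that the stripe deduction propagates in the correct direction from $u$ back to $w$.
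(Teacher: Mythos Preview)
Your proof is correct and follows essentially the same approach as the paper's: both argue that if $m_{i-1}$ is a forward move, the subdivided path it ends with lands at the source $u$, and the stripe-preserving structure of $\phi_1,\phi_2$ forces the sink endpoint of that path to share $u$'s stripe $\ell \neq i^*$. You are slightly more careful than the paper in two respects: you explicitly verify that $m_{i-1}$ exists (by noting $s \in S_I^{i^*}$ so the first move cannot have a different source stripe), and you invoke Property~2 of \cref{claim:functions} rather than Property~1, which is the more precise citation for the stripe conclusion.
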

\begin{proof}
    Note that by definition, the start node of move $m_i$ is a source node $u$ in $S_I^i$, where $i \neq i^*$. If move $m_{i-1}$ is a forward move, then it must take a subdivided path $P_e$ that ends at source node $u$. Since $u \not \in S_I^{i^*}$, Property 1 of  \cref{claim:functions} implies that if move $m_{i-1}$ is a forward move, then it must be a forward move with different sink stripe than $\pi^*$. We conclude that move $m_{i-1}$ is either a backward move, a zigzag move, or a forward move with different sink stripe than $\pi^*$. 
\end{proof}

\begin{claim}
\label{claim:same}
    Let $m_i$ be a forward move with the same source and sink stripe as $\pi^*$. Let $P_e$ be the subdivided path used by move $m_i$, and let $\Vec{w} \in W_1 \cup W_2$ be the vector corresponding to $P_e$. Then we have the following bound on $\Delta(m_i)$:
    \begin{itemize}
        \item if $\Vec{w} \in W_1$, then $\Delta(m_i)  \geq (\lambda + \psi) -  \frac{2(\lambda + \psi)x^2}{x^2+y^2}$, and
        \item if $\Vec{w} \in W_2$, then $\Delta(m_i) \geq  (\lambda + \psi) - \frac{2(\lambda + \psi)y^2}{x^2+y^2}$.
    \end{itemize}
    Moreover, this implies that $\Delta(m_i) \geq -\psi/c^3$. 
\end{claim}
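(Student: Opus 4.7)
The plan is to exploit Property 5 of Lemma~\ref{lem:sourcewise_dp} to pin down $|m_i|$ from below by $\lambda + \psi$, and then compare against $\mu \cdot d_i$ using the exact formula from Claim~\ref{claim:unit_distance_exact}. Since $m_i$ is a forward move with source and sink both in stripe $i^*$, the inner-graph portion travels from some $u \in S_I^{i^*}$ to some $w \in T_I^{i^*}$ within $G_I^{(1)}$. Property 5 ensures $(u,w) \in S_I^{i^*} \times T_I^{i^*} \subseteq P_I$ and that $\dist_{G_I}(u,w) \le \dist_{G_I}(S_I^{i^*}, T_I^{i^*}) = \lambda$. But $\lambda$ is the \emph{minimum} such distance by definition, so in fact $\dist_{G_I}(u,w) = \lambda$. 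Hence the inner portion contributes at least $\lambda$ and the subdivided path $P_e$ contributes $\psi$, giving $|m_i| \ge \lambda + \psi$.

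For the two displayed bounds, I would then plug in the formula $\mu = \frac{2(\lambda+\psi)\|\vec{d}^*\|}{x^2+y^2}$ from Claim~\ref{claim:unit_distance_exact}. If $\vec{w} = (w,0,w^2) \in W_1$, then $\vec{m}_i = \vec{w}$ so $d_i = \langle \vec{w}, \vec{d}^* \rangle / \|\vec{d}^*\| = (2xw - w^2)/\|\vec{d}^*\|$, and the identity $2xw - w^2 = x^2 - (w-x)^2$ shows that $d_i$ is maximized at $w=x$. This yields $\mu \cdot d_i \le \frac{2(\lambda+\psi)x^2}{x^2+y^2}$, and the first displayed bound follows from $\Delta(m_i) = |m_i| - \mu \cdot d_i$. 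The $W_2$ case is completely symmetric, with the roles of $x$ and $y$ exchanged.

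The moreover part requires a little more care. Rewriting the $W_1$ bound as
$$
\Delta(m_i) \;\ge\; (\lambda+\psi) \cdot \frac{y^2 - x^2}{x^2+y^2},
$$
it is immediate that the only potentially negative case is $x > y$ (by symmetry, the $W_2$ case reduces to $y > x$). In that case I would use $|x-y| \le r/(16c^3)$ from Claim~\ref{claim:stripes}, together with $x+y \le 2r$ and $x^2+y^2 \ge r^2/2$, to bound $|y^2-x^2|/(x^2+y^2) \le \frac{1}{4c^3}$, yielding $\Delta(m_i) \ge -(\lambda+\psi)/(4c^3)$. Using the parameter scalings $\psi = \Theta_c(r^3)$ and $\lambda = \Theta(a'c'^{1/2}) = \Theta_c(r^2)$, so that $\lambda = o(\psi)$ as $r$ grows, this simplifies to $\Delta(m_i) \ge -\psi/c^3$ for sufficiently large $r$.

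The main obstacle I anticipate is recognizing where to invoke Property 5 of Lemma~\ref{lem:sourcewise_dp}: this property, together with the stripe structure baked into $W_1$, $W_2$, and $W$, is precisely what forces \emph{every} forward move within the correct stripe to achieve the optimum inner-graph distance $\lambda$, rather than only the trivial lower bound $\psi$ from the subdivided path. Without this, the $2(\lambda+\psi)x^2/(x^2+y^2)$ term could exceed $\psi$ outright and the moreover bound would fail. Once this observation is in place, the rest is a routine projection and convexity calculation of the same flavor as Claims~\ref{claim:backward}, \ref{claim:zigzag}, \ref{claim:sink}, and \ref{claim:source}.
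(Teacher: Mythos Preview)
Your proposal is correct and follows essentially the same approach as the paper: invoke Property~5 of Lemma~\ref{lem:sourcewise_dp} to pin down the inner-graph portion of the move at exactly $\lambda$, combine with the exact formula for $\mu$ from Claim~\ref{claim:unit_distance_exact}, and bound $\langle \vec w,\vec d^*\rangle \le x^2$ (resp.\ $y^2$) via the completed-square identity. For the ``moreover'' part your factorization $y^2-x^2=(y-x)(y+x)$ together with $|x-y|\le r/(16c^3)$, $x+y\le 2r$, $x^2+y^2\ge r^2/2$ is a slightly different but equally clean route to the paper's bound, which instead reuses the inequality $x^2+y^2 \ge (1-1/(8c^3))\cdot 2x^2$ established in Claim~\ref{claim:sink}.
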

\begin{proof}
Let $P_e$ be the subdivided path used by move $m_i$. Let $s_i \in S_I^{i^*}$ be the first source node in $m_i$ and let $t_i \in T_I^{i^*}$ be the last sink node in $m_i$.  By Lemma Property 5 of \ref{lem:sourcewise_dp}, $$\dist_{G_I}(s_i, t_i) = \dist_{G_I}(S_I^{i^*}, T_I^{i^*}) = \lambda.$$ It follows that $|m_i| = |\pi_{s_i, t_i}| + |P_e| = \lambda + \psi$. We now split our analysis into two cases:
\begin{itemize}
    \item  $\Vec{w} \in W_1$. Let $\Vec{w} = (w, 0, w^2)$. Then $\langle \Vec{w}, \Vec{d}^* \rangle = 2wx - w^2 \leq x^2$, by the same argument as in \cref{claim:source}. 
    It follows that by \cref{claim:unit_distance_exact}, 
    $$\Delta(m_i)  = |m_i| - \mu \cdot d_i = (\lambda+\psi) - \frac{2(\lambda+\psi) \cdot \|\Vec{d}^*\|}{x^2 + y^2} \cdot \frac{\langle \Vec{w}, \Vec{d}^* \rangle}{\|\Vec{d}^*\|} \geq (\lambda+\psi) -  \frac{2(\lambda+\psi)x^2}{x^2+y^2}.$$
    \item  $\Vec{w} \in W_2$. Let $\Vec{w} = (0, w, w^2)$. Then $\langle \Vec{w}, \Vec{d}^* \rangle = 2wy - w^2 \leq y^2$ by the same argument as in \cref{claim:source}.  It follows that by \cref{claim:unit_distance_exact}, 
    $$\Delta(m_i)  = |m_i| - \mu \cdot d_i = (\lambda+\psi) -  \frac{2(\lambda+\psi)\cdot\|\Vec{d}^*\|}{x^2 + y^2} \cdot \frac{\langle \Vec{w}, \Vec{d}^* \rangle}{\|\Vec{d}^*\|} \geq (\lambda+\psi) -  \frac{2(\lambda+\psi)y^2}{x^2+y^2}.$$
\end{itemize}
The final claim follows from the inequalities $x^2 + y^2 \geq \left( 1 - \frac{1}{8c^3}\right) \cdot 2x^2$ and  $x^2 + y^2 \geq \left( 1 - \frac{1}{8c^3}\right) \cdot 2y^2$ proved in \cref{claim:sink} and following from \cref{claim:stripes}. Formally,
\begin{align*}
   \Delta(m_i)  & \geq \min \left( (\lambda+\psi) -  \frac{2(\lambda+\psi)x^2}{x^2+y^2}, \quad 
(\lambda+\psi) -  \frac{2(\lambda+\psi)y^2}{x^2+y^2}
\right)\\
& \geq
\min \left( (\lambda+\psi) -  \frac{2(\lambda+\psi)x^2}{\left( 1 - \frac{1}{8c^3}\right) \cdot 2x^2}, \quad 
(\lambda+\psi) -  \frac{2(\lambda+\psi)y^2}{\left( 1 - \frac{1}{8c^3}\right) \cdot 2y^2}
\right)   \\
& \geq
\min \left( (\lambda+\psi) -   \frac{8c^3(\lambda+\psi)}{8c^3 - 1}, \quad 
(\lambda+\psi) -  \frac{8c^3(\lambda+\psi)}{8c^3 - 1}
\right) \\
& \geq -\psi/c^3. & \text{since $\lambda = o(\psi/c^3)$} 
\end{align*}
\end{proof}

We are now ready to prove our amortized argument about the sums of move lengths $\sum_i \Delta(m_i)$. Our goal will be to prove a lower bound for $\sum_i \Delta(m_i)$ and then use \cref{claim:move_length_sum} to obtain a lower bound for $|\pi| - |\pi^*|$.  

\begin{claim}
    Let $\pi$ be an $s \leadsto t$ path in $G$ that contains a backwards move,  a zigzag move, or a forward move with different sink stripe than $\pi^*$. Then $|\pi| - |\pi^*| = \Omega(\psi/c^2)$. 
    \label{claim:obstacle_1}
\end{claim}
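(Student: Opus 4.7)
The plan is to apply \cref{claim:move_length_sum}, reducing the task to showing $\sum_i \Delta(m_i) \ge \Omega(\psi/c^2)$, where $m_1, \dots, m_k$ is the move decomposition of $\pi$. I partition the moves into four classes: (B) \emph{bad moves}---backward, zigzag, or forward-with-different-sink-stripe moves; (D) forward moves with different source stripe; ($S_1$) same-stripe forward moves whose subdivided path uses a vector in $W_1$; and ($S_2$) the analogue for $W_2$. Write $b=|B|$, $d=|D|$, $n_1=|S_1|$, $n_2=|S_2|$; the hypothesis gives $b\ge 1$.

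By \cref{claim:backward,claim:zigzag,claim:sink}, each bad move contributes $\Delta(m_i)\ge\Omega(\psi/c^2)$, so $B$ contributes at least $b\cdot\Omega(\psi/c^2)$. By \cref{claim:source}, each move in $D$ contributes $\ge -\psi/c^3$, and by \cref{claim:preceded} each such move is immediately preceded by a distinct bad move, so $d\le b$ and $D$ contributes at least $-b\cdot\psi/c^3$. The key obstacle is handling $S_1\cup S_2$, whose cardinality can be as large as $\Theta(a/r)$, a priori dwarfing $b$, and whose per-move bound from \cref{claim:same} may be as weak as $-\psi/c^3$.

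To handle same-stripe moves, I use the sharper sign-sensitive bounds in \cref{claim:same}: $\Delta\ge(\lambda+\psi)\tfrac{y^2-x^2}{x^2+y^2}$ for $S_1$-moves and $\Delta\ge(\lambda+\psi)\tfrac{x^2-y^2}{x^2+y^2}$ for $S_2$-moves. These have opposite signs, so the same-stripe contribution telescopes to at least $(\lambda+\psi)(n_1-n_2)\tfrac{y^2-x^2}{x^2+y^2}$. Since $\Vec{w}_1^*,\Vec{w}_2^*\in \mathcal{S}_{i^*}^1\cup\mathcal{S}_{i^*}^2$, \cref{claim:stripes} gives $|y^2-x^2|/(x^2+y^2)\le 1/(4c^3)$, and consequently the same-stripe contribution is at least $-|n_1-n_2|\cdot O(\psi/c^3)$.

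It remains to show $|n_1-n_2|=O(b)$. I equate the first coordinate of $\sum_i\Vec{m}_i$ with that of $t-s=k^*(\Vec{w}_1^*+\Vec{w}_2^*)$ for the appropriate integer $k^*$. Every forward move whose subdivided path uses a $W_1$ vector has sink stripe equal to that of its vector, so any move in $S_1$ (same stripes) or in the $W_1$-subclass $D_1\subseteq D$ (sink stripe $=i^*$) contributes a first coordinate in $I_{i^*}$, i.e.\ $x\pm O(r/c^3)$ by \cref{claim:stripes}; $W_2$-vector moves contribute $0$; and each bad move contributes at most $O(r)$ in absolute value (backward moves contribute $-\Vec{w}$, zigzag moves contribute $\Vec{w}_2-\Vec{w}_1$, diff-sink moves contribute a single $W_1\cup W_2$ vector). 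Solving yields $(n_1+d_1)x=k^*x\pm O((n_1+d_1)r/c^3)\pm O(br)$, so $n_1+d_1=k^*\pm O(b)$ for sufficiently large constant $c$; symmetrically, $n_2+d_2=k^*\pm O(b)$. Since $d_1,d_2\le d\le b$, this gives $|n_1-n_2|=O(b)$. Hence $S_1\cup S_2$ contributes at least $-O(b\psi/c^3)$, and combining all contributions yields $\sum_i\Delta(m_i)\ge b\cdot\Omega(\psi/c^2)-O(b\psi/c^3)=\Omega(\psi/c^2)$ for sufficiently large constant $c$ and any $b\ge 1$, completing the proof via \cref{claim:move_length_sum}.
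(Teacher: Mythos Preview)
Your overall strategy is sound up through the point where you bound the same-stripe contribution by $-|n_1-n_2|\cdot O(\psi/c^3)$, and the reduction to controlling $|n_1-n_2|$ is a nice idea. However, the first-coordinate argument you use to deduce $|n_1-n_2|=O(b)$ has a gap. From
\[
(n_1+d_1)x \;=\; k^*x \;\pm\; O\!\bigl((n_1+d_1)\,r/c^3\bigr)\;\pm\;O(br)
\]
you divide by $x=\Theta(r)$ and conclude $n_1+d_1=k^*\pm O(b)$. But the middle error term only gives $\pm O((n_1+d_1)/c^3)$, which for $n_1+d_1=\Theta(k^*)$ is $\Theta(k^*/c^3)$. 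Since $c$ is a fixed constant while $k^*=\Theta(a/r)$ grows with the construction, this is not $O(b)$ when $b$ is small (e.g.\ $b=1$). Propagating this error, your same-stripe contribution is only bounded below by $-O(k^*\psi/c^6)$, which swamps the $+\Omega(\psi/c^2)$ gain from the single bad move.

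The conclusion $|n_1-n_2|=O(b)$ is in fact true, but it requires a structural argument rather than a metric one: forward and backward moves flip the $L/R$ side of the current inner graph while zigzag moves do not, and forward moves from the $L$-side use $W_1$ vectors while those from the $R$-side use $W_2$ vectors. A parity count on $L\to R$ versus $R\to L$ transitions then yields $|n_1-n_2|\le O(b)$ directly. Alternatively (and this is what the paper does), one can bypass the global count entirely: consecutive same-stripe forward moves necessarily alternate $W_1/W_2$, so by \cref{claim:same} each adjacent pair sums to $\ge 0$; the at most one unpaired move in each maximal same-stripe run contributes $\ge -\psi/c^3$, and the number of runs is $O(b+d)=O(b)$, giving the same $-O(b\psi/c^3)$ bound via a local charging scheme. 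Either fix repairs your argument; as written, the coordinate computation does not.
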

\begin{proof}
Let $m_1, \dots, m_{k}$ be the move decomposition of $\pi$. Recall that $\sum_i \Delta(m_i) = |\pi| - |\pi^*|$ by \cref{claim:move_length_sum}.  In order to prove this claim, we will need to introduce a simple charging scheme to our analysis. We define the following operation on pairs of moves $m_i, m_j$ in the move decomposition:
\begin{align*}
    \Delta(m_i) &:= \Delta(m_i) + \Delta(m_j) \\
    \Delta(m_j) &:= 0.
\end{align*}
If this operation is performed, we say that we charged the cost of move $m_j$ to move $m_i$. Note that the sum $\sum_i \Delta(m_i)$ is invariant under this operation.

By our earlier analysis, if $\Delta(m_i) < 0$, then move move $m_i$ is either a forward move with a different source stripe than $\pi^*$ or a forward move with the same source and sink stripes as $\pi^*$. We handle these two cases separately as follows:
\begin{itemize}
    \item If move $m_i$ is  a forward move with a different source stripe than $\pi^*$, then by Claim \ref{claim:preceded}, move $m_{i-1}$ is either a backward move, a zigzag move, or a forward move with different sink stripe than $\pi^*$. In any of these cases $\Delta(m_{i-1}) \geq  \Omega(\psi/c^2)$ by Claims \ref{claim:backward}, \ref{claim:zigzag}, and \ref{claim:sink}. Then since $\Delta(m_i) \geq - \psi/c^3$ by \cref{claim:source}, we can safely charge the cost of move $m_i$ to move $m_{i-1}$ by taking $c$ to be a sufficiently large constant.
    \item If move $m_i$ is a forward move with the same source and sink stripes as $\pi^*$, then we will again split our analysis as follows: 
    \begin{itemize}
        \item If move $m_{i-1}$ or move $m_{i-2}$ is a backward move, a zigzag move, or a forward move with different sink stripe than $\pi^*$, then we will charge the cost of move $m_i$ to move $m_{i-1}$ or move $m_{i-2}$, respectively. 
        \item If move $m_{i-1}$ is a forward move with the same source and sink stripe as $\pi^*$, then since forward moves alternate between taking subdivided paths in $W_1$ and $W_2$, it follows that by \cref{claim:same},
        $$
        \Delta(m_{i-1}) + \Delta(m_i) \geq \left( (\lambda+\psi) -  \frac{2(\lambda+\psi)x^2}{x^2+y^2} \right) + \left( (\lambda+\psi) -  \frac{2(\lambda+\psi)y^2}{x^2+y^2} \right) = 0.
        $$
        We charge $\Delta(m_i)$ to $\Delta(m_{i-1})$. After this operation, $\Delta(m_{i-1}) \geq \Delta(m_i) \geq 0$. 
        \item If move $m_{i-1}$ is  a forward move with a different source stripe than $\pi^*$, then by Claim \ref{claim:preceded}, move $m_{i-2}$ is either a backward move, a zigzag move, or a forward move with different sink stripe than $\pi^*$, so move $m_i$ is handled by the first case. 
    \end{itemize}
\end{itemize}
Note that the above  cases cover all possibilities for a move $m_i$, $i \not \in \{1, 2\}$. To handle moves $m_i$, $i \in \{1, 2\}$, we will charge moves $m_{i+1}$ and $m_{i+2}$ in an analogous way as above. 

After completing our charging operations, every forward move $m_i$ with the same source and sink stripes as $\pi^*$ now satisfies $\Delta(m_i) \geq 0$, so we may ignore all of these moves. Likewise, every forward move $m_i$ with a different source stripe than $\pi^*$ satisfies $\Delta(m_i) = 0$, so we may ignore all of these moves.

Note that all remaining moves are either  backwards moves, zigzag moves, or  forward moves with different sink stripes than $\pi^*$. 
Every time a remaining move $m_i$ is charged by a move $m_j$, we have that $\Delta(m_i) = \Omega(\psi/c^2)$ and $\Delta(m_j) \geq - \psi/c^3$. Moreover, each move is charged by at most four other moves (the two moves immediately preceding $m_i$ and the two moves immediately following $m_i$). Therefore, after all charging operations are finished, move $m_i$ satisfies $\Delta(m_i) \geq \Omega(\psi/c^2)$. 

Then after we have performed all our charging operations, every move $m_i$ in the move decomposition satisfies $\Delta(m_i) \geq 0$. 
Moreover, if the move decomposition of $\pi$ contains a move $m_i$ that is a backwards move, or a zigzag move, or a forward move with different sink stripe than $\pi^*$,  then after our charging operations, $\Delta(m_i) = \Omega(\psi/c^2)$. We conclude that $|\pi|-|\pi^*| = \sum_i \Delta(m_i) = \Omega(\psi/c^2)$. 
\end{proof}

\begin{claim}
\label{claim:obstacle_2}
Let $\pi$ be an $s \leadsto t$ path in $G$ such that 
\begin{itemize}
    \item $\pi$ does not contain a backwards move, nor a zigzag move, nor a forward move with different sink stripe than $\pi^*$, and
    \item $\pi$ takes a subdivided path $P_e$ not in $\pi^*$. 
\end{itemize}
Then $|\pi| - |\pi^*| = \Omega(\psi)$. 
\end{claim}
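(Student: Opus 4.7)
The plan is to exploit the fact that, under the hypotheses, $\pi$ decomposes entirely into same-stripe forward moves, and then to use the unique shortest-path property of the outer-graph path $\pi_O^\ast$ in $G_O$ to force $\pi$ to be much longer than $\pi^\ast$. First, I would verify that every move in $\pi$ is in fact a forward move with source \emph{and} sink stripe equal to $i^\ast$. By the definition of $\phi_1,\phi_2$ (Property 2 of \cref{claim:functions}), a subdivided path ending at a source in $S_I^j$ corresponds to a vector of stripe $j$, so the source stripe of $m_{i+1}$ equals the sink stripe of $m_i$. Combined with the hypothesis that every move of $\pi$ has sink stripe $i^\ast$, every move from $m_2$ onward has source stripe $i^\ast$. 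For $m_1$, the source $s$ is itself the endpoint of some subdivided path of $\pi^\ast$ (after truncation), whose vector lies in $\mathcal{S}_{i^\ast}^1\cup \mathcal{S}_{i^\ast}^2$, so $s\in S_I^{i^\ast}$, again by \cref{claim:functions}.

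Next, I would associate to $\pi$ an outer-graph walk $\pi_O'$ from $v_s$ to $v_t$ in $G_O$ by recording the $k'$ edges of $E_O$ underlying the $k'$ subdivided paths used by $\pi$. Since $\pi$ uses a subdivided path not in $\pi^\ast$, we have $\pi_O'\neq \pi_O^\ast$. By \cref{claim:outer_graph_unique_path}, $\pi_O^\ast$ is the unique shortest $v_s\leadsto v_t$ path in $G_O$, so $k'=|\pi_O'|>|\pi_O^\ast|=k$. Because $G_O$ is bipartite with $v_s,v_t\in V_O^L$, parity forces $k'\geq k+2$.

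Finally, I would estimate $\sum_i \Delta(m_i)$ via \cref{claim:same}. Label the $W_1$-vectors used by $\pi_O'$ as $(w^{(1)}_i,0,(w^{(1)}_i)^2)$ and the $W_2$-vectors as $(0,w^{(2)}_j,(w^{(2)}_j)^2)$, and set $m=k/2$ so that $v_t-v_s=(mx,my,m(x^2+y^2))$. The coordinate-wise displacements of $\pi_O'$ yield the identities
\begin{align*}
\sum_i w^{(1)}_i &= mx, \\
\sum_j w^{(2)}_j &= my, \\
\sum_i (w^{(1)}_i)^2 + \sum_j (w^{(2)}_j)^2 &= m(x^2+y^2).
\end{align*}
Plugging the \cref{claim:same} bounds into $\sum_i \Delta(m_i)$, the linear terms $2xw^{(1)}_i$ and $2yw^{(2)}_j$ collapse via the first two identities and the quadratic terms collapse via the third, leaving $\sum_i \Delta(m_i)\geq (k'-k)(\lambda+\psi)\geq 2(\lambda+\psi)\geq 2\psi=\Omega(\psi)$. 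By \cref{claim:move_length_sum} this equals $|\pi|-|\pi^\ast|$, which proves the claim.

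The main obstacle is the algebraic cancellation in the last step, which depends crucially on vectors in $W$ having the precise form $(x,y,x^2+y^2)$: this is exactly what makes the projection onto $\Vec{d}^\ast=(2x,2y,-1)$ pair cleanly with the walk's coordinate constraints, so that the per-move bounds telescope to the clean quantity $(k'-k)(\lambda+\psi)$ rather than a messier expression that would be harder to lower bound.
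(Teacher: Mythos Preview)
Your proof is correct and rests on the same two ideas as the paper: (i) under the hypotheses every move of $\pi$ is a same-stripe forward move, and (ii) the induced outer-graph walk must be strictly longer than $\pi_O^*$ by the unique shortest path property of \cref{thm:modified_outer_graph}. Your handling of the first move's source stripe and the parity observation $k'\ge k+2$ are fine refinements.

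Where you diverge from the paper is in the bookkeeping for step (ii). You route the comparison through $\sum_i\Delta(m_i)$ and a coordinate-wise telescoping identity; the paper bypasses this entirely by observing directly that each same-stripe forward move has length \emph{exactly} $\lambda+\psi$ (Property~4 of \cref{lem:sourcewise_dp} gives $\dist_{G_I}(S_I^{i^*},T_I^{i^*})=\lambda$), so $|\pi|=k'(\lambda+\psi)$, $|\pi^*|=k(\lambda+\psi)$, and $|\pi|-|\pi^*|\ge\lambda+\psi$. Your telescoping is a valid but circuitous way to recover the same arithmetic; note also that you need the exact equality $\Delta(m_i)=(\lambda+\psi)-\tfrac{2(\lambda+\psi)}{x^2+y^2}\langle\vec w,\vec d^*\rangle$ from the \emph{proof} of \cref{claim:same}, not the weaker inequality in its statement (the stated bounds alone only give $\sum_i\Delta(m_i)\ge 0$). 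In particular, the ``main obstacle'' you flag---the special form of $W$ making the projections telescope---is not actually load-bearing here: once you know every move has length $\lambda+\psi$, the comparison is a one-line edge count in $G_O$.
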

\begin{proof}
    Note that since $\pi$ does not contain a backwards move, nor a zigzag move, nor a forward move with different sink stripe than $\pi^*$, by \cref{claim:preceded} it follows that $\pi$ does not contain  a forward move with a different source stripe than $\pi^*$ either. Then $\pi$ contains exclusively forward moves with the same source and sink stripes as $\pi^*$. Then each move $m_i$ of $\pi$ is of length $|m_i| = \dist_{G_I}(S_I^{i^*}, T_I^{i^*}) + |P_e| = \lambda + \psi$, where $P_e$ is the subdivided path taken by move $m_i$. 
    
    Let $m_1, \dots, m_k$ be the move decomposition of $\pi$ and let $m_1, \dots, m_{k^*}$ be the move decomposition of $\pi^*$. Then since each move $m$ of $\pi$ is of length $|m| = \lambda + \psi$, it follows that $|\pi| = k(\lambda+ \psi)$. Likewise, by \cref{claim:unit_distance_exact}, $|\pi^*| = k^* \cdot (\lambda+ \psi)$. 
    
    Now observe that $\pi$ and $\pi^*$ are the images of paths $
    \phi^{-1}(\pi) = \pi_O$ and $\phi^{-1}(\pi^*) = \pi_O^*$, respectively, in outer graph $G_O$. Moreover, $|\pi_O| = q$ and $|\pi_O^*| = q^*$. 
    Recall that $\pi_O^*$ is a critical path between endpoints $(s_O, t_O) \in P_O$ in $G_O$. Additionally, $\pi_O$ is a $s_O \leadsto t_O$ path  in $G_O$ that contains an edge  not in $\pi_O^*$. Then since $\pi_O^*$ is a unique shortest path in $G_O$ by Lemma \ref{thm:modified_outer_graph}, it follows that $|\pi_O| \geq |\pi_O^*| + 1$. We conclude that $|\pi| \geq |\pi^*| + (\lambda+ \psi)$, as desired.  
\end{proof}

Claims \ref{claim:obstacle_1} and \ref{claim:obstacle_2} immediately imply the following lemma. 

\begin{lemma}
    Let $\pi$ be an $s \leadsto t$ path in $G$ that takes a subdivided path not in $\pi^*$. Then $|\pi| - |\pi^*| = \Omega(\psi/c^2)$. 
    \label{lem:obstacle}
\end{lemma}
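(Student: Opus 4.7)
The plan is to observe that this lemma follows by a short case split, directly invoking the two preceding claims. The hypothesis is that $\pi$ uses some subdivided path $P_e \not\subseteq \pi^*$, and the goal is to lower bound $|\pi| - |\pi^*|$ by $\Omega(\psi/c^2)$.

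First I would split on whether the move decomposition of $\pi$ contains at least one ``bad'' move, i.e., a backward move, a zigzag move, or a forward move with different sink stripe than $\pi^*$. In the affirmative case, \cref{claim:obstacle_1} applies verbatim (it has no other hypothesis), immediately yielding $|\pi| - |\pi^*| = \Omega(\psi/c^2)$. In the negative case, all three hypotheses of \cref{claim:obstacle_2} are satisfied: $\pi$ contains none of the bad moves by assumption, and $\pi$ takes a subdivided path not in $\pi^*$ by the lemma hypothesis. Hence \cref{claim:obstacle_2} applies and gives the stronger bound $|\pi| - |\pi^*| = \Omega(\psi)$, which is $\Omega(\psi/c^2)$ since $c$ is a constant.

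There is no real obstacle here: the work has all been done in Claims \ref{claim:backward} through \ref{claim:obstacle_2}, and the lemma is just the clean combination of the two final claims under the common hypothesis of the lemma. The only thing to verify is that the two cases genuinely exhaust the possibilities, which they do by construction (either $\pi$'s move decomposition contains a bad move, or it does not). Thus the proof can be written in a few lines as a simple dichotomy.

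\begin{proof}[Proof of \cref{lem:obstacle}]
Consider the move decomposition of $\pi$. If this decomposition contains a backward move, a zigzag move, or a forward move with different sink stripe than $\pi^*$, then $|\pi| - |\pi^*| = \Omega(\psi/c^2)$ by \cref{claim:obstacle_1}. Otherwise, by hypothesis $\pi$ takes a subdivided path not in $\pi^*$, so $|\pi| - |\pi^*| = \Omega(\psi) = \Omega(\psi/c^2)$ by \cref{claim:obstacle_2}. In either case, $|\pi| - |\pi^*| = \Omega(\psi/c^2)$.
\end{proof}
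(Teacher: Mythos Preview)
Your proposal is correct and matches the paper's approach exactly: the paper states that ``Claims \ref{claim:obstacle_1} and \ref{claim:obstacle_2} immediately imply the following lemma,'' and your case split on whether the move decomposition contains a bad move is precisely the intended argument.
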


We are ready to complete the proof of our spanner lower bounds.

\begin{theorem}
    For any sufficiently large parameter $c$, there are infinitely many $n$ for which there is an $n$-node graph $G$ such that any spanner of $G$ with at most $cn$ edges has additive distortion $+\Omega_c(n^{3/17})$. 
\end{theorem}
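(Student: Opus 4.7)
The plan is a standard counting argument combined with Lemma~\ref{lem:obstacle}. First I would balance parameters so that the target size and stretch are aligned: take $a = \Theta(r^4)$, which gives $n := |V| = \Theta_c(a^3 r^5) = \Theta_c(r^{17})$, so the target additive stretch $\Omega(\psi/c^2) = \Omega_c(r^3)$ matches $\Omega_c(n^{3/17})$. The construction parameter $c$ is taken as a sufficiently large constant, large enough that the edge count $|E| = \Theta(c^{1/4}|V|)$ from Theorem~\ref{thm:spanner_obstacle_graph} exceeds $c\,n$ by a growing factor, so a $cn$-edge spanner is forced to omit a non-trivial fraction of $E(G)$.

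Next I would invoke Property~2 of Theorem~\ref{thm:spanner_obstacle_graph}: each $\pi \in \Pi$ contains $\Theta(a'/c'^{3/2})$ private edges in each of its $\Theta(a/r)$ inner-graph copies, and these are pairwise disjoint across critical paths. Summing gives a total of roughly $\Theta(c^{1/4}|V|)$ private edges, essentially exhausting $|E|$. A spanner $H$ with at most $cn$ edges must therefore omit $\Omega(c^{1/4}|V|)$ private edges, so by averaging, a constant fraction of critical paths each lose at least one private edge. A more careful count shows that on average each critical path loses a constant fraction of its $\Theta(r^5/c^{19/3})$ private edges, a quantity that dominates the stretch target $\Theta(r^3/c^{35/3})$ as $r \to \infty$ with $c$ fixed.

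For any broken critical path $\pi^*$, I would argue that the shortest $s \leadsto t$ path $\pi$ in $H$ must use at least one subdivided path absent from $\pi^*$, at which point Lemma~\ref{lem:obstacle} immediately delivers the additive stretch $\Omega(\psi/c^2) = \Omega_c(n^{3/17})$. If the missing edge is an interior edge of a subdivided path of $\pi^*$, this is immediate, since interior subdivided-path edges admit no local bypass. If the missing edge lies inside an inner-graph segment $\pi_{s_{e_i}, t_{e_{i+1}}}$ of $\pi^*$, then by Property~2 of Lemma~\ref{lem:sourcewise_dp} that segment is the \emph{unique} shortest $s_{e_i} \leadsto t_{e_{i+1}}$ path in $G_I^{v_i}$, so $\pi$ must either exit $G_I^{v_i}$ through a different sink --- which forces a different subdivided path and triggers Lemma~\ref{lem:obstacle} --- or incur strictly extra length inside $G_I^{v_i}$.

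The main obstacle is cleanly handling the inner-graph-detour case, since in principle short local detours inside an inner graph copy might patch individual missing private edges at only $O(1)$ cost each. I would resolve this by combining the two mechanisms: if many private edges of $\pi^*$ are missing (as our averaging forces) and none of them lies on a subdivided path, then either one missing edge causes $\pi$ to re-route via a different sink (invoking Lemma~\ref{lem:obstacle} with stretch $\Omega(\psi/c^2)$), or the $O(1)$-per-missing-edge local penalties accumulate to exceed $\psi/c^2$ because the per-path missing count $\Theta(r^5/c^{19/3})$ is asymptotically larger than $\psi/c^2 = \Theta(r^3/c^{35/3})$. Either branch yields the desired $\Omega_c(n^{3/17})$ additive stretch, completing the proof.
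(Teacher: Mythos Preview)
Your overall strategy matches the paper's: fix a critical path $\pi^*$ missing many private inner-graph edges, and then argue that any $H$-path either leaves the subdivided-path skeleton of $\pi^*$ (invoking Lemma~\ref{lem:obstacle}) or accumulates local penalties inside the inner graphs. The parameter balancing $a=\Theta_c(r^4)$, $n=\Theta_c(r^{17})$, stretch $\Theta_c(r^3)$ is also the paper's. One minor point: you should distinguish the given constant $c$ in ``$cn$ edges'' from the construction parameter; the paper sets the latter to a suitable power of the former so that $|E|/|V|$ dominates $c$.

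The genuine gap is in your accumulation step. You write that ``the $O(1)$-per-missing-edge local penalties accumulate to exceed $\psi/c^2$'' and compare the per-path count of missing private edges, $\Theta(r^5/c^{19/3})$, to the target. But missing many private edges inside a \emph{single} inner-graph copy does not force proportionally many extra hops: the subpath $\pi_{s_{e_i},t_{e_{i+1}}}$ is merely a unique shortest path, so removing any number of its edges only guarantees a $+1$ detour in that copy. Thus the penalty is $\geq +1$ \emph{per inner-graph copy with at least one missing edge}, not per missing edge. Your numerical comparison therefore bounds the wrong quantity.

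The fix (which is what the paper does) is to convert ``a constant fraction of $\pi^*$'s private edges are missing'' into ``a constant fraction of the $\Theta(a/r)$ inner-graph copies traversed by $\pi^*$ each have a missing edge.'' This works because every copy contributes the same number $f-1=\Theta(a'/c'^{3/2})$ of private edges to $\pi^*$, so the missing edges cannot be concentrated in fewer than a constant fraction of the copies. Each such copy then contributes $\geq +1$ to the stretch (by unique shortest paths in $G_I$), giving total penalty $\Omega(a/r)=\Omega_c(r^3)$, which matches $\psi/c^2$ after balancing. With this correction, your argument coincides with the paper's.
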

\begin{proof}
    Let $c > 0$ be a sufficiently large constant. Then we will construct the infinite family of obstacle product graphs $G$ from \cref{thm:spanner_obstacle_graph} with parameters $a, r, c_1$, where $c_1 = \Theta(c^3)$. Specifically, we choose $c_1$ to be large enough so that every graph $G$ on $n$ nodes and $m$ edges in our family satisfies $\frac{cn}{m} < \frac{1}{2}$. This is possible to achieve by setting $c_1 = \Theta(c^3)$ due to Property 1 of \cref{thm:spanner_obstacle_graph}. 

    Let $H$ be a spanner of $G$ with at most $cn$ edges. Then $H$ contains at most half the edges in $G$. By Lemma \ref{lem:obstacle} and Property 2 of \cref{thm:spanner_obstacle_graph}, if $H$ is missing an edge in a subdivided path $P_e$ in $G$, then $H$ has additive error $+\Omega(\psi/c^2)$. 

    Otherwise, if $H$ contains all edges in the subdivided paths in $G$, then at least half of the inner graph edges $E(G_I^v)$, $v \in V_O$, must be missing from $H$. Let $E_I := \cup_{v \in V_O} E(G_I^v)$. Since the paths in $\Pi$ partition the edges in $E_I$ by Property 3 of \cref{thm:spanner_obstacle_graph}, it follows that there must exist a critical path $\pi \in \Pi$ missing at least half of its edges in $E_I$ in $H$. 

    Let $s, t$ be the endpoints of a critical path $\pi \in \Pi$ in $G$, and let $\pi'$ be a simple $s \leadsto t$ path in $H$. If $\pi'$ takes a subdivided path not in $\pi$, then $\pi'$ suffers $+\Omega(\psi/c^2)$ additive error by Lemma \ref{lem:obstacle}. Then we may assume that $\pi'$ takes the exact same sequence of subdivided paths as $\pi$. 

    If path $\pi$ is missing an edge in inner graph copy $G_I^v$ in $H$, then path $\pi'$ must suffer at least $+1$ additive error in $H$ while passing through this inner graph, since $\pi[V(G_I^v)]$ is a unique shortest path in $G_I$ by Property 2 of \cref{lem:sourcewise_dp} and the construction of critical path $\pi$. Since $\pi$ is missing half of its edges in $E_I$ in $H$, path $\pi$ is missing an edge in at least half of the inner graph copies that it passes through. As path $\pi$ passes through $\Theta\left(\frac{a}{r}\right)$ inner graph copies by Property  2 of \cref{thm:modified_outer_graph}, it must suffer $+\Theta\left(\frac{a}{r}\right)$ additive error. 

    We can balance our parameters by letting $ \psi/c^2 = \Theta(\frac{a}{r})$. Since $ \psi/c^2 = \Theta_c\left( r^3 \right)$, this implies $a = \Theta_c\left( r^4 \right)$. Then $|V(G)| = \Theta_c(r^{17})$ by Property 1 of \cref{thm:spanner_obstacle_graph}, and the additive error suffered by spanner $H$ is at least $$\min\left(\Omega\left(\frac{\psi}{c^2}\right), \Omega\left(\frac{a}{r}\right)\right) = \Omega_c(r^3) = \Omega_c\left(|V(G)|^{3/17}\right).$$
\end{proof}




\bibliographystyle{alpha}
\bibliography{ref}

\appendix

\section{Proof of \cref{lem:sourcewise_dp} }
\label{app:sslb}

The goal of this section will be to prove the following lemma.

\begin{lemma}[cf. Theorem 5.10 of \cite{CoppersmithE06}] 
For any $a, c > 0 \in \mathbb{Z}$, there exists a graph $G_I(a, c) = (V_I, E_I)$ with a set $S_I \subseteq V_I$ of sources, a set $T_I \subseteq V_I$ of sinks, and a set $P_I \subseteq S_I \times T_I$ of critical pairs that has the following properties:
\begin{enumerate}
\item The number of nodes, edges,  sources, sinks, and critical pairs in $G_I$ is:
\begin{align*}
    |V_I| &= \Theta(a^2), \\
    |E_I| &= \Theta(a^2c), \\
    |S_I| &= \Theta(a^{1/2}c^{11/4}), \\
    |T_I| &= \Theta(a^{1/2}c^{11/4}), \\
        |P_I| &= \Theta(ac^{5/2}).
\end{align*}
\item Every path $\pi_{s, t}$, where $(s, t) \in P_I$, contains $\Theta(a/c^{3/2})$ edges that do not lie on any other path $\pi_{s', t'}$, where $(s', t') \in P_I$. 
\item For every source $s \in S_I$ and sink $t \in T_I$, the distance between $s$ and $t$ in $G_I$ satisfies the following:
$$
 \dist_{G_I}(s, t) = \Theta(ac^{1/2}).
$$

\item The set of sources $S_I$ can be partitioned into $b = \Theta(c^3)$ sets $S_I^1, \dots, S_I^b$, where $|S_I^i| = \Theta(a^{1/2}c^{-1/4})$ for all $i \in [b]$. Let $T_I^i = \{t \in T_I \mid (S_I^i \times \{t\}) \cap P_I \neq \emptyset \}$ be the set of all sinks that belong to a critical pair with a source in $S_I^i$. Then for all $i \in [b]$  the following properties hold:
\begin{itemize}
    \item $|T_I^i| = \Theta(a^{1/2}c^{-1/4})$ for all $i \in [b]$,
    \item $S_I^i \times T_I^i \subseteq P_I$, and
    \item for all  $(s, t) \in P_I$ such that $s \in S_I^i$ and $t \in T_I^i$, 
$$
\dist_{G_I}(s, t) \leq \dist_{G_I}(S_I^i, T_I^i),$$
where $\dist_{G_I}(S_I^i, T_I^i)$ denotes the minimum distance between $S_I^i$ and $T_I^i$ in $G_I$. 
\end{itemize}
\end{enumerate}
\end{lemma}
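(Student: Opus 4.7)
The plan is to instantiate the subset distance preserver lower bound construction of Coppersmith and Elkin (Theorem 5.10 of \cite{CoppersmithE06}) with parameters tuned to match the target counts in Property 1, and then verify Properties 2--4 in turn. The Coppersmith-Elkin construction is geometric: $V_I$ is a lattice grid in some dimension $d$, edges correspond to a fixed strictly convex vector set $B$ chosen so that no vector in $B$ can be written as a nontrivial integer convex combination of other vectors in $\pm B$, and the critical paths are the straight-line walks in the grid that repeatedly use a single vector $\vec{w} \in B$. I would choose the dimension, the side lengths, and the radius of $B$ so that the four quantities $|V_I|, |E_I|, |S_I|, |T_I|$ (and hence $|P_I|$) simultaneously come out as in Property 1; this is essentially a parameter-matching exercise given the generic counts $|V_I| = \Theta(\prod \ell_i)$, $|E_I| = \Theta(|V_I| \cdot |B|)$, and one path per source/direction combination.

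Once the construction is fixed, Properties 2 and 3 are consequences of the standard convexity arguments in \cite{CoppersmithE06}. Strict convexity of $B$ implies that for each critical pair $(s,t) \in P_I$ the straight-line walk from $s$ to $t$ is the unique shortest $s \leadsto t$ path in $G_I$, which also gives the distance bound $\dist_{G_I}(s, t) = \Theta(a c^{1/2})$ in Property 3 (after reading off the Euclidean length of the walk in the tuned grid). For the private-edge bound in Property 2, two distinct critical paths, whether they share a direction or not, can intersect only in a constant-length subpath by the strict convexity/uniqueness of shortest paths, so a double-counting of edge usages across all of $\Pi_O$ shows that each critical path has $\Theta(a/c^{3/2})$ edges unshared with any other critical path.

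Property 4 is the delicate step and the main obstacle. The plan is to partition the direction set $B$ into $b = \Theta(c^3)$ \emph{bundles} of nearby directions and then define $S_I^i$ to be the source endpoints of all straight-line paths that use some direction in the $i$-th bundle (and $T_I^i$ the corresponding sink endpoints). Because nearby directions produce parallel-ish straight-line paths that begin in a common block of sources and end in a common block of sinks, the entire Cartesian product $S_I^i \times T_I^i$ should be realized as a set of critical pairs in $P_I$, and all paths in a single bundle should have essentially the same Euclidean length, yielding the desired inequality $\dist_{G_I}(s,t) \le \dist_{G_I}(S_I^i, T_I^i)$ for every $(s,t) \in S_I^i \times T_I^i$. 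The technical check is to tune the bundle radius so that each bundle contains exactly $\Theta(a^{1/2} c^{-1/4})$ sources and the same number of sinks; this amounts to matching the scaling of the convex hull of $B$ against the grid width, and simultaneously preserving the private-edge bound of Property 2 across bundles. I expect the proof to reduce to this parameter bookkeeping on top of the construction of Theorem 5.10 of \cite{CoppersmithE06}, which is why the authors defer the verification to the appendix.
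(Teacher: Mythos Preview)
Your proposal has the right skeleton (a 2D CE grid with a convex vector set of size $\Theta(c)$), but two ingredients from the paper's construction are missing, and without them Properties 3 and 4 do not go through.

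First, the product structure in Property~4 cannot come from ``bundling nearby directions.'' The convex set here has only $\Theta(c)$ vectors, so you cannot form $b=\Theta(c^3)$ bundles of directions. In the paper the $\Theta(c^3)$ groups arise differently: for each of the $\Theta(c)$ directions $\vec w_i$ one takes $\Theta(c^2)$ parallel \emph{bands} (arithmetic progressions along an auxiliary vector $\vec x_i$), giving $\Theta(c^3)$ bands total. Each band $B^S$ of size $z^2$ is then factored as $[z]\times[z]$ by splitting $B^S$ into $z$ contiguous blocks and $B^T$ into $z$ residue classes mod $z$; this guarantees that every (block, residue class) pair is connected by \emph{exactly one} straight-line path. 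The sources in $S_I^i$ are indexed by the blocks and the sinks in $T_I^i$ by the residue classes, which is what produces the full Cartesian product $S_I^i\times T_I^i\subseteq P_I$. Your bundling-of-directions picture does not give this bijection.

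Second, and more seriously, the paper does \emph{not} take sources and sinks to be grid points. For each block/residue class it attaches a balanced binary tree with carefully subdivided edges, and the source (sink) is the \emph{root} of that tree. The tree depth is $\Theta(ac^{1/2})$, dominating the grid traversal, so every root-to-root distance is forced to be $\Theta(ac^{1/2})$; this is exactly how Property~3 holds for \emph{all} pairs $(s,t)\in S_I\times T_I$, not just critical ones. The trees also make every critical path $\pi_{s,t}$ with $s\in S_I^i,\ t\in T_I^i$ have \emph{identical} length (two fixed root-to-leaf traversals plus a fixed number $f$ of grid steps), which is what yields the last bullet of Property~4, $\dist_{G_I}(s,t)\le \dist_{G_I}(S_I^i,T_I^i)$. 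In a pure grid construction the minimum $S_I^i$--$T_I^i$ distance would typically be much smaller than a generic critical-path length, and nearby source/sink grid points would violate Property~3 outright. So the tree attachment is not cosmetic; it is the mechanism behind both Property~3 and the equality of critical-path lengths in Property~4.
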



\subsection{Construction of $G_I$}
\label{subsec:innergraph_construction}
Let $a, c >0 \in \mathbb{Z}$ be the input parameters for our construction of inner graph $G_I = (V_I, E_I)$.  Parameter $a$ will grow with the input size, while $c$ will be assumed to be a sufficiently large integer much smaller than $a$.  For
simplicity of presentation, we will frequently ignore issues related to non-integrality of expressions
that arise in our analysis; these issues affect our bounds only by lower-order terms.
The nodes of $G_I$ will correspond to integer points in $\mathbb{R}^2$, while the edges in $G_I$ will correspond to vectors in a set $W$, where we define $W$ to be $$W = \{(w, w^2) \mid w \in [c/2, c] \}.$$
Let $\vec{w}_i = (w_i, w_i^2)$, where $w_i = c/2 - 1 + i$,  denote the $i$th vector of $W$ for $i \in [1, c/2]$.
\paragraph{Vertex Set $V_I$.}
Our vertex set $V_I$ will initially be defined to be the set of points
$$
V_I = \left[ ac^{1/2}\right] \times \left[\frac{a}{c^{1/2}}\right] \subseteq \mathbb{R}^2.
$$
We will add more nodes to $V_I$ in a later step of the construction.
\paragraph{Edge Set $E_I$.}
Our edge set $E_I$ in $G_I$ will be initially defined to be the set 
$$
E_I = \{(u, v) \in V_I \times V_I \mid v - u \in W\}.
$$
For analysis purposes, we will additionally add the edges $(v, v \pm (0, 1))$ and $(v, v \pm (1, 0))$ to every vertex $v \in V_I$. This will become relevant in the proof of \cref{claim:factors}. 
We will remove some of the edges in $E_I$ from $G_I$ in a later step of the construction.

\paragraph{Rectangles $R_i^S, R_i^T$ of $G_I$. }
In order to define our sources, sinks, and critical pairs, it will be helpful to define a collection of $c$ pairs of sets of nodes $(R_1^S, R_1^T), \dots, (R_{|W|}^S, R_{|W|}^T) \subseteq V_I \times V_I$. We refer to sets $R_i^S, R_i^T \subseteq V_I$, $i \in [|W|]$, as \textit{rectangles} of $G_I$. Each  rectangle of $G_I$ will correspond to the collection of points in $V_I$ contained in a `true' rectangle in $\mathbb{R}^2$. 
We construct rectangles $R_i^S, R_i^T$ in $G_I$ as follows:
\begin{itemize}
    \item Let $$\Vec{x}_i = (w_i, 2w_i),$$ and let $\hat{\Vec{x}}_i$ denote the unit vector in the direction $\Vec{x}_i$.
    Let $$\Vec{y}_i = \left(w_i, -\frac{1}{2w_i}\right)$$ be a vector perpendicular to $\Vec{x}_i$, and let $\hat{\vec{y}}_i$ denote the unit vector in the direction $\Vec{y}_i$. The rationale behind our choice of associating vectors $\Vec{x}_i$ and $\Vec{y}_i$ with $\Vec{w}_i$ will become clear in the proof of \cref{claim:innergraph_usp}.
    \item Let  $p_1 = (0, 0)$, $p_2 = \frac{ac^{1/2}}{100} \cdot \hat{\Vec{x}}_i$, $p_3 =  \frac{c}{100} \cdot \hat{\vec{y}}_i$, and $p_4 = p_2 + p_3$. Let $\mathcal{R}_i = \conv(p_1, p_2, p_3, p_4)$. Then $\mathcal{R}_i \subseteq \mathbb{R}^2$ is a rectangle in $\mathbb{R}^2$ with one vertex at the origin. 
    \item Let $d$ be the positive integer $\lceil a/ 100\rceil$.  Let $\mathcal{R}_i^S = \mathcal{R}_i + (d, d)$. It is straightforward to verify that $\mathcal{R}_i^S \subseteq \conv(V_I)$.
    Let $f$ be the positive integer $f = \left\lceil\frac{a}{100c^{3/2}}\right\rceil$. Let $\mathcal{R}_i^T = \mathcal{R}_i^S + f \cdot \vec{w}_i$. Again it is straightforward to verify that $\mathcal{R}_i^T \subseteq \conv(V_I)$. 
    \item We define rectangles $R_i^S, R_i^T$ in $G_I$ as
    $$
    R_i^S = V_I \cap \mathcal{R}_i^S, \text{\qquad and \qquad} R_i^T = V_I \cap \mathcal{R}_i^T.
    $$
    \item Note that $|R_i^S| = |R_i^T| = \Theta(ac^{3/2})$, by taking $a$ and $c$ to be sufficiently large. 
\end{itemize}
The following claim is immediate by the construction of rectangles $R_i^S$ and $R_i^T$. 

\begin{claim}
\label{claim:rectangle}
    For each node $v \in R_i^S$ and each vector $\Vec{w}_j \in W$, $v + \Vec{w}_j \not \in R_i^S$.
\end{claim}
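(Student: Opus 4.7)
The plan is to show that for any $v \in R_i^S$ and any $\vec{w}_j \in W$, adding $\vec{w}_j$ displaces $v$ along the short axis $\hat{\vec{y}}_i$ of the rectangle $\mathcal{R}_i^S$ by strictly more than $c/100$, which is the extent of $\mathcal{R}_i^S$ in that direction by construction. Since the displacement exceeds the width, $v + \vec{w}_j$ must leave $\mathcal{R}_i^S$, and therefore $v + \vec{w}_j \notin R_i^S = V_I \cap \mathcal{R}_i^S$.

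The main step is a lower bound on $\proj_{\hat{\vec{y}}_i}(\vec{w}_j)$. I would plug in $\vec{w}_j = (w_j, w_j^2)$ with $w_j \in [c/2, c]$ and $\vec{y}_i = (w_i, -1/(2w_i))$ with $w_i \in [c/2, c]$, and compute
\[
\langle \vec{w}_j, \vec{y}_i \rangle \;=\; w_i w_j - \frac{w_j^2}{2 w_i} \;=\; \frac{w_j (2 w_i^2 - w_j)}{2 w_i}.
\]
Since $2 w_i^2 - w_j \geq c^2/2 - c$ is $\Omega(c^2)$ for $c$ sufficiently large and $w_j \geq c/2$, the inner product is $\Omega(c^2)$. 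Combined with $\|\vec{y}_i\| = \sqrt{w_i^2 + 1/(4 w_i^2)} = O(c)$, this gives $\proj_{\hat{\vec{y}}_i}(\vec{w}_j) = \Omega(c)$, which is comfortably larger than $c/100$.

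To finish, for any $v \in R_i^S \subseteq \mathcal{R}_i^S$, the scalar projection of $v - (d, d)$ onto $\hat{\vec{y}}_i$ lies in $[0, c/100]$ by definition of the rectangle $\mathcal{R}_i^S = \conv\{p_1, p_2, p_3, p_4\} + (d,d)$, since $\mathcal{R}_i^S$ has extent exactly $c/100$ in the $\hat{\vec{y}}_i$ direction. Linearity of projection then gives
\[
\proj_{\hat{\vec{y}}_i}\bigl(v + \vec{w}_j - (d, d)\bigr) \;=\; \proj_{\hat{\vec{y}}_i}(v - (d,d)) + \proj_{\hat{\vec{y}}_i}(\vec{w}_j) \;\geq\; 0 + \Omega(c) \;>\; c/100,
\]
so $v + \vec{w}_j \notin \mathcal{R}_i^S$, as claimed.

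The only nontrivial step is the projection lower bound, and it is precisely the restriction $w_j \in [c/2, c]$ that makes it work uniformly in $i$ and $j$: the lower bound $w_j \geq c/2$ keeps the numerator $w_j(2w_i^2 - w_j)$ of order $c^3$, and the upper bound $w_j \leq c$ (combined with $w_i \geq c/2$) ensures $2w_i^2 - w_j > 0$ so the sign of the projection is controlled. The fact that the rectangle is ``thin'' in direction $\hat{\vec{y}}_i$ by exactly $c/100$ is then perfectly matched to this $\Omega(c)$ projection, yielding the claim with room to spare.
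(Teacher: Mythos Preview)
Your argument is correct and is exactly the natural way to make the paper's ``immediate by the construction'' precise: you project $\vec{w}_j$ onto the short axis $\hat{\vec{y}}_i$ of the parallelogram $\mathcal{R}_i^S$, show this projection is $\Omega(c)$, and observe that this exceeds the width $c/100$ of $\mathcal{R}_i^S$ in that direction. The paper offers no further detail, so there is nothing to compare against beyond noting that your computation of $\langle \vec{w}_j, \vec{y}_i\rangle = w_j(2w_i^2 - w_j)/(2w_i)$ and the resulting $\Omega(c)$ bound are correct.
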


\paragraph{Bands $B_i^S, B_i^T$ of $G_I$.}
As the next step in defining our sources, sinks, and critical pairs, it will be helpful to define a collection of $b$ pairs of sets of nodes $(B_1^S, B_1^T), \dots, (B_b^S, B_b^T) \subseteq V_I \times V_I$, where $b = \Theta(c^3)$. We refer to sets $B_i^S, B_i^T \subseteq V_I$ as \textit{bands} of $G_I$.  We construct bands in $G_I$ as follows:
\begin{itemize}
    \item Fix a pair of rectangles $R_i^S, R_i^T$ in $G_I$, $i \in [|W|]$. 
    \item   Let  $q_1 = (0, 0)$, $q_2 = \frac{c}{100} \cdot \hat{\Vec{x}}_i$, $q_3 =  \frac{c}{100} \cdot \hat{\vec{y}}_i$, and $q_4 = q_2 + q_3$, where $\hat{\Vec{x}}_i$ and $\hat{\vec{y}}_i$ are as defined in the previous paragraph. 
    \item Let $\mathcal{B}_i = \conv(q_1, q_2, q_3, q_4) + (d, d)$ be a rectangle  in $\mathbb{R}^2$. Then $\mathcal{B}_i \subseteq \mathcal{R}_i^S$. Let $B_i = V_I \cap \mathcal{B}_i$.
    \item Note that $|B_i| = \Theta(c^2)$, by taking $a$ and $c$ to be sufficiently large. Let $v_1, \dots, v_{|B_i|}$ denote the nodes in $B_i$. 
    \item For each node $v_j \in B_i$ we will define associated bands $B_j^S$ and $B_j^T$. Given a $v_j \in B_i$, we define $B_j^S$ and $B_j^T$ as
    $$
    B_j^S = \{u \in R_i^S \mid u = v_j + k \cdot \Vec{x}_i, k \in \mathbb{Z}\} \text{\qquad and \qquad} B_j^T = B_j^S + f \cdot \vec{w}_i.
    $$
    \item Observe that $|B_j^S| = |B_j^T| = \Theta\left(\frac{a}{c^{1/2}}\right)$. We will assume  that $|B_j^S| = |B_j^T| = z^2$ for some positive integer $z \in \mathbb{Z}$. This is without loss of generality because we can  delete a constant fraction of elements in $B_j^S$ and $B_j^T$ to ensure this property holds. 
    \item We can define bands in this way for all pairs of rectangles $R_i^S, R_i^T$ in $G_I$, $i \in [|W|]$. This will yield a collection of $ b = |W| \cdot \Theta(c^2) = \Theta(c^3)$ distinct pairs of bands $B_i^S, B_i^T$.  This completes our construction of the bands of $G_I$. We summarize the properties of our collection of bands in the following claim.
\end{itemize}
\begin{claim}
\label{claim:bands}
Our collection of bands $\{B_i^S, B_i^T\}_{i \in [b]}$ satisfies the following properties:
\begin{itemize}
    \item $b = \Theta(c^3)$, 
    \item $|B_i^S| = |B_i^T| = \Theta\left( \frac{a}{c^{1/2}}\right)$, and $|B_i^S| = |B_i^T| = z^2$ for some $z \in \mathbb{Z}$. 
    \item Band $B_i^S$ and band $B_i^T$ are of the form $\{v + k \cdot \Vec{x}_i \mid k \in [\ell]\}$ for some $v \in V_I$ and $\ell = \Theta\left( \frac{a}{c^{1/2}}\right)$.
    \item For all $j,k \in [|B_i|]$ such that $j \neq k$, we have that $B_j^S \cap B_k^S = \emptyset$, since for all $u \in R_i^S$, there is at most one node $v_j \in B_i$ such that $u \in \{v_j + \ell \cdot \Vec{x}_i \mid \ell \in \mathbb{Z} \}$. Likewise, $B_j^T \cap B_k^T = \emptyset$. 
\end{itemize}
\end{claim}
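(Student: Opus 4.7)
The plan is to verify each of the four bullet points by unpacking the construction of the bands and combining two ingredients: a counting estimate for how many integer translates of $\vec{x}_i$ fit inside the relevant rectangle, and a geometric diameter argument for disjointness. All four points follow from direct calculations, so the proposal will emphasize which geometric quantities drive each bound rather than grind through the arithmetic.

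For the first two bullets, I will count as follows. The set $W$ has $|W| = \Theta(c)$ vectors, and for each $i \in [|W|]$ the seed box $\mathcal{B}_i$ is an axis-aligned-in-$(\hat{\vec{x}}_i,\hat{\vec{y}}_i)$-coordinates square of side length $c/100$, so $|B_i| = |V_I \cap \mathcal{B}_i| = \Theta(c^2)$. Since each $v_j \in B_i$ yields a distinct pair of bands $(B_j^S, B_j^T)$, summing over $i$ gives $b = |W| \cdot \Theta(c^2) = \Theta(c^3)$. For $|B_j^S|$ I will use that $\|\vec{x}_i\| = w_i\sqrt{5} = \Theta(c)$ while the side of $\mathcal{R}_i^S$ in the $\hat{\vec{x}}_i$ direction has length $\Theta(ac^{1/2})$, so the number of integer multiples of $\vec{x}_i$ starting at $v_j$ that remain inside $\mathcal{R}_i^S$ is $\Theta(a/c^{1/2})$; the same quantity governs $|B_j^T|$ since $B_j^T$ is a rigid translate of $B_j^S$ by $f \vec{w}_i$ (with $\mathcal{R}_i^T$ a matching translate of $\mathcal{R}_i^S$). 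The square-cardinality requirement $|B_j^S| = z^2$ is free: trim the ends of each band (equally on $S$ and $T$ sides, to preserve the translation relationship) to reduce $|B_j^S|$ to the largest perfect square below it, which only loses a sub-constant fraction.

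The third bullet is essentially a restatement of the definition $B_j^S = \{u \in R_i^S \mid u = v_j + k\vec{x}_i,\ k \in \mathbb{Z}\}$, which I will rewrite as $\{v_j + k\vec{x}_i \mid k \in [\ell]\}$ where $\ell = |B_j^S|$ is the integer range computed above, and analogously for $B_j^T$ using starting point $v_j + f\vec{w}_i$.

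The main content, and the only step requiring a real argument, is the disjointness bullet. Suppose $u \in B_j^S \cap B_k^S$ with $j \ne k$, so that $u = v_j + k_1 \vec{x}_i = v_k + k_2 \vec{x}_i$ for some integers $k_1, k_2$; then $v_j - v_k = (k_2 - k_1)\vec{x}_i$ and, since $v_j \neq v_k$, we get $|k_2 - k_1| \geq 1$, so $\|v_j - v_k\| \geq \|\vec{x}_i\| = w_i\sqrt{5} \geq c\sqrt{5}/2$. On the other hand $v_j, v_k \in B_i \subseteq \mathcal{B}_i$, and $\mathcal{B}_i$ is a rectangle of Euclidean side lengths $c/100$, hence has diameter $c\sqrt{2}/100 < c\sqrt{5}/2$. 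This contradiction proves $B_j^S \cap B_k^S = \emptyset$, and the identical argument applied to the shifted copies gives $B_j^T \cap B_k^T = \emptyset$. The only subtlety worth flagging is making sure the constants in the side length of $\mathcal{B}_i$ (namely $c/100$) and the lower bound on $w_i$ (namely $c/2$) are chosen so that the diameter of $\mathcal{B}_i$ is strictly smaller than $\|\vec{x}_i\|$; the $1/100$ versus $\sqrt{5}/2$ gap built into the construction handles this comfortably.
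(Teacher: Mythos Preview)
Your proof is correct and follows essentially the same approach as the paper: the paper dismisses the first three bullets as immediate from the construction and proves disjointness by exactly your contradiction argument, observing that $v_j - v_k$ is a nonzero integer multiple of $\vec{x}_i$, hence $\|v_j - v_k\| \geq \|\vec{x}_i\|$, which exceeds the side length (you use the diameter, which is equally valid) of $\mathcal{B}_i$. Your write-up is in fact more detailed than the paper's on the first three points.
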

\begin{proof}
    The first three properties follow immediately from the above discussion; what remains is to verify the fourth property. Suppose for the sake of contradiction that there exists a node $u \in R_i^S$ and distinct nodes $v, v' \in B_i$ such that 
    $$
    u = v + \ell_1 \cdot \Vec{x}_i = v' + \ell_2 \cdot \Vec{x}_i,
    $$
    where $\ell_1, \ell_2 \in \mathbb{Z}$ and $\ell_1 \neq \ell_2$. Then $\|v - v'\| \geq \|\Vec{x}_i\| > q_2 - q_1$, contradicting our assumption that $v, v' \in B_i$. 
\end{proof}

\paragraph{Band Factors $F_{i, j}^S, F_{i, j}^T$ of $G_I$.}
As the next step in defining our sources, sinks, and critical pairs, it will be helpful to partition each band $B_i^S$ in $G_I$ into  sets $F_{i, 1}^S, \dots, F_{i, \nu}^S$,  where $\nu = |B_i^S|^{1/2}$, 
and partition each band $B_i^T$ in $G_I$ into sets $F_{i, 1}^T, \dots, F_{i, \nu}^T$.  We refer to sets $F_{i, j}^S$ and $F_{i, j}^T$ as \textit{band factors} of $G_I$. We construct band factors in $G_I$ as follows:
\begin{itemize}
    \item Fix a pair of bands $B_i^S, B_i^T$, where $i \in [b]$. 
    \item By \cref{claim:bands}, there exists nodes $v, v' \in V_I$ and integer $\ell = |B_i^S| = \Theta\left( \frac{a}{c^{1/2}}\right)$ such that
        $$
        B_i^S = \{v + k \cdot \Vec{x}_i \mid k \in [\ell]\} \text{\qquad and \qquad } B_i^T = \{v' + k \cdot \Vec{x}_i \mid k \in [\ell]\}.
        $$
        Moreover, $\ell = z^2$ for some $z \in \mathbb{Z}$. 
        \item For $j \in [1, \ell]$, let $s_j \in B_i^S$ denote node $s_j = v+j \cdot \Vec{u}$, and let $t_j \in B_i^T$ denote node $t_j = v'+j \cdot \Vec{u}$. 
        \item Let $\nu = |B_i^S|^{1/2}$. We define band factors $F_{i, j}^S, F_{i, j}^T$, $j \in [\nu]$, as follows:
        $$
F_{i, j}^S = \{s_k \in B_i^S \mid k \in [(j - 1) \cdot \nu + 1, \quad j \cdot \nu ] \}  \text{\qquad and \qquad} F_{i, j}^T =  \{t_k \in B_i^T \mid k \equiv j \bmod \nu \}.
$$
We summarize the key properties of  band factors $F_{i, j}^S, F_{i, j}^T$ in the following claim.
\end{itemize}

\begin{claim}
    \label{claim:factors}
    Our collection of band factors $\{F_{i, j}^S, F_{i, j}^T\}_{i \in [b], j \in [\nu]}$ satisfies the following properties:
    \begin{enumerate}
        \item $|F_{i, j}^S| = |F_{i, j}^T| = \nu = \Theta\left( \frac{a^{1/2}}{c^{1/4}} \right)$ for all   $i \in [b], j \in [\nu]$,
        \item $\{F_{i, j}^S\}_{j \in [\nu]}$ (respectively,  $\{F_{i, j}^T\}_{j \in [\nu]}$)  is a partition of $B_i^S$ (respectively, $B_i^T$) for all $i \in [b]$, 
        \item $|\{k \in [\nu] \mid s_k \in F_{i, j}^S \text{ and }  t_k \in F^T_{i, j'}\}| = 1$ for all $i \in [b]$ and $j, j' \in [\nu]$, and
        \item $\dist_{G_I}(s_{k}, s_{k'}) = O(|k - k'| \cdot c)$ and $\dist_{G_I}(t_k, t_{k'}) = O(|k - k'| \cdot c)$ for all $k, k' \in [\nu]$.
    \end{enumerate}
\end{claim}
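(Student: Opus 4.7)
\medskip

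\noindent\textbf{Proof plan.} My plan is to dispose of each of the four properties in turn, with the first three being essentially definitional bookkeeping and the fourth reducing to a Manhattan-distance calculation.

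For property 1, I would start by recalling from \cref{claim:bands} that $\ell := |B_i^S| = |B_i^T| = z^2 = \Theta(a/c^{1/2})$, which already yields $\nu = z = \Theta(a^{1/2}/c^{1/4})$. The size $|F_{i,j}^S| = \nu$ is then immediate from the definition as an interval of $\nu$ consecutive indices, while $|F_{i,j}^T| = \nu$ follows by counting the number of $k \in [\ell] = [\nu^2]$ congruent to $j$ modulo $\nu$, which is exactly $\nu$. For property 2, the sets $\{[(j-1)\nu + 1, j\nu]\}_{j \in [\nu]}$ are a partition of $[\nu^2] = [\ell]$ into consecutive blocks, and the residue classes modulo $\nu$ are a partition of $[\ell]$ into $\nu$ classes, so the corresponding families of $F_{i,j}^S$'s and $F_{i,j}^T$'s partition $B_i^S$ and $B_i^T$ respectively.

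For property 3 (I read the stated range ``$k \in [\nu]$'' as a typo for $k \in [\ell]$, since $s_k$ is only defined for $k \in [\ell]$), I would observe that any window of $\nu$ consecutive integers hits each residue class modulo $\nu$ exactly once. The block $[(j-1)\nu + 1, j\nu]$ therefore contains a unique $k$ with $k \equiv j' \pmod \nu$, which gives exactly one index $k$ such that $s_k \in F_{i,j}^S$ and $t_k \in F_{i,j'}^T$ simultaneously.

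For property 4, I would exploit the unit edges $(v, v \pm (1,0))$ and $(v, v \pm (0,1))$ that were added for every $v \in V_I$ in the construction of $E_I$; these immediately imply that $\dist_{G_I}(u,v)$ is upper bounded by the Manhattan distance $\|u - v\|_1$ whenever the axis-parallel rectangular hull of $u, v$ is contained in $V_I$. Since $s_k - s_{k'} = (k - k') \vec{x}_i = \bigl((k-k')w_i, \, 2(k-k')w_i\bigr)$ with $w_i \in [c/2, c]$, the Manhattan distance between $s_k$ and $s_{k'}$ is $3|k-k'| w_i = O(|k-k'| \cdot c)$, and identically for $t_k, t_{k'}$. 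The only mild thing to check is that the axis-aligned path between $s_k$ and $s_{k'}$ (and between $t_k$ and $t_{k'}$) stays in the grid $V_I = [ac^{1/2}] \times [a/c^{1/2}]$; this is immediate because both endpoints lie in the rectangle $\mathcal R_i^S$ (respectively $\mathcal R_i^T$), which is contained in $\conv(V_I)$ by construction. I do not anticipate any real obstacle here---the only care needed is to double-check the typo in the index range for property 3 and to be careful about the factor-of-$c$ (not $c^2$) bound in property 4, which succeeds precisely because we route via the cheap unit edges rather than via the $W$-vectors whose second coordinate $w^2$ is quadratic in $c$.
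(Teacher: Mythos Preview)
Your proposal is correct and follows essentially the same route as the paper: Properties 1 and 2 are declared immediate from the definitions, Property 3 is the observation that a block of $\nu$ consecutive indices meets each residue class modulo $\nu$ exactly once, and Property 4 is obtained by routing along the unit edges $(v, v\pm(1,0))$, $(v, v\pm(0,1))$ and bounding the Manhattan distance by $3|k-k'|w_i = O(|k-k'|c)$. Your reading of ``$k\in[\nu]$'' as a typo for ``$k\in[\ell]$'' is the natural one, and your extra check that the axis-aligned path stays inside $V_I$ (since $\conv(V_I)$ is itself an axis-aligned rectangle containing both endpoints) is a minor refinement the paper omits.
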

\begin{proof}
    Properties 1 and 2 are immediate from the construction of of our collection of band factors. What remains is to prove Properties 3 and 4. 

    We begin with Property 3. 
    Note that $F_{i, j}^S$ contains only the nodes $s_k$, where $k$ is in an interval $[(j-1) \cdot \nu + 1, j \cdot \nu]$ of size $\nu$. Moreover, since $F_{i, j}^S$ only contains nodes $t_k$  where $k \equiv j \bmod \nu$. Every interval of $\nu$ consecutive integers contains exactly one integer equivalent to $j$ modulo $\nu$, so there is exactly one integer $k$ such that $s_k \in F_{i, j}^S$ and $t_k \in F_{i, j}^T$. 
    
    Property 4 follows from two observations:
    \begin{itemize}
        \item We can travel from any node $v \in F_{i, j}^S \cup F_{i, j}^T$ to nodes $v \pm (0, 1) \in V_I$ and nodes  $v \pm (1, 0) \in V_I$ using a single edge in $G_I$. 
        \item $\|s_k - s_{k'}\| = |k - k'| \cdot \|\Vec{x}_i\| \leq |k - k'| \cdot 3c$ and $\|t_k - t_{k'}\|= |k - k'| \cdot \|\Vec{x}_i\| \leq |k - k'| \cdot 3c$. 
    \end{itemize}
    Combining these observations, we conclude that for all $k, k' \in [\nu]$,  there exists an $s_k \leadsto s_{k'}$ path 
 and a  $t_k \leadsto t_{k'}$ path of  each of length at most $O(|k - k'| \cdot c)$ in $G_I$. 
\end{proof}

\paragraph{Trees $T_{i, j}^S, T_{i, j}^T$ of $G_I$.} As the final step in defining our sources, sinks, and critical pairs, we will construct a binary tree with subdivided edges for each band factor in $G_I$. Band factors $F_{i, j}^S$ and $F_{i, j}^T$ will have associated trees $T_{i, j}^S$ and $T_{i, j}^T$, respectively. The leaves of each tree associated with each band factor will be merged with the nodes of the band factor. We will first construct the trees $T_{i, j}^S, T_{i, j}^T$, and will later merge them with their respective band factors in a subsequent step. We define trees $T_{i, j}^S, T_{i, j}^T$ to be disjoint copies of the following tree:
\begin{itemize}
    \item Let $T$ be a binary tree with root $r$ such that
    \begin{itemize}
        \item $T$ has $\nu$ leaves and
        \item every leaf of $T$ has depth $d = \Theta(\log \nu)$.
    \end{itemize}
        \item For each edge $e \in E(T)$, if the parent node incident to $e$ has depth $i$, then replace $e$ with a path $P_e$ of length $\psi_i = 2^{d-i} \cdot a^{1/2}c$. 
        \item Let $\ell_1, \dots, \ell_{\nu}$ be the leaves of binary tree $T$ labeled from left to right. The key properties of tree $T$ are summarized in the following claim.
\end{itemize}
\begin{claim}
\label{claim:trees}
    Tree $T$ satisfies the following properties:
    \begin{enumerate}
        \item $T$ has $\nu$ leaves and $|T| = \Theta(  ac^{1/2}  \log a)$ nodes,
        \item every root to leaf path in $T$ is of length $\psi :=  \sum_{i=0}^d \psi_i = \Theta(ac^{1/2})$, and
        \item $\dist_T(\ell_k, \ell_{k'}) = \Omega( |k - k'| \cdot a^{1/2}c)$ for all $k, k' \in [\nu]$.
    \end{enumerate}
\end{claim}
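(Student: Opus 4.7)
The plan is to verify all three properties by direct combinatorial calculations on the tree structure, since no deep ideas are needed — the claim is essentially a bookkeeping lemma that records why the particular edge-length schedule $\psi_i = 2^{d-i}\cdot a^{1/2}c$ was chosen. I would first observe that $T$ has $\nu$ leaves by construction and, as a binary tree of depth $d=\Theta(\log\nu)$, has exactly $2^{i+1}$ edges whose parent endpoint lies at depth $i$ for each $i\in\{0,1,\dots,d-1\}$. Each such edge is subdivided into a path contributing $\psi_i-1 = \Theta(2^{d-i}a^{1/2}c)$ internal nodes. Summing yields
\[
|T| \;=\; \Theta\!\Bigl(\sum_{i=0}^{d-1} 2^{i+1}\cdot 2^{d-i}a^{1/2}c\Bigr) \;=\; \Theta\bigl(2^d\cdot d\cdot a^{1/2}c\bigr) \;=\; \Theta(\nu\,d\,a^{1/2}c),
\]
which, after substituting $\nu=\Theta(a^{1/2}/c^{1/4})$ and $d=\Theta(\log a)$, gives the desired bound in Property~1. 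Property~2 then follows immediately from the geometric sum $\psi=\sum_{i=0}^{d-1}\psi_i = a^{1/2}c\cdot(2^{d+1}-2)=\Theta(\nu\cdot a^{1/2}c)=\Theta(ac^{1/2})$.

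The only slightly substantive part is Property~3, which is where the doubling schedule pays off. The plan is to let $u$ denote the least common ancestor of $\ell_k$ and $\ell_{k'}$ in $T$, and to let $i$ be its depth. Since leaves are labeled left-to-right and the subtree rooted at $u$ contains a contiguous block of exactly $2^{d-i}$ leaves, both leaf indices lie inside this block, so $|k-k'|<2^{d-i}$, i.e.\ $2^{d-i}>|k-k'|$. The unique $\ell_k\leadsto\ell_{k'}$ path in $T$ must traverse at least one of the subdivided edges incident to $u$ (the one in the direction of $\ell_k$'s subtree); this subpath alone has length
\[
\psi_i \;=\; 2^{d-i}\cdot a^{1/2}c \;\geq\; |k-k'|\cdot a^{1/2}c,
\]
so $\dist_T(\ell_k,\ell_{k'}) \geq \psi_i = \Omega(|k-k'|\cdot a^{1/2}c)$, as claimed.

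There is no real obstacle here — the proof is just careful arithmetic. The one place to be mildly careful is ensuring that the doubling of $\psi_i$ as $i$ decreases is tight against the exponential growth of subtree sizes, so that a single edge at depth $i$ already witnesses the $\Omega(|k-k'|\cdot a^{1/2}c)$ lower bound rather than needing to sum along the whole $\ell_k$-to-$u$-to-$\ell_{k'}$ path. Once Property~3 is in hand, the lemma is complete, and this estimate is exactly the one that will later propagate (via the band-factor merging) into the source-sink distance lower bound used in the inner graph.
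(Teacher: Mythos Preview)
Your proof is correct and follows essentially the same approach as the paper's. The only notable difference is in Property~3: the paper bounds $\dist_T(\ell_k,\ell_{k'})$ by summing all edge lengths $\sum_{i=d'}^{d}\psi_i$ along the path from the LCA down to the leaves, whereas you observe that the single edge incident to the LCA already has length $\psi_i = 2^{d-i}a^{1/2}c > |k-k'|\cdot a^{1/2}c$, which suffices. This is a mild simplification of the same argument, not a different route.
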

\begin{proof} \leavevmode
\begin{enumerate}
    \item $T$ has $\nu$ leaves by construction. The number of nodes in the long edges of depth $i$ in $T$ is $2^i \cdot \psi_i = \Theta(\nu \cdot a^{1/2}c)$. Since the original binary tree had depth $\Theta(\log \nu)$, the claim follows.
    \item Every root to leaf path has length $\psi := \sum_{i=0}^d \psi_i = \Theta(a^{1/2}c \cdot \nu)$, as claimed.
    \item Note that the lowest common ancestor of leaves $\ell_k$ and $\ell_{k'}$ in the original binary tree had depth at most $d' =d - \lceil \log_2(|k - k'|)\rceil$. Then the distance in $T$ from $\ell_k$ to $\ell_{k'}$ is at least
    $$\dist_T(\ell_k, \ell_{k'}) \geq \sum_{i=d'}^d \psi_i = a^{1/2}c \cdot \sum_{i=d'}^d 2^{d-i} = a^{1/2}c \cdot \Omega(2^{ \log(|k - k'|)}) = \Omega(a^{1/2}c \cdot |k - k'|).$$
\end{enumerate}
\end{proof}
We define $T_{i, j}^S$ and $T_{i, j}^T$ to be copies of the tree $T$ constructed above for all $i \in [b]$ and $j \in [\nu]$. We let $r(i, j, S)$ denote the root of $T_{i, j}^S$, and we let $\ell_k(i, j, S)$ denote the $k$th leaf of $T_{i, j}^S$. We define identical notation for $T_{i, j}^T$. We will now merge our collection of trees $T_{i, j}$ with $G_I$, by fusing the leaves of the trees with their associated band factors. This will complete our construction of graph $G_I$. 
\begin{itemize}
    \item For all $i \in [b]$, $j \in [\nu]$, and $k$, merge leaf $\ell_k(i, j, S)$ in tree $T_{i, j}^S$ with the $k$th node 
 in band factor $F_{i, j}^S$. Likewise, merge  leaf $\ell_k(i, j, T)$ in tree $T_{i, j}^T$ with the $k$th node
 in band factor $F_{i, j}^T$.
 \item This completes our construction of graph $G_I$.
 We now verify two simple claims about  $G_I$. In particular, we will  verify that inserting our trees into $G_I$ did not decrease distances in $G_I$ between nodes in $V_I$. 
\end{itemize}
\begin{claim}
Distances in $G_I$ between nodes in $V_I$ do not decrease after inserting trees $\{T_{i, j}^S, T_{i, j}^T\}_{i \in [b], j \in [\nu]}$ into $G_I$.
\label{claim:tree_decrease}
\end{claim}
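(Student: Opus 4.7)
The plan is to prove this by a rerouting argument: any walk in the augmented $G_I$ between two nodes of $V_I$ can be shortened (or at least not lengthened) into a walk that avoids all tree-internal vertices, which then lives entirely in the pre-tree graph. To set this up, I would first record the structural observation that each tree $T_{i,j}^{\star}$ (for $\star \in \{S, T\}$) is attached to the rest of $G_I$ only via leaf identifications: every non-leaf vertex of $T_{i,j}^{\star}$ lies on a subdivided tree edge whose two neighbors are also in the same tree. As a consequence, any walk $P$ from $u \in V_I$ to $v \in V_I$ in the augmented graph decomposes into an alternating sequence of (i)~subwalks using only edges of the pre-tree $G_I$ and (ii)~maximal subwalks using only tree edges; each subwalk of type (ii) begins and ends at leaves of a single common tree.

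Fix such a maximal tree subwalk $Q$, say in $T_{i,j}^{S}$, with endpoint leaves $\ell_k(i, j, S)$ and $\ell_{k'}(i, j, S)$. By Property~3 of \cref{claim:trees},
\[
|Q| \;\ge\; \dist_{T}\bigl(\ell_k, \ell_{k'}\bigr) \;=\; \Omega\bigl(|k - k'| \cdot a^{1/2} c\bigr).
\]
Under the identification of leaves with band-factor nodes, $\ell_k$ and $\ell_{k'}$ correspond to $s_{(j-1)\nu + k}$ and $s_{(j-1)\nu + k'}$ in $F_{i,j}^S$, whose indices in the band $B_i^S$ differ by exactly $|k-k'|$. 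By Property~4 of \cref{claim:factors}, these two band-factor nodes are joined in the pre-tree $G_I$ by a walk of length $O(|k-k'| \cdot c)$. Since $a$ grows with the input and $c$ is held sufficiently large but fixed relative to $a$, we have $|k-k'| \cdot c \ll |k-k'| \cdot a^{1/2}c$, so this replacement walk is at most as long as $|Q|$.

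The proof then concludes by replacing every maximal tree subwalk of a shortest augmented-graph $u \leadsto v$ walk $P$ with the corresponding (shorter) pre-tree walk between the two identified band-factor nodes. The output is a $u \leadsto v$ walk in the pre-tree $G_I$ of length at most $|P|$, giving $\dist_{G_I^{\mathrm{old}}}(u,v) \le \dist_{G_I^{\mathrm{new}}}(u,v)$, which is exactly the claim.

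The only delicate point is the structural observation that a walk can leave a tree only via a leaf; this is immediate from the construction, but must be stated carefully because the endpoints $u$ and $v$ themselves might be leaves, in which case the walk may start or end by entering tree-internal edges. This is not a real obstacle, since such boundary subwalks still terminate either at $u$/$v$ (which are leaves) or upon returning to another leaf, and the same bypass argument applies. Thus the main technical content is simply the comparison $\dist_T(\ell_k, \ell_{k'}) \gg \dist_{G_I^{\mathrm{old}}}(s_{(j-1)\nu + k}, s_{(j-1)\nu + k'})$, which follows directly from \cref{claim:trees}(3) and \cref{claim:factors}(4).
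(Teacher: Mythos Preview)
Your argument is essentially the same as the paper's: both reduce the claim to showing that, for any two leaves $\ell_k,\ell_{k'}$ of a single attached tree, the tree distance $\Omega(|k-k'|\cdot a^{1/2}c)$ dominates the pre-tree $G_I$ distance between the corresponding band-factor nodes, invoking \cref{claim:trees}(3) and \cref{claim:factors}(4). Your rerouting framing makes the logical structure more explicit than the paper's somewhat terse presentation.

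One small asymmetry you glossed over: the $S$ and $T$ cases are not quite symmetric. For $T_{i,j}^S$ the $k$th and $k'$th leaves map to band nodes whose band indices differ by $|k-k'|$, giving pre-tree distance $O(|k-k'|\cdot c)$ as you wrote. But for $T_{i,j}^T$, the factor $F_{i,j}^T=\{t_m : m\equiv j \bmod \nu\}$ consists of nodes spaced $\nu$ apart in the band, so the $k$th and $k'$th leaves map to band nodes whose indices differ by $|k-k'|\cdot\nu$, giving pre-tree distance $O(|k-k'|\cdot\nu\cdot c)=O(|k-k'|\cdot a^{1/2}c^{3/4})$. The comparison with $\Omega(|k-k'|\cdot a^{1/2}c)$ still goes through (it needs $c^{3/4}\ll c$), but the paper handles this case separately for exactly this reason; your ``say in $T_{i,j}^S$'' suggests a symmetry that is not literally there.
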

\begin{proof}
    Let $v, v'$ be the $k$th and $k'$th nodes in $F^S_{i, j}$. 
    By \cref{claim:factors} and the definition of $F^S_{i, j}$, the distance in $G_I$ from $v$ to $v'$ is at most $O(|k - k'| \cdot c)$. On the other hand, the distance in $T_{i, j}^S$ from $\ell_k(i, j, S)$ to $\ell_{k'}(i, j, S)$ is at least $\Omega(a^{1/2}c \cdot |k - k'|)$. This implies the inequality
    $$
    \dist_{T_{i, j}^S}( \ell_k(i, j, S), \ell_{k'}(i, j, S)) \geq \dist_{G_I}(v, v').
    $$
    Then inserting tree $T_{i, j}^S$ into $G_I$ will not decrease distances in $G_I$ between nodes in $V_I$. 
    
    Likewise, let $v, v'$ be the $k$th and $k'$th nodes in $F_{i, j}^T$. By \cref{claim:factors} and the definition of $F_{i, j}^T$, the distance in $G$ from $v$ to $v'$ is at most $O(|k - k'| \cdot c \cdot \nu) = O(|k-k'| \cdot a^{1/2} c^{3/4})$. 
    On the other hand, the distance in $T_{i, j}^S$ from $\ell_k(i, j, S)$ to $\ell_{k'}(i, j, S)$ is at least $\Omega(a^{1/2}c \cdot |k - k'|)$. Then by taking $a$ and $c$ to be sufficiently large, we can guarantee that  
    $$
    \dist_{T_{i, j}^S}( \ell_k(i, j, S), \ell_{k'}(i, j, S)) \geq \dist_{G_I}(v, v').
    $$
    We conclude that distances in $G_I$ between nodes in $V_I$ do not decrease after inserting  trees $\{T_{i, j}^S, T_{i, j}^T\}_{i \in [b], j \in [\nu]}$ into $G_I$.
\end{proof}

\begin{claim}
\label{claim:node_size}
    $|V(G_I)| = \Theta(a^2)$
\end{claim}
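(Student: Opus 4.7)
The plan is to decompose $V(G_I)$ into the original grid nodes $V_I = [ac^{1/2}]\times[a/c^{1/2}]$ and the interior nodes of the trees $T_{i,j}^S, T_{i,j}^T$ inserted in the final step of the construction, and then count each contribution separately. Note that the leaves of each tree are merged with band-factor nodes already in $V_I$, so only the non-leaf tree nodes add new vertices to $V(G_I)$.

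First I would compute $|V_I| = ac^{1/2} \cdot a/c^{1/2} = a^2$, which immediately yields $|V(G_I)| \geq a^2$ and supplies the $\Omega(a^2)$ side of the bound.

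Next I would count the tree contribution. By \cref{claim:bands} there are $b = \Theta(c^3)$ pairs of bands, and by \cref{claim:factors} each band is partitioned into $\nu = \Theta(a^{1/2}c^{-1/4})$ band factors, so the total number of trees inserted (including both the source-side $T_{i,j}^S$ and sink-side $T_{i,j}^T$ trees) is $2b\nu = \Theta(a^{1/2}c^{11/4})$. By \cref{claim:trees}, each individual tree has $\Theta(ac^{1/2}\log a)$ nodes. Multiplying, the total number of tree nodes is $\Theta(a^{1/2}c^{11/4}) \cdot \Theta(ac^{1/2}\log a) = \Theta(a^{3/2}c^{13/4}\log a)$. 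Since $c$ is taken to be a sufficiently large integer constant while $a$ grows with the input size, this quantity is $o(a^2)$, so the tree contribution is swamped by the grid contribution.

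Combining the two parts gives $|V(G_I)| = a^2 + o(a^2) = \Theta(a^2)$, as claimed. There is no real obstacle: the argument is a routine counting exercise that just composes the size bounds already recorded in \cref{claim:bands}, \cref{claim:factors}, and \cref{claim:trees}, plus the observation that merging leaves with $V_I$ causes no double-counting issue because the leaves constitute a lower-order number of nodes per tree compared to the subdivided internal paths.
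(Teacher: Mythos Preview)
Your proposal is correct and follows essentially the same approach as the paper's own proof: compute $|V_I|=a^2$ for the lower bound, then count the total number of tree nodes as $\Theta(b\nu)\cdot\Theta(ac^{1/2}\log a)=\Theta(a^{3/2}c^{13/4}\log a)=o(a^2)$ to establish the upper bound. Your extra remarks about the factor of $2$ and the leaf-merging not causing double-counting are fine but inessential.
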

\begin{proof}
    Note that $|V_I| = a^2$ by construction. What remains is to bound the contribution of the nodes in our collection of trees $\{T_{i, j}^S, T_{i, j}^T\}$ in $G_I$. There are $\Theta(b\nu) = \Theta\left(c^3 \cdot \frac{a^{1/2}}{c^{1/4}} \right)$ trees in our collection of trees. Moreover, each tree has $\Theta(ac^{1/2} \log a)$ nodes by \cref{claim:trees}. Then our trees contribute $\Theta(a^{3/2}c^{13/4} \cdot \log a) = o(a^2)$ nodes to $G_I$.  
\end{proof}

\paragraph{Sources $S_I$, Sinks $T_I$, and Critical Pairs $P_I$.} We define our sources $S_I$ and sinks $T_I$ as follows:
\begin{itemize}
    \item let $S_I = \{r(i, j, S) \mid i \in [b], j \in [\nu] \}$,
    \item let $S_I^i = \{r(i, j, S) \mid j \in [\nu]\}$ for  all $i \in [b]$,
    \item let $T_I = \{r(i, j, T) \mid i \in [b], j \in [\nu] \}$,
    \item let $T_I^i = \{r(i, j, T) \mid j \in [\nu]\}$ for  all $i \in [b]$, and 
    \item let $P_I = \cup_{i \in [b]} S_I^i \times T_I^i$.
\end{itemize}
The following claim is immediate.
\begin{claim}
    \label{claim:source_sizes}
    Our sources, sinks, and critical pairs have the following sizes:
    \begin{itemize}
            \item $|S_I| = \Theta\left(a^{1/2}c^{11/4} \right)$, 
        \item $|S_I^i| = \Theta(a^{1/2}c^{-1/4})$ for all $i \in [b]$,
       \item $|T_I| = \Theta\left(a^{1/2}c^{11/4} \right)$, 
    \item $|T_I^i| = \Theta(a^{1/2}c^{-1/4})$ for all $i \in [b]$, and
    \item $|P_I| = \Theta(ac^{5/2})$. 
    \end{itemize}
\end{claim}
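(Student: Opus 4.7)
The plan is to derive each of the five size bounds directly from the definitions of $S_I$, $S_I^i$, $T_I$, $T_I^i$, $P_I$ by plugging in the sizes of the index sets $[b]$ and $[\nu]$ that were established earlier. All the work has essentially been done; this claim is a routine counting consequence of \cref{claim:bands} (which gives $b = \Theta(c^3)$) and \cref{claim:factors} (which gives $\nu = \Theta(a^{1/2}/c^{1/4})$). So the proof will be short, with the only subtle point being to check that the various $r(i,j,S)$ and $r(i,j,T)$ are in fact distinct, so that the cardinalities match the cardinalities of the index sets.

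First I would handle $|S_I^i|$ and $|T_I^i|$. By definition $S_I^i = \{r(i,j,S) \mid j \in [\nu]\}$. Since the trees $T_{i,j}^S$ were constructed as \emph{disjoint} copies of the tree $T$, their roots $r(i,j,S)$ are pairwise distinct, so $|S_I^i| = \nu = \Theta(a^{1/2}/c^{1/4}) = \Theta(a^{1/2} c^{-1/4})$. The analogous argument applies to $|T_I^i|$.

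Next I would handle $|S_I|$ and $|T_I|$. By definition $S_I = \bigsqcup_{i \in [b]} S_I^i$ where the union is disjoint (again because the trees are disjoint copies, so their roots are all distinct across different $(i,j)$ pairs). Hence
\[
|S_I| = b \cdot \nu = \Theta(c^3) \cdot \Theta(a^{1/2} c^{-1/4}) = \Theta(a^{1/2} c^{11/4}),
\]
and symmetrically $|T_I| = \Theta(a^{1/2} c^{11/4})$. Finally, for $|P_I|$, since $P_I = \bigcup_{i \in [b]} S_I^i \times T_I^i$ and these products are pairwise disjoint (as $S_I^i$'s and $T_I^i$'s are disjoint across $i$), we get
\[
|P_I| = \sum_{i \in [b]} |S_I^i| \cdot |T_I^i| = b \cdot \nu^2 = \Theta(c^3) \cdot \Theta(a/c^{1/2}) = \Theta(a c^{5/2}).
\]

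The main (small) obstacle is simply verifying the disjointness of the roots, which follows immediately from the fact that the trees $T_{i,j}^S$ and $T_{i,j}^T$ were introduced as fresh disjoint copies of $T$ in the construction. Everything else is straightforward arithmetic on $b$ and $\nu$.
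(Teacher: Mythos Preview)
Your proposal is correct and matches the paper's approach: the paper simply declares the claim ``immediate'' and provides no written proof, so your explicit unpacking of $|S_I^i|=\nu$, $|S_I|=b\nu$, and $|P_I|=b\nu^2$ using the values $b=\Theta(c^3)$ and $\nu=\Theta(a^{1/2}c^{-1/4})$ from \cref{claim:bands} and \cref{claim:factors} is exactly the intended (trivial) argument, with your disjointness check being a reasonable extra piece of care.
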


\paragraph{Critical Paths for $P_I$.}
For each index $i \in [b]$, and each node $s \in S_I^i$ and each $t \in T_I^i$, we will define a critical path $\pi_{s, t}$ in $G_I$ as follows.
\begin{itemize}
    \item Let $F_{i, j}^S$ and $T_{i, j}^S$ be the band factor and tree associated with node $s \in S_I^i$ (note in particular, $s$ is the root $r(i, j, S)$ of $T_{i, j}^S$). Likewise, let  $F_{i, j'}^T$ and $T_{i, j'}^T$ be the band factor and tree associated with node $t \in T_I^i$.
    \item By Property 3 of \cref{claim:factors}, there is a unique node $s_k \in F_{i, j}^S$ and $t_k \in F_{i, j'}^T$ such that $s_k$ is the $k$th node of $B_i^S$ and $t_k$ is the $k$th node of $B_i^T$. 
    \item By the construction of $B_i^S$ and $B_i^T$,  it follows that $t_k = s_k + f \cdot \vec{w}_i$. 
    \item Note that since $s_k \in F_{i, j}^S$ and $t_k \in F_{i, j'}^T$, node $s_k$ was merged with a leaf in $T_{i, j}^S$ and node $t_k$ was merged with a leaf in $T_{i, j}^T$. 
    \item We define our critical path $\pi_{s, t}$ as 
    $$
    \pi_{s, t} = T_{i, j}^S[s, s_k] \circ (s_k + \vec{w}_i, s_k + 2\vec{w}_i, \cdots, s_k + (f-1) \cdot  \vec{w}_i, s_k + f \cdot \vec{w}_i) \circ T_{i, j'}^T[t_k, t].
    $$
    \item Observe that for all $(s, t) \in S_I^i \times T_I^i \subseteq P_I$, we have that $|\pi_{s, t}| =(f-1) +  2\sum_{i=0}^d \psi_i = \Theta(f + ac^{1/2}) = \Theta(ac^{1/2})$, by \cref{claim:trees}. In particular, this implies that $|\pi_{s, t}| = |\pi_{s', t'}|$ for all $(s, t), (s', t') \in S_I^i \times T_I^i$. 
    \item Let $\Pi_I = \{\pi_{s, t} \mid (s, t) \in P_I\}$ denote the critical paths associated with critical pairs $P_I$.
\end{itemize}
As a final step in our construction of $G_I$, we will remove all edges in $G_I$ that do not lie on a critical path $\pi_{s, t} \in \Pi_I$.
We now finish our analysis by proving a few claims about  critical paths  $\pi_{s, t} \in \Pi_I$.

\begin{claim}
\label{claim:innergraph_usp}
Path $\pi_{s, t}$ is a unique shortest path for all $(s, t) \in P_I$. 
\end{claim}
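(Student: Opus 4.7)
The plan is to show that every $s$-to-$t$ path $\pi$ in $G_I$ has length at least $|\pi_{s,t}|$, with equality only when $\pi = \pi_{s,t}$.

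First, I would decompose $\pi$ based on how it enters and exits the source and sink trees. Since $s = r(i,j,S)$ and $t = r(i,j',T)$ are internal nodes of $T_{i,j}^S$ and $T_{i,j'}^T$ respectively, and the only edges connecting a tree to the rest of $G_I$ are incident to its leaves (identified with nodes of $F_{i,j}^S$ and $F_{i,j'}^T$ in $V_I$), the path $\pi$ must exit $T_{i,j}^S$ through some leaf $s_{k_1} \in F_{i,j}^S$ and enter $T_{i,j'}^T$ through some leaf $t_{k_2} \in F_{i,j'}^T$. Write $\pi = \sigma_S \cdot \pi_M \cdot \sigma_T$ accordingly. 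By \cref{claim:trees}, every root-to-leaf walk in a tree has length exactly $\psi$, so $|\sigma_S|, |\sigma_T| \geq \psi$ with equality iff they are the simple root-to-leaf walks. By \cref{claim:tree_decrease}, any detour of $\pi_M$ through another tree cannot improve on staying in the $V_I$-subgraph, so we may restrict attention to $\pi_M$ using only $W$-edges. By the construction of the bands, $\pi_M$ must realize the displacement $\vec{\Delta} = t_{k_2} - s_{k_1} = (k_2 - k_1) \vec{x}_i + f \vec{w}_i$.

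The main step (and the main technical obstacle) is to establish the convexity lower bound $|\pi_M| \geq f$, with equality iff $k_1 = k_2$ and $\pi_M$ is the straight-line sequence of $f$ forward $\vec{w}_i$-edges. Expressing $\pi_M$ as signed integer multiplicities $(c_k)$ of vectors in $W$ with $\sum_k c_k \vec{w}_k = \vec{\Delta}$, we have $|\pi_M| \geq \sum_k |c_k|$. The key tool is the linear functional $\vec{d} = (2 w_i, -1)$: then $\langle \vec{w}_k, \vec{d} \rangle = w_i^2 - (w_k - w_i)^2 \in [0, w_i^2]$, with maximum $w_i^2$ attained uniquely at $k = i$, and a direct computation gives $\langle \vec{\Delta}, \vec{d} \rangle = f w_i^2 + 2(k_2 - k_1) w_i (w_i - 1)$. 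For $k_2 \geq k_1$ (and $c \geq 4$ so that $w_i \geq 2$), this is at least $f w_i^2$, so the upper bound $\sum_k c_k \langle \vec{w}_k, \vec{d} \rangle \leq (\sum_{c_k > 0} c_k) w_i^2$ forces $\sum_k |c_k| \geq f$; equality requires $k_1 = k_2$, no reverse edges, and $c_i = f$ with all other $c_k = 0$. The case $k_2 < k_1$ is the trickier one since $\langle \vec{\Delta}, \vec{d} \rangle$ may become small or negative, and handling it is the main obstacle: my approach would be to combine $\vec{d}$ with the first-moment constraint $\sum_k c_k w_k = (k_2 - k_1 + f) w_i$ and exploit the strict convexity of the parabolic arc $\{(w, w^2) : w \in [c/2, c]\}$ via an LP-duality argument to rule out shorter signed decompositions.

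Finally, combining $|\sigma_S|, |\sigma_T| \geq \psi$ with $|\pi_M| \geq f$ yields $|\pi| \geq 2\psi + f = |\pi_{s,t}|$, with equality forcing $\sigma_S$ and $\sigma_T$ to be the unique root-to-leaf walks, $k_1 = k_2$, and $\pi_M$ to be the straight-line sequence of $f$ $\vec{w}_i$-edges. By Property 3 of \cref{claim:factors}, the condition $k_1 = k_2$ together with $s_{k_1} \in F_{i,j}^S$ and $t_{k_2} \in F_{i,j'}^T$ uniquely determines $k_1 = k_2 = k^*$, the index used in the construction of $\pi_{s,t}$, so $\pi = \pi_{s,t}$.
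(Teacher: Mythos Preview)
Your overall strategy---decompose $\pi$ into a tree prefix, a middle portion in $V_I$, and a tree suffix; bound each tree portion by $\psi$ via \cref{claim:trees}; and analyze the middle via the extreme-point property of $\vec{w}_i$ using the dual functional $\vec{d}=(2w_i,-1)$---matches the paper's. Your treatment of the case $k_2\ge k_1$ is correct and in fact more explicit than the paper's sketch.

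The case $k_2<k_1$ is a genuine gap. Your LP-duality suggestion is not a proof and does not obviously go through: since $\langle \vec{x}_i,\vec{d}\rangle=2w_i(w_i-1)>0$ while $|k_1-k_2|$ can be as large as $|B_i^S|=\Theta(a/c^{1/2})$, which dominates $f=\Theta(a/c^{3/2})$, the projection $\langle\vec{\Delta},\vec{d}\rangle$ can be arbitrarily negative, and no single linear functional yields $|\pi_M|\ge f$ uniformly over all endpoint pairs. The paper does not attempt this direct bound. Instead it reverses the logic: it argues (by convexity, appealing to the argument of \cref{claim:outer_graph_unique_path}) that any $\pi_2$ with $|\pi_2|\le|\pi_{s,t}[V_I]|$ must use only $\pm\vec{w}_i$-edges. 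Once that is established, the displacement $t_{k_2}-s_{k_1}=(k_2-k_1)\vec{x}_i+f\vec{w}_i$ must be an integer multiple of $\vec{w}_i$, and linear independence of $\vec{x}_i$ and $\vec{w}_i$ forces $k_1=k_2$; Property~3 of \cref{claim:factors} then pins down the unique such index and hence the path. You should restructure along these lines---first restrict the edge type, then deduce the endpoints---rather than trying to bound $|\pi_M|$ directly for arbitrary $(k_1,k_2)$.
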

\begin{proof}
Let $s \in S_I^i$ and let $t \in T_I^i$. Let $F_{i, j}^S$ and $T_{i, j}^S$ be the band factor and tree associated with node $s \in S_I^i$, and let $F_{i, j'}^T$ and $T_{i, j'}^T$ be the band factor and tree associated with node $t \in T_I^i$. Let $\Vec{w}_k  = (w_k, w_k^2) \in W$ be the vector associated with the construction of path $\pi_{s, t}$. We observe that by an identical argument to \cref{lem:strongly-convex}, vector $\Vec{w}_k \in W$ is the unique vector in $W \cup -W$ that maximizes the inner product of $\langle \Vec{w}_k, \Vec{x}_k \rangle$.

Let $\pi$ be an $s \leadsto t$ path in $G_I$. Let $\pi = \pi_1 \circ \pi_2 \circ \pi_3$, where $\pi_1 \subseteq T_{i, j}^S$ and $\pi_3 \subseteq T_{i, j}^T$. Then  $|\pi_1| = |\pi_3| = \sum_{i=0}^d \psi_i = \Theta(ac^{1/2})$ without loss of generality.
What remains is to argue that if $\pi \neq \pi_{s, t}$, then $|\pi_2| > \pi_{s, t}[V_I]$, where $\pi_{s, t}[V_I]$ denotes path $\pi_{s, t}$ induced on $V_I$.

Note that $|\pi_{s, t}[V_I]| = f-1$ by the earlier discussion in the construction of critical paths for $P_I$.  Let $x, y$ be the endpoints of $\pi_2$.  Then by applying an  argument similar to \cref{claim:outer_graph_unique_path}, if $\pi_2$ contains an edge $e = (u, v)$ where $v - u \neq \Vec{w}_k$, then $|\pi_2| > |\pi_{s, t}[V_I]|$. 
Then $\pi_2$ contains only edges $(u, v)$ where $v - u = \Vec{w}_k$. (Note that $\pi_{s, t}[V_I]$ also contains only edges $(u, v)$ where $v - u = \Vec{w}_k$.) Let $x'$ and $y'$ be the endpoints of path $\pi_{s, t}[V_I]$. If $x = x'$ and $y = y'$, then $\pi_{s, t}[V_I] = \pi_2$. Then $x \neq x'$ or $y \neq y'$.  However, recall that by Property 3 of \cref{claim:factors}, there is exactly one index $k \in [\nu]$ such that $s_k \in F_{i, j}^S$ and $t_k \in F_{i, j'}^T$. Then this implies that $x = x'$ and $y = y'$, a contradiction. We conclude that if $|\pi_2| = |\pi_{s, t}[V_I]|$, then $\pi_2 = \pi_{s, t}[V_I]$, so $\pi$ is a unique shortest $s \leadsto t$ path in $G_I$.  
\end{proof}

\begin{claim}
    Every critical path $\pi_{s, t}$, where $(s, t) \in P_I$, contains $\Theta(a/c^{3/2})$ edges that do not lie on any other path $\pi_{s', t'}$, where $(s', t') \in P_I$. 
    \label{claim:disjoint_paths}
\end{claim}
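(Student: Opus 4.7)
The plan is to prove the stronger statement that the \emph{middle (straight-line) portions} of all distinct critical paths in $\Pi_I$ are pairwise edge-disjoint. Since the middle portion of every $\pi_{s,t}$ -- i.e., the subpath consisting of edges of the form $(s_k + p\vec{w}_i, s_k + (p+1)\vec{w}_i)$ for $p \in [0, f-1]$ -- contains $\Theta(f) = \Theta(a/c^{3/2})$ edges, the claim follows immediately.

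The argument will split on the rectangle index $i$. If $\pi_{s', t'}$ has rectangle index $i' \neq i$, then its middle edges have displacement vector $\vec{w}_{i'} \neq \vec{w}_i$. Since $W$ consists entirely of vectors with strictly positive first coordinate, we also have $\vec{w}_{i'} \neq -\vec{w}_i$; hence no pair of middle edges (viewed as undirected edges) can coincide. This case is essentially immediate.

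The main obstacle is the same-rectangle case $i' = i$. Here I will show that the middle portions are in fact \emph{vertex-disjoint}. A shared vertex would produce $s_k + p\vec{w}_i = s_{k'} + p'\vec{w}_i$ for some $p, p' \in [0, f]$, equivalently $s_k - s_{k'} = (p' - p)\vec{w}_i$. Both $s_k, s_{k'}$ lie in the rectangle $R_i^S$, so I need a slight strengthening of \cref{claim:rectangle}: no nonzero integer multiple of $\vec{w}_i$ can connect two points of $R_i^S$. The cleanest route is to project onto $\hat{\vec{y}}_i$: by construction $R_i^S$ has extent only $c/100$ in direction $\hat{\vec{y}}_i$, whereas a direct computation gives $|\langle \vec{w}_i, \hat{\vec{y}}_i\rangle| = \Theta(w_i) \geq c/2$, so any nonzero multiple of $\vec{w}_i$ has $\hat{\vec{y}}_i$-component exceeding the $\hat{\vec{y}}_i$-width of $R_i^S$. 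This forces $p = p'$ and consequently $s_k = s_{k'}$.

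To close the argument, I invoke Property 3 of \cref{claim:factors}: the integer $k$ determined by $s_k$ is in bijection with pairs $(j, j')$ such that $s_k \in F_{i,j}^S$ and $t_k \in F_{i,j'}^T$. Since the roots $s = r(i,j,S)$ and $t = r(i,j',T)$ are themselves determined by $(j, j')$, equality $s_k = s_{k'}$ forces $(s, t) = (s', t')$, contradicting distinctness. Combining the two cases shows the middle edges across all critical paths are pairwise edge-disjoint, so the $\Theta(a/c^{3/2})$ middle edges of $\pi_{s,t}$ are all unique to it.
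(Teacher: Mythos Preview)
Your proof is correct and follows essentially the same approach as the paper: both arguments show that the ``middle'' edges (those in $G_I[V_I]$) of distinct critical paths are pairwise disjoint by observing that an edge's direction vector pins down the rectangle index and that its backward extension hits $R_i^S$ in a unique point (which then determines the path). Your explicit projection onto $\hat{\vec{y}}_i$ is a welcome elaboration of what the paper cites tersely as \cref{claim:rectangle}, and your final step should additionally note that $s_k = s_{k'}$ forces the two paths into the same \emph{band} (Property~4 of \cref{claim:bands}) before invoking the factor structure, but this is a minor point.
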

\begin{proof}
Every edge $e = (u, v)$ in $G_I[V_I]$ belongs to at least one path $\pi_{s, t}$ in $\Pi_I$ by the final step of our construction. By construction, $\pi_{s, t}$ contains a node $x$ in rectangle $R_i^S$ for some $i \in [|W|]$ such that $x = u - i \cdot (u - v)$ for some positive integer $i$. 
Then edge $e$ belongs to at most one path in $\pi_{s, t} \in \Pi_I$ because any additional path in $\Pi_I$ containing $e$ must also contain a node $x'$ in  $R_i^S$ such that $x' = u - i' \cdot (u - v)$ for some positive integer $i'$. However, this is not possible by our choice of $R_i^S$ due to \cref{claim:rectangle}.

We have shown that every edge $e \in G_I[V_I]$ belongs to exactly one path $\pi_{s, t}$, where $(s, t) \in P_I$. Now note that each path has $f-1 = \Theta(a/c^{3/2})$ paths in $G_I[V_I]$, so the claim follows.
\end{proof}

We are finally ready to wrap up our proof of \cref{lem:sourcewise_dp}.

\begin{proof}[Proof of \cref{lem:sourcewise_dp}]
    \leavevmode

    \begin{enumerate}
        \item The size bounds of $|V_I|, |S_I|, |T_I|,$ and $|P_I|$ follow from \cref{claim:node_size} and \cref{claim:source_sizes}. To bound the size of $E_I$, note that by \cref{claim:disjoint_paths}, each path $\pi_{s, t}$, where $(s, t) \in P_I$ contains $\Theta(a/c^{3/2})$ edges that do not lie on any other path $\pi_{s', t'}$, where $(s', t') \in P_I$. Then $|E_I| = \Theta(|P_I| \cdot \frac{a}{c^{3/2}}) = \Theta(a^2c)$, as claimed.
        \item This property follows from \cref{claim:disjoint_paths}.
        \item Note that for all $s \in S_I$ and $t \in T_I$, the distance between $s$ and $t$ in $G_I$ is at least $\Omega(ac^{1/2})$ by \cref{claim:trees}. Moreover, the diameter of $G_I[V_I] = \Theta(a)$ by the argument in  \cref{claim:factors}. Then $\dist_{G_I}(s, t) = \Theta(ac^{1/2})$. 
        \item This property follows from \cref{claim:source_sizes}, the construction of critical pairs $P_I$ and critical paths $\Pi_I$, and the argument in \cref{claim:innergraph_usp}. 
    \end{enumerate}
\end{proof}

\end{document}